\documentclass[11pt,a4paper]{article}

\usepackage{hyperref}

\usepackage[fleqn]{amsmath}
\usepackage{amsfonts,amsthm,amssymb}

\usepackage{graphicx}

\usepackage{mathrsfs}
\usepackage{makeidx}
\usepackage{xypic}
\usepackage[nottoc]{tocbibind}
\usepackage{pstricks}
\usepackage{bbm}
\usepackage[arrow, matrix, curve]{xy}
\usepackage{epic}
\usepackage{eepic}
\usepackage{epsfig}

\hfuzz 2pt  

\flushbottom

\topmargin -.5cm
\textheight 23cm
\textwidth 156mm
\oddsidemargin 0.4cm
\evensidemargin 0.2cm

\numberwithin{equation}{section}

\usepackage{enumitem}

\setcounter{MaxMatrixCols}{10}

\xymatrixcolsep{0.5cm}
\xymatrixrowsep{0.5cm}

\newtheorem{theorem}{Theorem}[section]
\newtheorem{proposition}{Proposition}[section]
\newtheorem{lemma}{Lemma}[section]

\newtheorem{corollary}{Corollary}[section]

{\theoremstyle{remark}
\newtheorem{rem}{Remark}[section]}
{\theoremstyle{remark}
}
%


\newcommand{\tr}{\operatorname{tr}}

\newcommand{\End}{\operatorname{End}}

\newcommand{\bra}[1]{\langle\,#1\,|}
\newcommand{\ket}[1]{|\,#1\,\rangle}

\newcommand{\moy}[1]{\langle\,#1\,\rangle}


\def\la{\lambda}

\newcommand{\mathsc}[1]{{\normalfont\textsc{#1}}}

%

\begin{document}

\begin{flushright}
LPENSL-TH-11/14
\end{flushright}

\bigskip \vspace{15pt}

\begin{center}

\textbf{\Large Antiperiodic XXZ chains with arbitrary spins: Complete eigenstate construction by functional equations in separation of variables}


\vspace{45pt}

\begin{large}
{\bf G. Niccoli}\footnote{ENS Lyon; CNRS; Laboratoire de Physique, UMR 5672, Lyon France; giuliano.niccoli@ens-lyon.fr}
{\bf and V.~Terras}\footnote{Univ. Paris Sud; CNRS; LPTMS, UMR 8626, Orsay 91405 France; veronique.terras@lptms.u-psud.fr}
\end{large}


\vspace{45pt}

\today

\end{center}

\vspace{45pt}

\begin{abstract}
Generic inhomogeneous integrable XXZ chains with arbitrary spins are studied by means of the quantum separation of variables (SOV) method.
Within this framework, a complete description of the spectrum (eigenvalues and eigenstates) of the antiperiodic transfer matrix is derived in terms of discrete systems of equations involving  the inhomogeneity parameters of the model.
We show here that one can reformulate this discrete SOV characterization of the spectrum in terms of functional  $T$-$Q$ equations of Baxter's type, hence proving the completeness of the solutions to the associated systems of Bethe-type equations.
More precisely, we consider here two such reformulations.
The first one is given in terms of $Q$-solutions, in the form of trigonometric polynomials of a given degree $\mathsf{N}_s$, of a one-parameter family of $T$-$Q$ functional equations with an extra inhomogeneous term.
The second one is given in terms of $Q$-solutions, again in the form of trigonometric polynomials of degree $\mathsf{N}_s$ but with double period, of Baxter's usual (i.e. without extra term)  $T$-$Q$ functional equation.
In both cases, we prove the precise equivalence of the discrete SOV characterization of the transfer matrix spectrum with the characterization following from the consideration of the particular class of $Q$-solutions of the functional $T$-$Q$ equation: to each transfer matrix eigenvalue corresponds exactly one such $Q$-solution and {\em vice versa}, and this $Q$-solution can be used to construct the corresponding eigenstate.
\end{abstract}

\newpage

\section{Introduction}

In this paper we consider quantum one-dimensional integrable models defined from height weight representations of the trigonometric 6-vertex Yang-Baxter algebra with arbitrary positive integer dimension in any site of the lattice.
This class of models contains in particular the integrable XXZ chains with arbitrary spins when a special (homogeneous) limit is taken on the parameters of these representations.

In the case of periodic boundary conditions (with possibly an additional diagonal twist), the eigenvalues and eigenvectors  of the one-parameter family of transfer matrices of the model have been obtained in the framework of the algebraic Bethe ansatz (ABA) method \cite{SklF78,SklTF79,FadT79,KirR87}.
In this framework it has also been possible to derive explicit representations for interesting physical quantities such as the correlation functions \cite{KitMT00,Kit01,KitMST02a,GohKS04,KitMST05a,KitMST05b,CasM07,DegM11,Deg12}. In the spin-1/2 case, this notably led, on the one hand, to very accurate numerical results concerning the dynamical structure factors \cite{CauM05,CauHM05,PerSCHMWA06,PerSCHMWA07}, and, on the other hand, to analytical results concerning the large-distance asymptotic behavior of the two-point \cite{KitKMST09a,KitKMST09b,KozT11,KitKMST11b,KitKMST12,DugGK13} and even the multi-point \cite{KitKMT14} correlation functions.
At the root of these successes were the ABA characterization of the eigenstates, the explicit computation of their scalar product \cite{Sla89} and the solution of the so-called quantum inverse problem  \cite{KitMT99,MaiT00}, which were used to get simple determinant representations \cite{KitMT99} for the form factors (i.e. the matrix elements of local operators in the basis of the transfer matrix eigenstates) of the model. Note that some of these results have been extended \cite{KitKMNST07,KitKMNST08} to open integrable quantum models \cite{Che84,Skl88} for the special boundary conditions for which ABA still applies \cite{Skl88}.

In the case of antiperiodic or twisted-antiperiodic boundary conditions, however, ABA can no longer be used to construct the eigenstates of the corresponding transfer matrices. In fact, for a long time, only their eigenvalues had been studied by means of various functional methods, such as Baxter's $T$-$Q$ functional relation \cite{BatBOY95,YunB95} based on the ``pair-propagation through a vertex" property  \cite{Bax72,Bax82L} combined with the fusion of transfer matrices \cite{KulS82,Res83,KirR86a,KirR87}, or, in the root of unity case, by functional relations issued from the fusion hierarchy and truncation identities \cite{NieWF09}.
Let us also mention, in the spin 1/2 case, the derivation of alternative functional relations in \cite{Gal08}, and the use of a functional version of Sklyanin's separation of variables method \cite{Skl85,Skl90,Skl92,Skl95} in \cite{NieWF09}.
For these antiperiodic integrable quantum models, the absence of an eigenstate construction has determined a lack of knowledge concerning the computation of further physical quantities such as the correlation functions, even more marked when compared to the richness of results obtained instead in the periodic case. 

Yet, Sklyanin's quantum separation of variables method (SOV) \cite{Skl85,Skl90,Skl92,Skl95} is a very powerful approach to tackle these models, complementary to ABA in the sense that it enables one to consider various boundary conditions (including (twisted-)antiperiodic ones) for which ABA cannot directly be applied. It provides a full characterization of the corresponding transfer matrix eigenvalues and eigenstates, the completeness being here a direct consequence of the SOV approach. This is a considerable advantage with respect to the ABA for which completeness is in general a complicated issue (see for instance \cite{Dor93,TarV95}).
Another advantage of this method is that determinant representations for the eigenstates scalar products are obtained by construction, whereas Slavnov's scalar product formula \cite{Sla89}, one of the fundamental ingredient of the ABA approach to correlation functions, is a highly non-trivial identity which has to be ``re-invented" as soon as one modifies the model (see for instance \cite{BelPRS12b,LevT13a}).
Indeed, it has been proven for a large variety of integrable quantum models \cite{GroMN12,Nic13a,GroMN14}, including the spin-1/2 XXZ chain \cite{Nic13} and the higher spin integrable generalizations of the XXX chain \cite{Nic13b}, that, in the SOV framework, the scalar products associated to the class of separate states (containing in particular all the transfer matrix eigenstates) admit a universal simple form written as a determinant of matrices described by simple linear combinations of Vandermonde matrices. This result, together with the aforementioned solution of the quantum inverse problem, allows one to get simple determinant representations for the matrix elements of local operators. The hope is that these representations could be the starting point for the effective computation of the correlation functions, similarly to what happens in the periodic case where the form factor representations \cite{KitMT99} have given access to the correlation functions. 
The problem at this stage is that, unlike in the periodic case \cite{IzeKMT99,KitKMST09c,KitKMST11a}, it is not completely clear yet how to evaluate the behavior of these determinants at the thermodynamic limit.
It is even not so obvious to consider their homogeneous limit.
This comes mainly from the fact that the SOV characterization of the spectrum relies on a discrete system of quadratic equations involving directly the inhomogeneity parameters of the model, a description which is neither convenient for the consideration of the thermodynamic limit nor for the consideration of the homogeneous limit.
Hence, so as to have a chance to get some quantitative evaluation of physical quantities such as the correlation functions starting from the SOV solution of the model, it seems necessary to be able to reformulate this spectrum characterization in a more convenient way, for instance in terms of some solutions of Bethe-type equations, as in the periodic case.

In fact, the SOV discrete system of equations for the spectrum of the model is {\em a priori} not completely disconnected from a system of Bethe equations. Indeed, it appears as a discrete version, evaluated at the inhomogeneity parameters of the model, of Baxter's famous $T$-$Q$ functional equation relating two ({\em a priori} unknown) classes of entire functions of the spectral parameter $\lambda$: the transfer matrix eigenvalues $\mathsf{t}(\lambda)$ and the eigenvalues $Q(\lambda)$ of the so-called $Q$-operator \cite{Bax72,Bax82L}. This functional equation is a second-order finite-difference homogeneous equation which takes the following form
\begin{equation}\label{hom-TQ}
   \mathsf{t}(\lambda)\, Q(\lambda)=\mathsf{a}(\lambda)\, Q(\lambda+\epsilon_q)+\mathsf{d}(\lambda)\, Q(\lambda-\epsilon_q),
\end{equation}
where $\mathsf{a}(\lambda)$, $\mathsf{d}(\lambda)$ are some functions of the spectral parameter depending on the model we consider (i.e. related to the corresponding representation of the Yang-Baxter algebra), and $\epsilon_q$ is some finite shift which is a parameter of the model.
If one knows the functional form of the eigenvalues $Q(\lambda)$ of the $Q$-operator, then the functional equation \eqref{hom-TQ} is in its turn equivalent to a system of Bethe equations for the roots of the function $Q(\lambda)$, translating the fact that the eigenvalues of the transfer matrix should be entire functions of the spectral parameter.

However, to our knowledge, the equivalence of the SOV discrete characterization of the spectrum and eigenstates with a particular class of solutions of an homogeneous equation of the form \eqref{hom-TQ} has never been clearly demonstrated, even in the simple case of the antiperiodic XXZ spin-1/2 chain. In fact, it should be noticed that, although all functions of the model are $2i\pi$-periodic, and contrary to what happens in the periodic case \cite{Bax82L}, the functional equation \eqref{hom-TQ} for the {\em antiperiodic} transfer matrix eigenvalues does not admit any solution of the form
\begin{equation}\label{Q-2pi}
  Q(\lambda)=\prod_{j=1}^\mathsf{M}\sinh(\lambda-\lambda_j),
  \qquad \mathsf{M}\in\mathbb{N},
  \qquad \lambda_1,\ldots,\lambda_{\mathsf{M}}\in\mathbb{C}.
\end{equation}
This is due to the fact that the eigenvalue $\bar{\mathsf{t}}(\lambda)$ of the antiperiodic transfer matrix, contrary to its periodic counterpart $\mathsf{t}(\lambda)$, does not satisfy the same $i\pi$-quasi-periodicity property as the two functions $\mathsf{a}(\lambda)$ and $\mathsf{d}(\lambda)$.
The construction of the $Q$-operator performed in \cite{BatBOY95} leads instead to solutions of the form
\begin{equation}\label{Q-4pi}
  Q(\lambda)=\prod_{a=1}^{\mathsf{N}}\sinh \left( \frac{\lambda -\lambda_j}{2}\right) ,
  \qquad \lambda_1,\ldots,\lambda_{\mathsf{N}}\in\mathbb{C},
\end{equation}
which are only $4i\pi$-periodic ($\mathsf{N}$ denotes here the size of the system).
Solutions of the form \eqref{Q-4pi}, contrary to solutions of the form \eqref{Q-2pi}, are indeed clearly compatible with the $i\pi$-quasi-periodicity of \eqref{hom-TQ} in the antiperiodic case. What is {\em a priori}  less clear is whether these solutions provide a {\em complete} description of the spectrum, especially considering the fact that the $Q$-operator construction of \cite{BatBOY95} relies on a conjecture.
It has even been suggested recently, in the context of another approach, the so-called {\em off-diagonal Bethe ansatz} \cite{CaoYSW13a}, that it could be easier and more natural to keep considering, as in the periodic case, functions of the form \eqref{Q-2pi}, as solutions of some generalized version of the functional equation \eqref{hom-TQ} of the form
\begin{equation}\label{inhom-TQ}
   \mathsf{t}(\lambda)\, Q(\lambda)=\tilde{\mathsf{a}}(\lambda)\, Q(\lambda+\epsilon_q)+\tilde{\mathsf{d}}(\lambda)\, Q(\lambda-\epsilon_q)+F(\lambda),
\end{equation}
i.e. with some {\em inhomogeneous} term $F(\lambda)$ and some modified functions $\tilde{\mathsf{a}}(\lambda)$ and $\tilde{\mathsf{d}}(\lambda)$. But, even in this case, the completeness of such solutions has not been, to our opinion, correctly demonstrated.

The purpose of the present paper is to fill these gaps.
We first obtain, within the SOV framework, the complete set of eigenstates of the twisted antiperiodic transfer matrix for the inhomogeneous model associated to any arbitrary spin representations of the 6-vertex trigonometric Yang-Baxter algebra, hence generalizing the construction introduced by one of the authors for the antiperiodic XXZ spin-1/2 chain \cite{Nic13} and the antiperiodic XXX chains with arbitrary spin \cite{Nic13b}. We notably prove that the corresponding spectrum is non-degenerate. 
Second, we reformulate the SOV complete characterization of the spectrum and eigenvectors in terms of solutions of the form \eqref{Q-2pi} of a one-parameter family of {\em inhomogeneous} functional equations of the form \eqref{inhom-TQ}, corresponding to higher spin generalizations of an equation proposed in the context of the off-diagonal Bethe ansatz in \cite{ZhaLCYSW14}. The arguments that we use here are very simple, relying essentially on the equivalence of this functional equation with its discrete version at a sufficient number of well-chosen independent points. This proves the completeness of the characterization of the spectrum in terms of these solutions and of the corresponding Bethe-type equations, although these are probably not the most interesting for the resolution of the present model.
Finally, we reformulate the SOV complete characterization of the spectrum and eigenvectors in terms of the solutions of the form \eqref{Q-4pi} (or their higher spin generalization \cite{YunB95}) of Baxter's homogeneous functional equation. The arguments that we use to prove this correspondence are hardly more complicated (and even simpler in some aspects) than in the inhomogeneous case, also relying on the consideration of the discretization of \eqref{hom-TQ} at an appropriate set of values. We insist on the fact that they do not involve the algebraic construction of the $Q$-operator (although they finally provide an implicit definition of the latter in every finite-dimensional representation of the Yang-Baxter algebra, as a diagonal matrix in the eigenbasis of the antiperiodic transfer matrix). This proves the completeness of the description of the spectrum and eigenvectors in terms of the solutions of the Bethe equations following from \eqref{hom-TQ}-\eqref{Q-4pi} (and their higher spin generalizations). We expect that this last reformulation will be useful for the consideration of the homogeneous/thermodynamic limit of the form factor determinant representations obtained in the SOV framework, and hence will effectively constitute the first step of our future analysis of the quantum dynamics of these integrable quantum models.

\section{The integrable XXZ chain with arbitrary spins}
\label{sec-XXZ-def}

The integrable XXZ chain with arbitrary spins is defined in the framework of the Quantum Inverse Scattering Method (QISM) \cite{SklTF79,FadT79,KulR81,KulS82,KirR87} by means of local quantum Lax operators of the form:
\begin{equation}\label{q-Lax}
\mathsf{L}_{0n}^{(1/2,s_{n})}(\lambda )
= 
\begin{pmatrix}
\sinh ( \lambda +\eta S_{n}^{z} ) & S_{n}^{-}\, \sinh \eta \\ 
S_{n}^{+}\, \sinh \eta & \sinh( \lambda -\eta S_{n}^{z})
\end{pmatrix}_{\![0]}\in \End(V_{0}^{(1/2)}\otimes V_{n}^{(s_{n})}),
\end{equation}
in which $V_0^{(1/2)}\simeq \mathbb{C}^2$ corresponds to the 2-dimensional auxiliary space whereas $V_n^{(s_n)}\simeq\mathbb{C}^{2 s_n+1}$ corresponds to the local $(2s_n+1)$-dimensional quantum space at site $n$, where $2 s_n\in\mathbb{N}$.
The local spin operators $S_n^{\pm}, S_n^z\in\End(V_n^{(s_n)})$ realize a spin-$s_n$ representation of the quantum algebra $U_q(\mathfrak{sl}_2)$ on $V_n^{(s_n)}$,
\begin{equation}\label{Uq(sl2)}
 [ S^{z},S^{\pm }]=\pm S^{\pm },
 \qquad
 [S^{+},S^{-}]
 =\frac{q^{2S^{z}}-q^{-2 S^z}}{q-q^{-1}}=\frac{\sinh(2\eta S^{z})}{\sinh\eta},
 \qquad q=e^\eta.
\end{equation}
In this paper, $\eta$ is a generic complex parameter, non commensurate to $i\pi$ (i.e. $q$ is not a root of unity).

The quantum Lax operator \eqref{q-Lax} satisfies the quadratic relation
\begin{equation}
R_{0 0'}(\lambda -\mu )\,\mathsf{L}_{0n}^{(1/2,s_{n})}(\lambda )\, \mathsf{L}_{0' n}^{(1/2,s_{n})}(\mu )
=\mathsf{L}_{0' n}^{(1/2,s_{n})}(\mu )\, \mathsf{L}_{0n}^{(1/2,s_{n})}(\lambda )\, R_{0 0'}(\lambda -\mu ),  \label{RLL}
\end{equation}
on $V_{0}^{(1/2)}\otimes V_{0'}^{(1/2)}\otimes V_{n}^{(s_{n})}$
in terms of the trigonometric 6-vertex R-matrix
\begin{equation}\label{mat-R}
R(\lambda )=
\begin{pmatrix}
\sinh \left( \lambda +\eta \right) & 0 & 0 & 0 \\ 
0 & \sinh \lambda & \sinh \eta & 0 \\ 
0 & \sinh \eta & \sinh \lambda & 0 \\ 
0 & 0 & 0 & \sinh (\lambda +\eta )
\end{pmatrix} 
=\mathsf{L}^{(1/2,1/2)}\big(\lambda+\frac\eta 2 \big).
\end{equation}
As usual, here and in the following, the indices label the spaces of the tensor product on which the corresponding operators act in a non-trivial way.
The monodromy matrix of an inhomogeneous chain of size $\mathsf{N}$, defined on $V_0^{(1/2)}\otimes\mathcal{H}_{\mathsf{N}}$ as
\begin{equation}\label{mon}
\mathsf{M}_{0}^{(1/2)}(\lambda )
\equiv 
\mathsf{L}_{0\mathsf{N}}^{(1/2,s_{\mathsf{N}})}(\lambda -\xi_{\mathsf{N}})\cdots \mathsf{L}_{01}^{(1/2,s_{1})}(\lambda -\xi_{1})
=\begin{pmatrix}
\mathsf{A}(\lambda ) & \mathsf{B}(\lambda ) \\ 
\mathsf{C}(\lambda ) & \mathsf{D}(\lambda )
\end{pmatrix}_{\! [0]},
\end{equation}
satisfies on $V_{0}^{(1/2)}\otimes V_{0'}^{(1/2)}\otimes \mathcal{H}_{\mathsf{N}}$ a quadratic equation similar to \eqref{RLL}:
\begin{equation}\label{RTT}
R_{0 0'}(\lambda -\mu )\,\mathsf{M}_{0}^{(1/2)}(\lambda )\,\mathsf{M}_{0'}^{(1/2)}(\mu )
=\mathsf{M}_{0'}^{(1/2)}(\mu )\,\mathsf{M}_{0}^{(1/2)}(\lambda )\,R_{0 0'}(\lambda -\mu ).
\end{equation}
Here $\xi_n$ is the inhomogeneity parameter at site $n$, $n\in\{1,\ldots,\mathsf{N} \}$, and $\mathcal{H}_{\mathsf{N}}\equiv\otimes _{n=1}^{\mathsf{N}}V_{n}^{(s_{n})}$ is the quantum space of states of the chain.
The operators $\mathsf{A}(\lambda )$, $\mathsf{B}(\lambda )$, $\mathsf{C}(\lambda )$, $\mathsf{D}(\lambda ) \in \End(\mathcal{H}_{\mathsf{N}})$, which satisfy the commutation relations given by \eqref{RTT}, hence provide a representation of the Yang-Baxter algebra associated to \eqref{mat-R}.
The latter admits a central element, the so-called quantum determinant, which can be expressed as
\begin{align}
  \mathrm{det}_q \mathsf{M}^{(1/2)}(\lambda)=a(\lambda)\, d(\lambda-\eta)
  &=\mathsf{A}(\lambda)\,\mathsf{D}(\lambda-\eta)-\mathsf{B}(\lambda)\,\mathsf{C}(\lambda-\eta)
       \nonumber\\
  &=\mathsf{D}(\lambda)\,\mathsf{A}(\lambda-\eta)-\mathsf{C}(\lambda)\,\mathsf{B}(\lambda-\eta)
      ,\label{q-det}
\end{align}
where
\begin{equation}\label{a-d}
   a(\lambda)=\prod_{n=1}^\mathsf{N}\sinh(\lambda-\xi_n+s_n\eta),
   \qquad
   d(\lambda)=\prod_{n=1}^\mathsf{N}\sinh(\lambda-\xi_n-s_n\eta).
\end{equation}

Let $\mathsf{K}$ be a numerical $2\times 2$ matrix such that $[R(u),\mathsf{K}\otimes\mathsf{K}]=0$. Then the $\mathsf{K}$-twisted transfer matrix
\begin{equation}
  \mathsf{T}_\mathsf{K}(\lambda)=\tr_0\big[\mathsf{K}_0\,\mathsf{M}_0^{(1/2)}(\lambda)\big]
\end{equation}
generates a one-parameter family of commuting operators:
\begin{equation}
    \big[\mathsf{T}_\mathsf{K}(\lambda),\mathsf{T}_\mathsf{K}(\mu)\big]=0,
    \qquad\forall\lambda,\mu\in\mathbb{C}.
\end{equation}
The family of $\kappa$-twisted periodic transfer matrix $\mathsf{T}^{(\kappa)}(\lambda)$ corresponding to $\mathsf{K}=\mathrm{diag}(\kappa,\kappa^{-1})$ for some given $\kappa\in\mathbb{C}\setminus\{0\}$ can be diagonalized by algebraic Bethe ansatz, leading to a parametrization of the corresponding eigenvalues in terms of the solutions of a system of $\kappa$-twisted Bethe equations.
Here, we are instead interested in the diagonalization of the family of $\kappa$-twisted {\em antiperiodic} transfer matrices
\begin{equation}\label{anti-T}
 \bar{\mathsf{T}}^{(\kappa)}(\lambda)=\kappa^{-1}\,\mathsf{B}(\lambda)+\kappa\,\mathsf{C}(\lambda),
\end{equation}
corresponding to $\mathsf{K}=\mathrm{diag}(\kappa,\kappa^{-1})\cdot\sigma^x$, by means of separation of variables.

Without further study, it is already easy to point out certain conditions under which this transfer matrix is normal (and hence diagonalizable):
\begin{proposition}\label{prop-normal}
\textsf{I)} Let $\eta \in i\mathbb{R}$, $(\xi_1,\ldots,\xi_{\mathsf{N}}) \in \mathbb{R}^{\mathsf{N}}$ and $\kappa^*=\kappa^{-1}$.
Then $\left[\mathsf{\bar{T}}^{(\kappa)}(\lambda )\right]^\dagger=-\mathsf{\bar{T}}^{(\kappa)}(\lambda^* )$, and therefore $\mathsf{\bar{T}}^{(\kappa)}(\lambda )$ is normal for any $\lambda\in\mathbb{C}$.

\textsf{II)} Let $\eta \in \mathbb{R}$, $(\xi_1,\ldots,\xi_{\mathsf{N}}) \in ( i\mathbb{R})^{\mathsf{N}}$ and $\kappa^*=\kappa^{-1}$.
Then $\left[\mathsf{\bar{T}}^{(\kappa)}(\lambda )\right]^\dagger=(-1)^{\mathsf{N}-1}\mathsf{\bar{T}}^{(\kappa)}(-\lambda^* )$, and therefore $\mathsf{\bar{T}}^{(\kappa)}(\lambda )$ is normal for any $\lambda\in\mathbb{C}$.
\end{proposition}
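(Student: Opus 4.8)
The plan is to reduce the normality of $\bar{\mathsf{T}}^{(\kappa)}(\lambda)$ to an explicit conjugation identity for the monodromy matrix, and then to invoke the commutativity of the transfer-matrix family. Throughout, $\dagger$ denotes the Hermitian adjoint on $\mathcal{H}_{\mathsf{N}}$, which acts entrywise on the auxiliary $2\times2$ matrix structure. Since $\bar{\mathsf{T}}^{(\kappa)}(\lambda)=\kappa^{-1}\mathsf{B}(\lambda)+\kappa\,\mathsf{C}(\lambda)$ and $\kappa^*=\kappa^{-1}$, one has $[\bar{\mathsf{T}}^{(\kappa)}(\lambda)]^\dagger=\kappa\,\mathsf{B}(\lambda)^\dagger+\kappa^{-1}\mathsf{C}(\lambda)^\dagger$, so everything hinges on the adjoints of the entries $\mathsf{B}$ and $\mathsf{C}$. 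I would first record the conjugation properties of the local spin operators: in both cases the spin-$s_n$ representation is unitary for an inner product in which $(S_n^z)^\dagger=S_n^z$ and $(S_n^\pm)^\dagger=S_n^\mp$, these being compatible with the commutation relation \eqref{Uq(sl2)} for both $\eta\in i\mathbb{R}$ and $\eta\in\mathbb{R}$.

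Next I would compute the entrywise adjoint of a single Lax operator \eqref{q-Lax}. Using $(S_n^z)^\dagger=S_n^z$, $(S_n^\pm)^\dagger=S_n^\mp$ together with $\eta^*=-\eta$ in case I and $\eta^*=\eta$ in case II, a direct computation gives the key local identities $[\mathsf{L}_{0n}^{(1/2,s_n)}(\nu)]^{\dagger}=\sigma^y_0\,\mathsf{L}_{0n}^{(1/2,s_n)}(\nu^*)\,\sigma^y_0$ in case I, and $[\mathsf{L}_{0n}^{(1/2,s_n)}(\nu)]^{\dagger}=-\,\sigma^y_0\,\mathsf{L}_{0n}^{(1/2,s_n)}(-\nu^*)\,\sigma^y_0$ in case II, with $\sigma^y_0$ acting on the auxiliary space $V_0^{(1/2)}$. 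The crucial structural observation is that the entries of $\mathsf{L}_{0n}$ and $\mathsf{L}_{0m}$ act on different factors of $\mathcal{H}_{\mathsf{N}}$ for $n\neq m$ and hence commute; therefore the entrywise adjoint of the ordered product \eqref{mon} equals the ordered product of the entrywise adjoints, \emph{with no reversal} of the site order. The intermediate factors $\sigma^y_0\sigma^y_0=\Id$ then telescope, and using $\xi_n\in\mathbb{R}$ (so $(\lambda-\xi_n)^*=\lambda^*-\xi_n$) in case I, resp. $\xi_n\in i\mathbb{R}$ (so $-(\lambda-\xi_n)^*=-\lambda^*-\xi_n$) in case II, I obtain $[\mathsf{M}_0^{(1/2)}(\lambda)]^{\dagger}=\sigma^y_0\,\mathsf{M}_0^{(1/2)}(\lambda^*)\,\sigma^y_0$ and $[\mathsf{M}_0^{(1/2)}(\lambda)]^{\dagger}=(-1)^{\mathsf{N}}\sigma^y_0\,\mathsf{M}_0^{(1/2)}(-\lambda^*)\,\sigma^y_0$ respectively. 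Reading off entries via $\sigma^y\,\bigl(\begin{smallmatrix}\mathsf{A}&\mathsf{B}\\\mathsf{C}&\mathsf{D}\end{smallmatrix}\bigr)\,\sigma^y=\bigl(\begin{smallmatrix}\mathsf{D}&-\mathsf{C}\\-\mathsf{B}&\mathsf{A}\end{smallmatrix}\bigr)$ yields $\mathsf{B}(\lambda)^\dagger=-\mathsf{C}(\lambda^*)$, $\mathsf{C}(\lambda)^\dagger=-\mathsf{B}(\lambda^*)$ in case I, and $\mathsf{B}(\lambda)^\dagger=(-1)^{\mathsf{N}-1}\mathsf{C}(-\lambda^*)$, $\mathsf{C}(\lambda)^\dagger=(-1)^{\mathsf{N}-1}\mathsf{B}(-\lambda^*)$ in case II. Substituting into $[\bar{\mathsf{T}}^{(\kappa)}(\lambda)]^\dagger=\kappa\,\mathsf{B}(\lambda)^\dagger+\kappa^{-1}\mathsf{C}(\lambda)^\dagger$ and recombining gives exactly the two claimed identities.

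Finally, normality follows at once: $\bar{\mathsf{T}}^{(\kappa)}$ is a $\mathsf{K}$-twisted transfer matrix, so $[\bar{\mathsf{T}}^{(\kappa)}(\lambda),\bar{\mathsf{T}}^{(\kappa)}(\mu)]=0$ for all $\lambda,\mu$; specialising $\mu=\lambda^*$ (case I), resp. $\mu=-\lambda^*$ (case II), and using the identities just proved gives $[\bar{\mathsf{T}}^{(\kappa)}(\lambda),[\bar{\mathsf{T}}^{(\kappa)}(\lambda)]^\dagger]=0$, i.e. $\bar{\mathsf{T}}^{(\kappa)}(\lambda)$ is normal. I expect the genuine difficulty to lie in case II: there the naive adjoint wants to transpose the quantum structure (it swaps $S_n^+\leftrightarrow S_n^-$ without a compensating sign on the diagonal), and this must be re-absorbed into a $\sigma^y_0$-conjugation at the reflected argument $\nu\mapsto-\nu^*$, at the cost of a sign per site. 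Keeping track of this sign, of the resulting global factor $(-1)^{\mathsf{N}}$, and of the argument reflection is the only delicate bookkeeping; the commutation of different-site entries, which legitimates the $\sigma^y_0$ telescoping without reversing the product, is the conceptual heart of the argument.
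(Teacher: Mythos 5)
Your proof is correct. The paper in fact states Proposition~\ref{prop-normal} with no proof at all (``it is already easy to point out\dots''), and your argument is precisely the standard one the authors are alluding to: the entrywise local identity $[\mathsf{L}_{0n}^{(1/2,s_n)}(\nu)]^\dagger=\pm\,\sigma^y_0\,\mathsf{L}_{0n}^{(1/2,s_n)}(\pm\nu^*)\,\sigma^y_0$, the order-preserving telescoping through the ordered product \eqref{mon} (legitimate exactly because entries attached to distinct sites commute, as you correctly emphasize), and normality from the commutativity of the one-parameter family evaluated at $\mu=\lambda^*$ or $\mu=-\lambda^*$. The sign bookkeeping checks out: the per-site factor $-1$ in case~II combines with the extra sign from $\sigma^y(\begin{smallmatrix}\mathsf{A}&\mathsf{B}\\ \mathsf{C}&\mathsf{D}\end{smallmatrix})\sigma^y$ to give $(-1)^{\mathsf{N}-1}$ on $\mathsf{B}^\dagger,\mathsf{C}^\dagger$. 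The only implicit hypothesis worth flagging is that $(S_n^\pm)^\dagger=S_n^\mp$ requires $x_n(k)=\sqrt{[k]_q[2s_n-k+1]_q}$ to be real; in case~II ($\eta\in\mathbb{R}$) this is automatic, and in case~I it is automatic for $s_n=1/2$, but for higher spins with $\eta\in i\mathbb{R}$ it imposes $[k]_q[2s_n-k+1]_q\ge 0$, a restriction on $|\eta|$ that the proposition itself tacitly assumes.
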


\section{Diagonalization of antiperiodic transfer matrices in the SOV framework}
\label{sec-SOV}

The $\bar{\mathsf{T}}$-spectral problem of the ($\kappa$-twisted) antiperiodic model can be solved by means of the SOV approach \cite{Skl85,Skl90,Skl92,Skl95}.
This approach is based on the construction of a basis of the space of states for which the commutative family of operators $\mathsf{D}(\lambda )$ (or $\mathsf{A}(\lambda )$) is diagonal and with
simple spectrum.
The construction of such a basis has been explicitly performed, in the XXZ spin-1/2 case, in \cite{Nic13}, and in the integrable XXX chain with arbitrary spin in \cite{Nic13b}. The construction here is very similar, therefore we only recall the main results.

In the following, we shall denote by a subscript $\mathcal{L}$ (respectively $\mathcal{R}$) the left (respectively the right) representation spaces. We shall denote by $\ket{k,n}$ (respectively by $\bra{k,n}$), $k\in\{0,1,\ldots,2s_{n}\}$, the vector of the $S_{n}^{z}$-eigenbasis of the local
space $V_{n}^{(s_{n},\mathcal{R})}$ (respectively of the local
space $V_{n}^{(s_{n},\mathcal{L})}$),  such that
\begin{alignat}{2}
  &S_{n}^{z} \, \ket{k,n} =(s_{n}-k)\, \ket{k,n},
  \qquad
  & &\bra{k,n} \, S_{n}^{z} =(s_{n}-k)\, \bra{k,n},\\
  &S_{n}^{-} \, \ket{k,n} =x_n(k+1)\, \ket{k+1,n},
  \qquad
  & & \bra{k,n}\, S_{n}^{-}=x_n(k)\, \bra{k-1,n},\\
  &S_{n}^{+} \, \ket{k,n} =x_n(k)\, \ket{k-1,n},
  \qquad
  & &\bra{k,n}\, S_{n}^{+}=x_n(k+1)\, \bra{k+1,n},
\end{alignat}
with $x_n(k)=\sqrt{[k]_q\, [2s_{n}-k+1]_q}$,
where we have used the standard  $q$-integer notation:
\begin{equation}\label{q-int}
   [ j]_q\equiv\frac{q^j-q^{-j}}{q-q^{-1}}=\frac{\sinh(\eta j)}{\sinh\eta}.
\end{equation}
We have the natural scalar product:
\begin{equation}
(\ket{k,n},\ket{k',n})\equiv\moy{k,n\,|\, k',n} = \delta _{k,k'},
\quad
\forall k,k'\in \{0,\ldots,2s_{n}\}.
\end{equation}
Setting
\begin{equation}
\bra{ \mathbf{0} }\equiv \frac{1}{\mathsc{n}}\otimes _{n=1}^{\mathsf{N}}\bra{0,n},
\qquad
\ket{\mathbf{0} } \equiv \frac{1}{\mathsc{n}}\otimes _{n=1}^{\mathsf{N}}\ket{0,n} ,
\qquad \text{with}\quad
\mathsc{n}\equiv \prod_{i<j}\sinh(\xi_j-\xi_i)^{1/2},
\end{equation}
to be respectively the left and right {\em references states},
we define, for each $\mathsf{N}$-tuple $\mathbf{h}\equiv(h_1,\ldots,h_\mathsf{N})$ with $h_n\in\{0,1,\ldots,2s_n\}$, the following states in the {\em left} (covectors) and {\em right} (vectors) linear spaces:
\begin{alignat}{2}
   &\bra{\mathbf{h}}
   \equiv  \bra{\mathbf{0}}\,
   \prod_{n=1}^{N}\prod_{k_n=0}^{h_n-1} \frac{C(\xi_n^{(k_n)})}{d(\xi_n^{(k_n+1)})}
   &\quad
   &\in\ \mathcal{H}^\mathcal{L}_{\mathsf{N}}\equiv \otimes _{n=1}^{\mathsf{N}} V_{n}^{(s_{n},\mathcal{L})}, 
    \label{leigen-h}
    \\
    &\ket{\mathbf{h}} \equiv 
    \prod_{n=1}^{\mathsf{N}}\prod_{k_{n}=0}^{h_{n}-1}
    \left( -\frac{B(\xi_n^{(k_n)})}{a(\xi_n^{(k_n)})}\right) \ket{\mathbf{0}}
    &\quad
    &\in\  \mathcal{H}^\mathcal{R}_{\mathsf{N}}\equiv \otimes_{n=1}^{\mathsf{N}}V_{n}^{(s_{n},\mathcal{L})}, 
    \label{reigen-h}
\end{alignat}
where we have used the notation
\begin{equation}\label{xi-shift}
   \xi_n^{(k_n)}\equiv\xi_n+(s_n-k_n)\eta  \qquad \text{for}\quad
   k_n\in\{0,1,\ldots, 2s_n\}.
\end{equation}
Then the family of covectors \eqref{leigen-h} (respectively of vectors \eqref{reigen-h}) provides a $\mathsf{D}$-eigenbasis in $\mathcal{H}^\mathcal{L}_{\mathsf{N}}$ (respectively in $\mathcal{H}^\mathcal{R}_{\mathsf{N}}$). More precisely, we have the following theorem, which can be proven similarly as in \cite{Nic13,Nic13b}:

\begin{theorem}\label{th-SOV-basis}
Let the inhomogeneities $(\xi_1,\ldots,\xi_{\mathsf{N}})\in \mathbb{C}^{\mathsf{N}}$ satisfy the condition
\begin{equation}\label{cond-inh}
  \Xi_i\cap\Xi_j=\emptyset
   \qquad \text{for}\quad
    i\neq j,
\end{equation}
where $\Xi_i=\{\xi_i+k\eta+k'\pi\mid k=-s_i,-s_i+1,\ldots,s_i; k'\in\mathbb{Z}\}$.
Then $\mathsf{D}(\lambda)$ is diagonalizable with simple spectrum and we have:
\smallskip

\textsf{I)} \underline{Left $\mathsf{D}(\lambda )$ SOV-representations:}
The states $\bra{\mathbf{h}}$ \eqref{leigen-h} define a $\mathsf{D}$-eigenbasis of $\mathcal{H}^{\mathcal{L}}_{\mathsf{N}}$ such that
\begin{align}
   &\bra{\mathbf{h}}\, \mathsf{D}(\lambda )=d_{\mathbf{h}}(\lambda)\,\bra{\mathbf{h}}, 
      \label{D-left}\\
   &\bra{\mathbf{h}}\, \mathsf{C}(\lambda )
      =\sum_{a=1}^{\mathsf{N}}\prod_{b\neq a}
        \frac{\sinh \big( \lambda -\xi_b^{(h_b)}\big) }{\sinh \big( \xi_a^{(h_a)}-\xi_b^{(h_b)}\big) }\,
        (1-\delta_{h_a,2s_a})\, d(\xi_a^{(h_a+1)})\, \bra{\mathrm{T}_a^+\mathbf{h} },
\label{C-left} \\
   &\bra{\mathbf{h}}\, \mathsf{B}(\lambda )
      =-\sum_{a=1}^{\mathsf{N}}\prod_{b\neq a}
         \frac{\sinh \big( \lambda -\xi_b^{(h_b)}\big) }{\sinh \big( \xi_a^{(h_a)}-\xi_b^{(h_b)}\big) }\,
         (1-\delta_{h_a,0})\, a(\xi_a^{(h_a-1)})\, \bra{\mathrm{T}_a^{-}\mathbf{h} }.
\label{B-left}
\end{align}
\smallskip

\textsf{II)} \underline{Right $\mathsf{D}(\lambda )$ SOV-representations:} 
The states $\ket{\mathbf{h}}$ \eqref{reigen-h} define a $\mathsf{D}$-eigenbasis of $\mathcal{H}^{\mathcal{R}}_{\mathsf{N}}$ such that
\begin{align}
   &\mathsf{D}(\lambda )\,\ket{\mathbf{h}} =d_{\mathbf{h}}(\lambda)\,\ket{\mathbf{h}}, 
      \label{D-right}\\
   &\mathsf{C}(\lambda )\,\ket{\mathbf{h}} 
      =\sum_{a=1}^{\mathsf{N}}\prod_{b\neq a}
        \frac{\sinh \big( \lambda -\xi_b^{(h_b)}\big) }{\sinh \big( \xi_a^{(h_a)}-\xi_b^{(h_b)}\big) }\,
        d(\xi_a^{(h_a)})\, \ket{\mathrm{T}_a^-\mathbf{h} },
\label{C-right} \\
   & \mathsf{B}(\lambda )\,\ket{\mathbf{h}} 
      =-\sum_{a=1}^{\mathsf{N}}\prod_{b\neq a}
         \frac{\sinh \big( \lambda -\xi_b^{(h_b)}\big) }{\sinh \big( \xi_a^{(h_a)}-\xi_b^{(h_b)}\big) }\,
         a(\xi_a^{(h_a)})\, \ket{\mathrm{T}_a^+\mathbf{h} }.
\label{B-right}
\end{align}
In the above expressions, we have used the notations:
\begin{align}
 &d_{\mathbf{h}}(\lambda )\equiv \prod_{n=1}^{\mathsf{N}}\sinh (\lambda-\xi_n^{(h_n)}), 
 \label{eigen-D}\\
  &\mathrm{T}_a^\pm(h_1,\ldots,h_\mathsf{N})=(h_1,\ldots,h_a\pm 1,\ldots,h_\mathsf{N}).
\end{align}
The action of $\mathsf{A}(\lambda )$ on $\bra{\mathbf{h}}$ or on $\ket{\mathbf{h}}$ is uniquely determined by the quantum determinant relation \eqref{q-det}.

Moreover, the action of the covector $\bra{\mathbf{h}}$ on the vector $\ket{\mathbf{k}}$ reads:
\begin{equation} \label{sc-hk}
   \moy{\mathbf{h}\, |\, \mathbf{k} } 
   =\prod_{j=1}^{\mathsf{N}}\delta _{h_j,k_j}
      \prod_{1\leq i<j\leq \mathsf{N}}\frac{1}{\sinh \big(\xi_j^{(h_j)}-\xi_i^{(h_i)}\big) },
\end{equation}
which means that we have  the following decomposition
of the identity:
\begin{equation}\label{Id}
  \mathbb{I}\equiv \sum_{\mathbf{h}}
  \prod_{1\leq i<j\leq \mathsf{N}}\sinh \big(\xi_j^{(h_j)}-\xi_i^{(h_i)}\big)\,
  \ket{\mathbf{h}}\bra{\mathbf{h}},
\end{equation}
in which the summation runs over all $\mathsf{N}$-tuples $\mathbf{h}=(h_1,\ldots,h_\mathsf{N})$ such that $h_i\in\{0,1,\ldots,2s_i\}$.

\end{theorem}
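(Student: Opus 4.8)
The plan is to establish Theorem~\ref{th-SOV-basis} by following the standard SOV strategy, adapting the spin-$1/2$ and XXX arguments of \cite{Nic13,Nic13b} to the present arbitrary-spin trigonometric setting. I would organize the argument in three stages: first the diagonality and simplicity of $\mathsf{D}(\lambda)$ together with its eigenvalue \eqref{eigen-D}; then the shift actions \eqref{C-left}--\eqref{B-left} and \eqref{C-right}--\eqref{B-right} of $\mathsf{C}$ and $\mathsf{B}$; and finally the orthogonality/scalar-product formula \eqref{sc-hk} yielding the resolution of the identity \eqref{Id}.

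\emph{Diagonalization of $\mathsf{D}$.} First I would compute the action of $\mathsf{D}(\lambda)$ on the reference covector $\bra{\mathbf{0}}$ and vector $\ket{\mathbf{0}}$ directly from the definition \eqref{mon} of the monodromy matrix as an ordered product of Lax operators \eqref{q-Lax}; since $\bra{0,n}$ and $\ket{0,n}$ are highest-weight vectors, the lower-triangular structure of $\mathsf{L}$ on these states gives $\mathsf{D}(\lambda)\ket{\mathbf{0}}=d_{\mathbf{0}}(\lambda)\ket{\mathbf{0}}$ with $d_{\mathbf{0}}(\lambda)=\prod_n\sinh(\lambda-\xi_n^{(0)})=\prod_n\sinh(\lambda-\xi_n-s_n\eta)$, matching \eqref{eigen-D} at $\mathbf{h}=\mathbf{0}$. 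The key algebraic input is then the commutation relation between $\mathsf{D}(\lambda)$ and $\mathsf{C}(\mu)$ (resp.\ $\mathsf{B}(\mu)$) extracted from the Yang--Baxter relation \eqref{RTT}: it shows that dressing $\bra{\mathbf{0}}$ by the product $\prod_{k_n}\mathsf{C}(\xi_n^{(k_n)})/d(\xi_n^{(k_n+1)})$ in \eqref{leigen-h} shifts the $\mathsf{D}$-eigenvalue at site $n$ from $\sinh(\lambda-\xi_n^{(k_n)})$ to $\sinh(\lambda-\xi_n^{(k_n+1)})$. Iterating over all sites yields \eqref{D-left}--\eqref{D-right}. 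Simplicity of the spectrum, and hence that the $\bra{\mathbf{h}}$ form a genuine basis, follows from condition \eqref{cond-inh}: the sets $\Xi_i$ being pairwise disjoint guarantees that the $\prod_{n}(2s_n+1)$ eigenvalue functions $d_{\mathbf{h}}(\lambda)$ are pairwise distinct as functions of $\lambda$, so the $\mathsf{D}$-eigenvalues separate the states.

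\emph{Shift actions of $\mathsf{C}$ and $\mathsf{B}$.} Here I would again use the quadratic relations from \eqref{RTT} to commute $\mathsf{C}(\lambda)$ and $\mathsf{B}(\lambda)$ past the building blocks of $\bra{\mathbf{h}}$ and $\ket{\mathbf{h}}$. The crucial structural fact is that $\mathsf{C}(\lambda)$ acts as a one-step raising operator $\mathrm{T}_a^+$ on the SOV index while $\mathsf{B}(\lambda)$ acts as the lowering operator $\mathrm{T}_a^-$ (with the roles reversed between left and right representations). Because both $\mathsf{C}$ and $\mathsf{B}$ are $\lambda$-dependent but the resulting state is $\lambda$-independent up to an overall interpolation factor, the coefficient in front of each shifted state $\bra{\mathrm{T}_a^\pm\mathbf{h}}$ is fixed by Lagrange interpolation at the $\mathsf{N}$ points $\{\xi_b^{(h_b)}\}_b$, which produces exactly the Lagrange kernels $\prod_{b\neq a}\sinh(\lambda-\xi_b^{(h_b)})/\sinh(\xi_a^{(h_a)}-\xi_b^{(h_b)})$ appearing in \eqref{C-left}--\eqref{B-right}; the boundary factors $(1-\delta_{h_a,2s_a})$ and $(1-\delta_{h_a,0})$ encode the truncation at the top and bottom of each finite-dimensional spin-$s_a$ representation, and the quantum determinant \eqref{q-det} fixes the normalizations $d(\xi_a^{(h_a+1)})$, $a(\xi_a^{(h_a-1)})$ etc. The action of $\mathsf{A}(\lambda)$ is then read off from \eqref{q-det}, as stated.

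\emph{Scalar products and resolution of the identity.} Finally I would establish \eqref{sc-hk} by evaluating $\moy{\mathbf{h}\,|\,\mathbf{k}}$ using the shift actions just derived: acting with the $\mathsf{C}$-operators building $\bra{\mathbf{h}}$ to the right on $\ket{\mathbf{k}}$ and the $\mathsf{B}$-operators building $\ket{\mathbf{k}}$ to the left, together with the orthogonality of distinct $\mathsf{D}$-eigenstates (a consequence of simplicity of the spectrum), forces $h_j=k_j$ for all $j$ and produces the Vandermonde-type product $\prod_{i<j}1/\sinh(\xi_j^{(h_j)}-\xi_i^{(h_i)})$, the normalization $\mathsc{n}$ being chosen precisely to yield this symmetric form. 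The resolution of identity \eqref{Id} is then the dual statement, obtained by inverting the diagonal Gram matrix \eqref{sc-hk}. The main obstacle I anticipate is twofold: controlling the boundary terms (the $\delta$-truncations) consistently across the whole chain when iterating the commutation relations through the spin-$s_a$ representations of possibly different dimensions, and verifying that condition \eqref{cond-inh} is genuinely sufficient to guarantee both invertibility of the Vandermonde-type matrices entering the interpolation formulas and non-degeneracy of the $d_{\mathbf{h}}$; these are the points where the arbitrary-spin generalization departs most from the spin-$1/2$ case of \cite{Nic13} and where the interplay between the $\eta$-shifts and the $i\pi$-periodicity in the definition of $\Xi_i$ must be handled carefully.
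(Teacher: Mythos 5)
Your proposal is correct and follows essentially the route the paper itself takes: the paper gives no detailed proof of Theorem~\ref{th-SOV-basis} but defers to the standard SOV construction of \cite{Nic13,Nic13b}, which is precisely the three-stage argument you sketch (triangular action on the reference states, Yang--Baxter commutation relations plus Lagrange interpolation at the points $\xi_b^{(h_b)}$ for the shift actions, and orthogonality of the simple $\mathsf{D}$-spectrum for the Gram matrix). Your identification of condition \eqref{cond-inh} as what guarantees pairwise distinctness of the eigenvalue functions $d_{\mathbf{h}}(\lambda)$, and hence simplicity, is the right key point.
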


Let us comment here that expressions of the type \eqref{D-left}-\eqref{B-left} and \eqref{D-right}-\eqref{B-right} for the generators of the trigonometric 6-vertex Yang-Baxter algebra can also be derived from the original representations by constructing the factorizing $F$-matrices and implementing the prescribed change of basis \cite{MaiS00}. Such $F$-matrices were introduced in \cite{MaiS00} as explicit representations of particular Drinfeld twist for quasi-triangular quasi-Hopf algebras \cite{Dri87,Dri90a}. Their connection with the functional version of Sklyanin's SOV was understood in \cite{Ter99}, where were constructed the factorizing $F$-matrices for arbitrary spin representations of the rational 6-vertex Yang-€"Baxter algebra.

Using the  SOV basis of Theorem~\ref{th-SOV-basis}, it is possible to completely characterize the spectrum and the eigenvectors of the ($\kappa$-twisted) antiperiodic transfer matrix $\bar{\mathsf{T}}^{(\kappa)}$.
As in \cite{Skl90,Skl92,Nic13,Nic13b}, the latter can be obtained in terms of the solutions of a discrete system of equations:

\begin{theorem}
\label{th-eigen-t}
For any fixed $\kappa\in\mathbb{C}\setminus\{0\}$, the $\kappa$-twisted antiperiodic tranfer matrix $\bar{\mathsf{T}}^{(\kappa)}(\lambda)$ \eqref{anti-T} defines a one-parameter family of commuting operators.
All these families are isospectral, i.e. they admit the same set of eigenvalue functions $\Sigma_{\mathsf{\bar{T}}}$ independently of the value of $\kappa$.

Under the conditions \eqref{cond-inh}, the spectrum of $\bar{\mathsf{T}}^{(\kappa)}(\lambda )$ is simple and $\Sigma _{{\mathsf{\bar{T}}}}$ coincides with the set of functions of the form 
\begin{equation}\label{t-form}
   \bar{\mathsf{t}}(\lambda)=\sum_{n=1}^\mathsf{N} 
   \prod_{\substack{\ell=1\\ \ell\not=n}}^\mathsf{N}\frac{\sinh(\lambda-\xi_\ell)}{\sinh(\xi_n-\xi_\ell)}\, \bar{\mathsf{t}}(\xi_n),
\end{equation}
satisfying the discrete system of equations:
\begin{equation}
\det_{2s_{n}+1}D_{\bar{\mathsf{t}}}^{(n)}=0,\qquad\forall n\in \{1,\ldots,\mathsf{N}\},
\label{discrete-sys}
\end{equation}
where $D_{\bar{\mathsf{t}}}^{(n)}$ is the $( 2s_{n}+1) \times( 2s_{n}+1) $ tridiagonal matrix
\begin{equation}\label{Dt}
   D_{\bar{\mathsf{t}}}^{(n)}\equiv 
\begin{pmatrix}
\bar{\mathsf{t}}(\xi_{n}^{(0)}) & a(\xi_{n}^{(0)}) & 0 &  \hdotsfor{2} & 0 \\ 
-d(\xi_{n}^{(1)}) & \bar{\mathsf{t}}(\xi_{n}^{(1)}) & a(\xi_{n}^{(1)}) & 0  & & \vdots  \\ 
0 &  & \ddots &  &    &  \vdots \\ 
\vdots &  &  & \ddots &   &  0 \\ 
\vdots &  & 0 & -d(\xi_{n}^{(2s_{n}-1)}) & \bar{\mathsf{t}}(\xi_{n}^{(2s_{n}-1)}) & a(\xi_{n}^{(2s_{n}-1)}) \\ 
0 & \hdotsfor{2}    & 0 & -d(\xi_{n}^{(2s_{n})}) & \bar{\mathsf{t}}(\xi_{n}^{(2s_{n})})
\end{pmatrix}.
\end{equation}
The left $\mathsf{\bar{T}}^{(\kappa)}(\lambda)$-eigenstate $\bra{\bar{\mathsf{t}}^{(\kappa)} }$ and the right $\mathsf{\bar{T}}^{(\kappa)}(\lambda)$-eigenstate $\ket{ \bar{\mathsf{t}}^{(\kappa)} }$ corresponding to the eigenvalue $\bar{\mathsf{t}}(\lambda )\in \Sigma _{\mathsf{\bar{T}}}$ are
\begin{align}
  &\bra{\bar{\mathsf{t}}^{(\kappa)} } 
  = \sum_{\mathbf{h}}
     \prod_{n=1}^\mathsf{N} \left\{\kappa^{h_n}\, \mathsf{q}_{\,\bar{\mathsf{t}},h_n}^{(n)} \right\}\,
     \prod_{\substack{i,j=1 \\ i<j}}^\mathsf{N}\sinh(\xi_j^{(h_j)}-\xi_i^{(h_i)})\, \bra{\mathbf{h} },
     \label{eigenl}\\
   &\ket{\bar{\mathsf{t}}^{(\kappa)} } 
  = \sum_{\mathbf{h}}
     \prod_{n=1}^\mathsf{N} \left\{\kappa^{-h_n}\, {\mathsf{p}}_{\,\bar{\mathsf{t}},h_n}^{(n)} \right\}
     \prod_{\substack{i,j=1 \\ i<j}}^\mathsf{N}\sinh(\xi_j^{(h_j)}-\xi_i^{(h_i)})\, \ket{\mathbf{h} }.  \label{eigenr}
\end{align}
In \eqref{eigenl}, \eqref{eigenr}, the summation runs over all $\mathsf{N}$-tuples $\mathbf{h}=(h_1,\ldots,h_\mathsf{N})$ such that $h_n\in\{0,1,\ldots,2s_n\}$ and, for each $n\in\{1,\ldots,\mathsf{N}\}$ and $h_n\in\{0,1,\ldots 2s_n\}$,  the coefficient $ \mathsf{q}_{\,\bar{\mathsf{t}},h_n}^{(n)}$ (resp. ${\mathsf{p}}_{\,\bar{\mathsf{t}},h_n}^{(n)}$) stands for the $(h_n+1)$-th coordinate of the unique (up to an overall normalization) vector $ \mathbf{q}_{\,\bar{\mathsf{t}}}^{(n)}$ (resp. ${\mathbf{p}}_{\,\bar{\mathsf{t}}}^{(n)}$) spanning the one-dimensional right (resp. left) nullspace of the matrix $D_{\bar{\mathsf{t}}}^{(n)}$.
\end{theorem}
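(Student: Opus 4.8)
The plan is to diagonalize $\bar{\mathsf{T}}^{(\kappa)}(\lambda)=\kappa^{-1}\mathsf{B}(\lambda)+\kappa\mathsf{C}(\lambda)$ directly in the $\mathsf{D}$-eigenbasis of Theorem~\ref{th-SOV-basis}, where $\mathsf{B}$ and $\mathsf{C}$ act as elementary one-step shift operators on the quantum numbers $h_n$. Before the separation proper I would settle three preliminary points. Commutativity of the family in $\lambda$ is the usual consequence of the Yang--Baxter relation \eqref{RTT} together with $[R,\mathsf{K}\otimes\mathsf{K}]=0$ for $\mathsf{K}=\mathrm{diag}(\kappa,\kappa^{-1})\,\sigma^x$. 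Isospectrality follows by conjugating $\bar{\mathsf{T}}^{(\kappa)}$ by a suitable power of the total-spin operator $\sum_n S_n^z$, which rescales $\mathsf{B}$ and $\mathsf{C}$ oppositely and maps $\bar{\mathsf{T}}^{(\kappa)}$ onto $\bar{\mathsf{T}}^{(1)}$ (the residual $\kappa^{\pm h_n}$ factors in \eqref{eigenl}--\eqref{eigenr} are precisely this gauge); it is in any case manifest a posteriori, since the discrete system \eqref{discrete-sys} does not involve $\kappa$. Finally, from the explicit product form of the monodromy matrix one reads off that $\mathsf{B},\mathsf{C}$, and hence every eigenvalue $\bar{\mathsf{t}}$, are trigonometric polynomials of degree $\mathsf{N}-1$ satisfying the antiperiodic quasi-periodicity $\bar{\mathsf{t}}(\lambda+i\pi)=(-1)^{\mathsf{N}-1}\bar{\mathsf{t}}(\lambda)$; the space of such functions is $\mathsf{N}$-dimensional, which is exactly what legitimizes the interpolation formula \eqref{t-form} through the $\mathsf{N}$ values $\bar{\mathsf{t}}(\xi_n)$.

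The core step is the separation itself. Sandwiching the eigenvalue equation as $\bra{\mathbf{h}}\bar{\mathsf{T}}^{(\kappa)}(\lambda)\ket{\bar{\mathsf{t}}}=\bar{\mathsf{t}}(\lambda)\braket{\mathbf{h}}{\bar{\mathsf{t}}}$ and evaluating at $\lambda=\xi_n^{(h_n)}$, all terms of the shift actions \eqref{C-left}--\eqref{B-left} drop out except $a=n$, because the common factor $\prod_{b\neq a}\sinh(\lambda-\xi_b^{(h_b)})$ vanishes there unless $a=n$. This collapses the equation to a three-term recursion in the single variable $h_n$, with the remaining $h_{j\neq n}$ as spectators, which is precisely the statement that the slice $(\braket{\mathbf{h}}{\bar{\mathsf{t}}})_{h_n=0}^{2s_n}$ lies in the left nullspace of the tridiagonal matrix $D^{(n)}_{\bar{\mathsf{t}}}$ of \eqref{Dt}; the analogous computation with the right actions \eqref{C-right}--\eqref{B-right} produces the right nullspace relevant to $\bra{\bar{\mathsf{t}}^{(\kappa)}}$. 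A short check, using that $\eta$ is not commensurate with $i\pi$ and that \eqref{cond-inh} holds, shows that the off-diagonal entries $a(\xi_n^{(k)})$ and $d(\xi_n^{(k)})$ occurring in $D^{(n)}_{\bar{\mathsf{t}}}$ are all nonzero, so each $D^{(n)}_{\bar{\mathsf{t}}}$ has at most a one-dimensional kernel.

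From here both inclusions are short. For necessity, a nonzero eigenvector has some nonzero coordinate $\braket{\mathbf{h}}{\bar{\mathsf{t}}}$; were $\det D^{(m)}_{\bar{\mathsf{t}}}\neq 0$ for some $m$, invertibility would force every $h_m$-slice to vanish and hence $\ket{\bar{\mathsf{t}}}=0$, a contradiction, so \eqref{discrete-sys} holds for all $n$. For sufficiency, given a function \eqref{t-form} solving \eqref{discrete-sys} I would build $\ket{\bar{\mathsf{t}}^{(\kappa)}}$ from the one-dimensional null vectors $\mathbf{p}^{(n)}$ as in \eqref{eigenr} (it is nonzero because a nonzero null vector of such a tridiagonal matrix has a nonvanishing first entry, so the $\mathbf{h}=\mathbf{0}$ coefficient survives), and then verify the eigenvalue equation coordinate by coordinate: for each $\mathbf{h}$, both sides of $\bra{\mathbf{h}}\bar{\mathsf{T}}^{(\kappa)}(\lambda)\ket{\bar{\mathsf{t}}}=\bar{\mathsf{t}}(\lambda)\braket{\mathbf{h}}{\bar{\mathsf{t}}}$ are functions in the $\mathsf{N}$-dimensional quasi-periodic space above, and the nullspace relations make them agree at the $\mathsf{N}$ nodes $\xi_1^{(h_1)},\ldots,\xi_{\mathsf{N}}^{(h_{\mathsf{N}})}$, which are pairwise distinct modulo $i\pi$ by \eqref{cond-inh} and hence unisolvent; so the two sides coincide identically. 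Simplicity then follows because, each $D^{(n)}_{\bar{\mathsf{t}}}$ having a one-dimensional kernel, the separated relations force $\braket{\mathbf{h}}{\bar{\mathsf{t}}}$ to factorize as an overall constant times $\prod_n p^{(n)}_{h_n}$, i.e. the eigenspace is one-dimensional.

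The step I expect to be the real obstacle is completeness: the two inclusions give only the set equality between $\Sigma_{\bar{\mathsf{T}}}$ and the set of solutions of \eqref{discrete-sys}, together with one-dimensionality of each eigenspace, but to conclude that these eigenstates actually form a basis (and hence that the spectrum is simple in the strong sense) one must count the solutions of the discrete system and match $\prod_n(2s_n+1)=\dim\mathcal{H}_{\mathsf{N}}$. Establishing this count is delicate because, after substituting \eqref{t-form}, the $\mathsf{N}$ equations $\det D^{(n)}_{\bar{\mathsf{t}}}=0$ become coupled polynomial equations in the $\mathsf{N}$ unknowns $\bar{\mathsf{t}}(\xi_1),\ldots,\bar{\mathsf{t}}(\xi_{\mathsf{N}})$, each of controlled degree, and one has to show that the system has exactly $\prod_n(2s_n+1)$ solutions. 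I would approach this by a degree and triangularity analysis of the $\det D^{(n)}_{\bar{\mathsf{t}}}$ as in \cite{Nic13,Nic13b}, and, as an independent cross-check in the parameter regimes of Proposition~\ref{prop-normal}, by invoking normality of $\bar{\mathsf{T}}^{(\kappa)}$ to guarantee a full eigenbasis and then extending the count to generic parameters by continuity.
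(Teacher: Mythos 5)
Your proof follows essentially the same route as the paper's: act with $\bar{\mathsf{T}}^{(\kappa)}(\lambda)$ at the points $\xi_n^{(h_n)}$ in the SOV basis of Theorem~\ref{th-SOV-basis}, reduce the eigenvalue equation to a union of homogeneous systems with the tridiagonal matrices $D_{\bar{\mathsf{t}}}^{(n)}$, and use the non-vanishing of the off-diagonal entries (equivalently, of the corner minors of Remark~\ref{rem-eigenvector-D}) to obtain one-dimensional nullspaces, both inclusions, and simplicity; your unisolvence argument in the $\mathsf{N}$-dimensional space of $(-1)^{\mathsf{N}-1}$-quasi-periodic trigonometric polynomials is exactly what justifies passing from the $\mathsf{N}$ interpolation nodes back to the full functional identity in the sufficiency direction. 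The ``obstacle'' you flag in your last paragraph is not actually required for the theorem as stated: here ``simple spectrum'' means precisely the geometric one-dimensionality of each eigenspace, which you have already established, and the paper's own proof does not carry out the count of solutions of \eqref{discrete-sys} either, treating diagonalizability as a separate question (settled, for suitable parameters, by the normality statement of Proposition~\ref{prop-normal}).
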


\begin{rem}\label{rem-eigenvector-D}
The  two minors of order $2s_n$ associated respectively with the elements $(2s_n+1,1)$ and $(1,2s_n+1)$ of the matrix $D_{\bar{\mathsf{t}}}^{(n)}$ \eqref{Dt} are both obviously non-zero. Hence, the nullspace of $D_{\bar{\mathsf{t}}}^{(n)}$ satisfying \eqref{discrete-sys} is exactly of dimension 1.
The right nullspace is generated by a vector 
\begin{equation}\label{q-vect}
   \mathbf{q}_{\,\bar{\mathsf{t}}}^{(n)}
= \begin{pmatrix}
   \mathsf{q}_{\,\bar{\mathsf{t}},0}^{(n)} \\ \mathsf{q}_{\,\bar{\mathsf{t}},1}^{(n)} \\ 
   \vdots \\ \mathsf{q}_{\,\bar{\mathsf{t}},2s_n}^{(n)} 
   \end{pmatrix},
   \qquad\text{such that}\quad
   D_{\bar{\mathsf{t}}}^{(n)}\cdot \mathbf{q}_{\,\bar{\mathsf{t}}}^{(n)} =0,
\end{equation}
%
which is used to construct the left eigenstate \eqref{eigenl}. Note that, due to the non-vanishing of the aforementioned minors of $D_{\bar{\mathsf{t}}}^{(n)}$, we have both $\mathsf{q}_{\,\bar{\mathsf{t}},0}^{(n)}\not=0$ and $\mathsf{q}_{\,\bar{\mathsf{t}},2s_n}^{(n)}\not=0$ (so that we can for instance choose $\mathsf{q}_{\,\bar{\mathsf{t}},0}^{(n)}=1$). The coordinates $\mathsf{q}_{\,\bar{\mathsf{t}},h}^{(n)}$, $2\le h \le 2s_n$, of the vector \eqref{q-vect} can simply be obtained from the first one  $\mathsf{q}_{\,\bar{\mathsf{t}},0}^{(n)}$ by means of the following recursion relation:
\begin{equation}\label{rec-y}
 \mathsf{q}_{\,\bar{\mathsf{t}},h+1}^{(n)}
 =-\frac{\bar{\mathsf{t}}(\xi_n^{(h)})}{a(\xi_n^{(h)})}\, \mathsf{q}_{\,\bar{\mathsf{t}},h}^{(n)}
 +\frac{d(\xi_n^{(h)})}{a(\xi_n^{(h)})}\,  \mathsf{q}_{\,\bar{\mathsf{t}},h-1}^{(n)},
 \quad\text{for}\quad 0\le h\le 2s_n-1,
\end{equation}
where we have set $ \mathsf{q}_{\,\bar{\mathsf{t}},-1}^{(n)}=0$.
%
In the same way, the left nullspace of $D_{\bar{\mathsf{t}}}^{(n)}$ is generated by a vector
\begin{equation}\label{p-vect}
    \mathbf{p}_{\,\bar{\mathsf{t}}}^{(n)}\equiv
    \begin{pmatrix}\mathsf{p}_{\,\bar{\mathsf{t}},0}^{(n)}\\ \mathsf{p}_{\,\bar{\mathsf{t}},1}^{(n)}\\ \vdots \\\mathsf{p}_{\,\bar{\mathsf{t}},2s_n}^{(n)}
    \end{pmatrix},
   \qquad\text{such that}\quad
   {}^t \mathbf{p}_{\,\bar{\mathsf{t}}}^{(n)} \cdot D_{\bar{\mathsf{t}}}^{(n)} =0,
\end{equation}
which can be related to $\mathbf{q}_{\,\bar{\mathsf{t}}}^{(n)}$ as
\begin{equation}\label{left-vector}
   \mathsf{p}_{\,\bar{\mathsf{t}},h}^{(n)}
   =(-1)^h\left(\prod_{\ell=0}^{h-1}\!\frac{a(\xi_n^{(\ell)})}{d(\xi_n^{(\ell+1)})}\right)
   \mathsf{q}_{\,\bar{\mathsf{t}},h}^{(n)},
   \qquad
   0\le h\le 2s_n.
\end{equation}
This is due to the fact that ${}^t\! D_{\bar{\mathsf{t}}}^{(n)}$ can be obtained from $D_{\bar{\mathsf{t}}}^{(n)}$ by similarity by the diagonal matrix $X^{(n)}$ with elements
\begin{equation}
   \big[ X^{(n)}\big]_{j,k}=\delta_{j,k}\, \prod_{\ell=0}^{j-2}\left(-\frac{a(\xi_n^{(\ell)})}{d(\xi_n^{(\ell+1)})}\right),
   \qquad 1\le j,k\le 2s_n+1.
\end{equation}
Hence, in fact, both eigenvectors \eqref{eigenl}-\eqref{eigenr} can be written in terms of the vector $\mathbf{q}_{\,\bar{\mathsf{t}}}^{(n)}$ \eqref{q-vect} solution of \eqref{rec-y}.
\end{rem}

\begin{proof}
The proof of this theorem relies on the explicit construction of the transfer matrix eigenvectors $\bra{\bar{\mathsf{t}}^{(\kappa)} } $ (or $\ket{\bar{\mathsf{t}}^{(\kappa)} } $ ) in the SOV basis \eqref{leigen-h} (or \eqref{reigen-h}). From the computation of the matrix element $\bra{\bar{\mathsf{t}}^{(\kappa)} } \bar{\mathsf{T}}^{(\kappa)}(\lambda ) \ket{\mathbf{h}}$ by acting with the transfer matrix evaluated  in the  point $\lambda=\xi_n^{(h_n)}$, $1\le n\le \mathsf{N}$, $0\le h_n\le 2s_n$, on the right and on the left, one obtains a discrete system of equations for the wave-functions $\psi_{\bar{\mathsf{t}}^{(\kappa)} }(\mathbf{h})\equiv \bra{\bar{\mathsf{t}}^{(\kappa)} }\, \mathbf{h}\,\rangle$, which corresponds to a union of homogeneous systems with matrices \eqref{Dt}. The condition for this system to admit a non-zero solution is precisely \eqref{discrete-sys}, which does not depend on $\kappa$. 
The simplicity of the spectrum is then related to the fact that the nullspace of the matrix $D_{\bar{\mathsf{t}}}^{(n)}$ \eqref{Dt} satisfying \eqref{discrete-sys} is exactly of dimension 1, which also means that  the eigenvectors are, up to an overall normalization, given in terms of the vectors \eqref{q-vect} and \eqref{p-vect} as in \eqref{eigenl}-\eqref{eq-left}.
We refer the reader to \cite{Nic13,Nic13b} for more details on the different steps of the proof.
\end{proof}

The SOV discrete sets of equations \eqref{discrete-sys}, however, although interesting for numerically characterizing the transfer matrix spectrum of small-size systems, are inadequate in their present form for the study of large-size systems and the consideration of the homogeneous limit of the model.
Therefore, as announced in Introduction, we shall in the next sections discuss its reformulation in terms of {\em continuous} functional equations of Baxter's type, i.e. in terms of systems of Bethe-type equations.

As already noticed by Sklyanin \cite{Skl90,Skl92}, the set of $\mathsf{N}$ finite-difference discrete systems  characterizing the SOV spectrum of the antiperiodic transfer matrix (i.e. conditioning the existence of a non-zero vector within the nullspace of each of the matrices \eqref{Dt}) and leading to the explicit construction of its eigenvectors (see \eqref{rec-y}) are identical, in their form, to some discrete version of Baxter's famous $T$-$Q$ second-order finite-difference functional equation involving the eigenvalue of the transfer matrix together with some convenient (entire) function $Q(\lambda)$ \cite{Bax72,Bax82L}.
More precisely, if for some entire function ${\bar{\mathsf{t}}}(\lambda)$ of the form \eqref{t-form}, we find a function $Q(\lambda)$ such that the system of equations
\begin{equation}
 {\bar{\mathsf{t}}}(\xi_n^{(h_n)})\, Q (\xi_n^{(h_n)})
  =- a(\xi_n^{(h_n)})\, Q (\xi_n^{(h_n+1)})
    + d(\xi_n^{(h_n)})\, Q (\xi_n^{(h_n-1)}),
    \quad
    h_n\in\{0,1\ldots,2s_n\},
   \label{eq-left}
\end{equation}
is satisfied for each $n\in\{1,\ldots,\mathsf{N}\}$, it means that, for each $n\in\{1,\ldots,\mathsf{N}\}$, the  vector
\begin{equation}\label{def-Qn}
  \mathbf{Q}^{(n)}\equiv
  \begin{pmatrix}
  Q(\xi_n^{(0)}) \\ Q(\xi_n^{(1)}) \\ \vdots \\ Q(\xi_n^{(2s_n)})
  \end{pmatrix}
\end{equation}
belongs to the nullspace of the matrix $D_{\bar{\mathsf{t}}}^{(n)}$ \eqref{Dt}.
If moreover all the vectors \eqref{def-Qn} are non-zero, then \eqref{discrete-sys} is satisfied and ${\bar{\mathsf{t}}}(\lambda)\in\Sigma_{\bar{\mathsf{T}}}$. Each of the vectors $\mathbf{Q}^{(n)}$ \eqref{def-Qn} being then proportional to the corresponding vector $\mathbf{q}_{\,\bar{\mathsf{t}}}^{(n)}$ \eqref{q-vect}, they can therefore be directly used for the construction of the eigenvectors \eqref{eigenl} and \eqref{eigenr}.

In general, for a given ${\bar{\mathsf{t}}}(\lambda)\in\Sigma_{\bar{\mathsf{T}}}$, one can exhibit many such functions $Q(\lambda)$: it is enough that the latter interpolates, at each point $\xi_n^{(h_n)}$, the coordinate $\mathsf{q}_{\bar{\,\mathsf{t}},h_n}^{(n)}$ (up to a possible proportionality constant $c_n$) of the vector \eqref{q-vect}.
Solutions of Baxter's homogeneous $T$-$Q$ functional equation (i.e. of \eqref{hom-TQ} where, in the present case, $\mathsf{t}(\lambda)$, $\mathsf{a}(\lambda)$, $\mathsf{d}(\lambda)$ and $\epsilon_q$ are respectively identified with $\bar{\mathsf{t}}(\lambda)$,  $-a(\lambda)$, $d(\lambda)$ and $-\eta$) provide such functions $Q(\lambda)$ (at the condition that these solutions do not vanish on a whole string of values $\xi_n^{(0)},\xi_n^{(1)},\ldots,\xi_n^{(2s_n)}$).
If one is able to clearly identify their functional form\footnote{which is for the moment not the case for all integrable models, especially for problems with general integrable boundary conditions.}, they are probably the most convenient ones for the writing of a system of Bethe equations.
They are notwithstanding not the only ones.
It was notably proposed recently, in a series of papers \cite{CaoYSW13a,CaoYSW13b,CaoCYSW14,LiCYSW14,CaoYSW14}, that one could as well try to characterize the transfer matrix spectrum by means of solutions of some inhomogeneous version of Baxter's functional equation. The advantage of these formulations
is that one has in principle more freedom to fix the functional form of the corresponding solution $Q(\lambda)$. One can therefore choose to consider only solutions with the same functional form as the functions defining the model (i.e., in the case of the XXZ chain, functions of the form \eqref{Q-2pi}), independently of its boundary conditions.
Hence, before discussing in Section~\ref{sec-hom} the reformulation of Theorem~\ref{th-eigen-t} in terms of the solutions to the usual (homogeneous) $T$-$Q$ functional equation and the completeness of this characterization, we find it instructive, in Section~\ref{sec-func}, to also discuss these problems for the  solutions to some higher-spin generalization of the  inhomogeneous functional equations recently proposed in \cite{ZhaLCYSW14}.
Note at this point that a similar reformulation of the SOV spectrum characterization has already been obtained in different situations, in terms of homogeneous $T$-$Q$ functional equations in \cite{NicT10,Nic10a,Nic11,GroN12} for cyclic representations of the trigonometric 6-vertex Yang-Baxter algebra, and in terms of inhomogeneous  $T$-$Q$ functional equations in \cite{KitMN14} for open XXZ spin-1/2 chains.



\section{Reformulation of the SOV spectrum in terms of solutions of an inhomogeneous functional equation}
\label{sec-func}

As already mentioned, inhomogeneous versions of Baxter's $T$-$Q$ functional equations appeared first in the literature in the context of the so-called {\em off-diagonal Bethe ansatz} (ODBA) in \cite{CaoYSW13a}, as an ansatz to characterize the eigenvalues\footnote{A construction of the eigenstates has also been recently proposed in \cite{ZhaLCYSW14} for the spin-1/2 antiperiodic XXZ chain and the open XXX chain. Let us remark that the approach described in \cite{ZhaLCYSW14} requires the existence of a SOV or of an $F$-basis \cite{MaiS00}. In fact, the eigenstates presented there just coincide with those previously constructed by SOV in \cite{Nic13} and \cite{Nic12,FalKN14}. It is nevertheless worth noticing that  \cite{ZhaLCYSW14} provides an interesting rewriting of theses eigenstates, the interest being in the apparent possibility to take the homogeneous limit without encountering fictitious divergences, as shown for small chains in \cite{ZhaLCYSW14}.} of the antiperiodic XXZ spin-1/2 chains (other quantum integrable models have then been considered in this framework, see for instance \cite{CaoYSW13b,CaoCYSW14,LiCYSW14}).
Various inhomogeneous $T$-$Q$ functional equations have there been proposed to meet the following criterium: when particularized at some special points (the inhomogeneity parameters of the model) they induce for their $T$-solutions a set of discrete identities which are known to be satisfied by all transfer matrix eigenvalues.
As already mentioned, the initial idea leading to the addition of an inhomogeneous term in the $T$-$Q$ equation comes from the desire to allow solutions of the form \eqref{Q-2pi} which, in the antiperiodic case, obviously do not satisfy the usual homogeneous $T$-$Q$ equation. This could be {\em a priori} interesting for situations for which the functional form of the $Q$-operator eigenvalues is not known, in particular for chains with non-diagonal boundary conditions (see also \cite{BelC13} where such inhomogeneous equations appear in another context).
A natural question is therefore whether, when one considers solutions of the form \eqref{Q-2pi} (with some eventual additional requirement on the degree) of some given inhomogeneous $T$-$Q$ equation, one obtains a {\em full characterization} of the transfer matrix spectrum. 

To answer this question, one has in principle to demonstrate two main statements:
\begin{itemize}[topsep=0.5pt,itemsep=-0.5ex,partopsep=1ex,parsep=1ex]
\item[{\sf (a)}] that the set of solutions $\Sigma_{\mathsf{T}}'$ to the aforementioned system of discrete identities (together with some additional restrictions on the functional form of the solution) exactly coincides  with the set $\Sigma_{\bar{\mathsf{T}}}$ of the transfer matrix eigenvalues;
\item[{\sf (b)}] that the set of the $T$-solutions  $\Sigma_{\mathsf{T}}''$ to the proposed inhomogeneous $T$-$Q$ functional equation exactly coincides with $\Sigma_{\mathsf{T}}'$.
\end{itemize}
Indeed, unless these two points are proven, one can neither state that the inhomogeneous $T$-$Q$ equation describes the complete set of transfer matrix eigenvalues, nor that each $T$-solution effectively corresponds to some transfer matrix eigenvalue.
Let us remark at this point that the discrete identities in question appear in the ODBA framework only as a set of {\em necessary conditions} that should be fulfilled by all transfer matrix eigenvalues\footnote{They are just, in the spin-1/2 case, the consequence on the eigenvalues of the inversion formula for the transfer matrix, known since the papers \cite{KulS82,IzeK82}, while for more general representations they coincides with the particularization at the inhomogeneity parameters of the well known fusion identities  \cite{KulS82,Res83,KirR86a,KirR87}.}, so that this approach ensures only the inclusion  $\Sigma_{\bar{\mathsf{T}}}\subset\Sigma_{\mathsf{T}}'$.
Indeed, without further studies, the knowledge on the spectrum provided by ODBA does not allow one to guarantee  that the latter is completely characterized by these conditions\footnote{In particular, we would like to point out that the arguments presented in the recent preprint \textsf{arXiv:1409.5303} are incomplete. There, the authors pretend to have proven point {\sf (a)} just by comparing (and this is the heart of their argumentation) the maximum number of solutions to the aforementioned set of discrete identities with the dimension of the space of states (which happen to be the same). By making this comparison, the authors implicitly assume that the number of independent eigenvalues of the corresponding transfer matrix exactly coincides with the dimension of the Hilbert space. In other words, they implicitly assume {\sf (1)} that the transfer matrix is diagonalizable, and {\sf (2)} that its spectrum is non-degenerate. No proof of these crucial properties are provided within the ODBA approach. Whereas the first one may be deduced from simple considerations such as in Proposition~\ref{prop-normal}, the proof of the second one requires instead more advanced arguments, such as those provided by SOV for generic inhomogeneous models. Let us by the way point out that, for the two examples explicitly considered in that paper, i.e. for spin 1/2 inhomogeneous XXZ chains with antiperiodic and general open boundary conditions, the statements of their Theorem 1 and Corollary 2 had been previously proven in the SOV framework respectively in \cite{Nic13} and \cite{Nic12,FalKN14} ,whereas the proof of Corollary 4 was presented in \cite{KitMN14}.}.
On the contrary, the SOV approach enables us to affirm the {\em simplicity} of the transfer matrix spectrum, a crucial property will leads to its complete characterization and therefore provides a proof of statement {\sf (a)}: this is precisely the content of Theorem~\ref{th-eigen-t}, the discrete identities used in \cite{CaoYSW13a} coinciding with the spin-1/2 particularization of \eqref{discrete-sys}.
The aim of the present section is now to see whether one can also prove point {\sf (b)}, for a one-parameter family of ``minimal" inhomogeneous $T$-$Q$ functional equations generalizing one of the equations recently proposed in \cite{ZhaLCYSW14}.
Our result can be stated as follows.

\begin{theorem}\label{th-inh-eq}
Under the condition \eqref{cond-inh}, the following propositions are equivalent:
\begin{enumerate}
   \item\label{cond1}
   $\mathsf{\bar{t}}(\lambda)$ is an eigenvalue function of the $\kappa$-twisted antiperiodic transfer matrix $\bar{\mathsf{T}}^{(\kappa)}(\lambda)$ for any $\kappa\in\mathbb{C}\setminus\{0\}$
   (i.e. $\mathsf{\bar{t}}(\lambda)\in\Sigma_{\bar{\mathsf{T}}}$).
   \item\label{cond2}
   $\mathsf{\bar{t}}(\lambda)$ is an entire function of $\lambda$ and, for some $\alpha\in\mathbb{C}$, there exists a function $Q(\lambda)$ of the form
   \begin{equation}\label{Q-Ns}
      Q(\lambda)=\prod_{j=1}^{\mathsf{N}_s}\sinh (\lambda -\lambda_j),
      \qquad
      \lambda_1,\ldots,\lambda_{\mathsf{N}_s}\in\mathbb{C},
      \qquad
      \mathsf{N}_s\equiv 2\sum_{a=1}^{\mathsf{N}} s_{a},
   \end{equation}
   such that $Q(\xi_j^{(0)})\not=0$ for $1\le j\le \mathsf{N}$, and such that $\mathsf{\bar{t}}(\lambda)$ and $Q(\lambda)$ satisfy the following inhomogeneous finite-difference functional equation:
   \begin{equation}\label{eq-inh}
      \mathsf{\bar{t}}(\lambda)\, Q(\lambda)
      =-e^{\lambda-\alpha}\, a(\lambda)\, Q(\lambda-\eta)
      +e^{-\lambda-\eta+\alpha}\, d(\lambda)\, Q(\lambda+\eta)
      +  F_{\alpha+\bar\lambda}(\lambda),
   \end{equation}
   where $\bar\lambda$ stands for the sum of the roots $\bar{\lambda}\equiv\sum_{j=1}^{\mathsf{N}_s}\lambda_j$ of the function $Q(\lambda)$. The function $F_x(\lambda)$ is defined as
   \begin{equation}\label{F-Q}
     F_x(\lambda)=2 e^{-\frac{\mathsf{N}_s+1}{2}\eta}
     \sinh\!\Bigg(
     \lambda-x +\sum_{j=1}^\mathsf{N}\sum_{h_j=1}^{2s_j}\xi_j^{(h_j)}+\frac{\mathsf{N}_s+1}{2}\eta\!
     \Bigg) 
     \prod_{j=1}^\mathsf{N}\prod_{h_j=0}^{2s_j}\!\sinh\!\big(\lambda-\xi_j^{(h_j)}\big) .
   \end{equation}
   \item\label{cond3}
   $\mathsf{\bar{t}}(\lambda)$ is an entire function of $\lambda$ and, for any $\alpha\in\mathbb{C}$, there exists a function $Q(\lambda)$ of the form \eqref{Q-Ns} such that $\mathsf{\bar{t}}(\lambda)$ and $Q(\lambda)$ satisfy the inhomogeneous functional equation \eqref{eq-inh}-\eqref{F-Q}.
For all but at most finitely many values of $e^\alpha$, the corresponding solution is unique. For all but finitely many values of $e^\alpha$, it is also such that  $Q(\xi_j^{(0)})\not=0$ for $1\le j\le \mathsf{N}$.
\end{enumerate}
\end{theorem}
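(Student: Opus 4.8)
The plan is to establish the cyclic chain of implications $\ref{cond1}\Rightarrow\ref{cond3}\Rightarrow\ref{cond2}\Rightarrow\ref{cond1}$, the step $\ref{cond3}\Rightarrow\ref{cond2}$ being immediate once one selects a value of $e^\alpha$ avoiding the finite exceptional set. The whole argument rests on one elementary but decisive observation. The inhomogeneous term $F_{\alpha+\bar\lambda}(\lambda)$ of \eqref{F-Q} carries the factor $\prod_{j}\prod_{h_j}\sinh(\lambda-\xi_j^{(h_j)})$, so it vanishes at every special point $\xi_n^{(h_n)}$ of \eqref{xi-shift}. Hence, evaluating \eqref{eq-inh} at $\lambda=\xi_n^{(h_n)}$ and using $\xi_n^{(h_n)}\mp\eta=\xi_n^{(h_n\pm1)}$, the functional equation collapses onto the discrete SOV relations \eqref{eq-left}, up to the two spurious prefactors $e^{\lambda-\alpha}$ and $e^{-\lambda-\eta+\alpha}$. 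These prefactors are absorbed by a diagonal gauge $Q(\xi_n^{(h)})=G_n^{(h)}\,\widetilde Q_n^{(h)}$ obeying $G_n^{(h)}/G_n^{(h-1)}=e^{\alpha-\xi_n^{(h-1)}}$; one checks that this single recursion is compatible with \emph{both} prefactors precisely because $\xi_n^{(h-1)}=\xi_n^{(h)}+\eta$, so that the gauged values $\widetilde Q_n^{(h)}$ satisfy exactly \eqref{eq-left}, i.e.\ span the nullspace of $D_{\bar{\mathsf{t}}}^{(n)}$ \eqref{Dt}. The two boundary rows $h_n=0,2s_n$ cause no trouble since $d(\xi_n^{(0)})=a(\xi_n^{(2s_n)})=0$.

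For $\ref{cond2}\Rightarrow\ref{cond1}$ I would first show that $\bar{\mathsf{t}}$ has the interpolation form \eqref{t-form}. Writing every term of \eqref{eq-inh} as a Laurent polynomial in $e^\lambda$, the term $e^{\lambda-\alpha}a(\lambda)Q(\lambda-\eta)$ and the function $F_{\alpha+\bar\lambda}(\lambda)$ each reach the extreme Fourier modes $e^{\pm(\mathsf{N}+\mathsf{N}_s+1)\lambda}$; a direct comparison of leading coefficients — this is where the choice $x=\alpha+\bar\lambda$, tying the shift in $F_x$ to the sum of roots $\bar\lambda=\sum_j\lambda_j$, is used — shows that these extreme modes cancel on the right-hand side. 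Consequently $\bar{\mathsf{t}}(\lambda)Q(\lambda)$ is a trigonometric polynomial of degree at most $\mathsf{N}+\mathsf{N}_s-1$, and since $\bar{\mathsf{t}}$ is entire with $Q$ of the form \eqref{Q-Ns}, $\bar{\mathsf{t}}$ is itself a trigonometric polynomial of degree at most $\mathsf{N}-1$, hence of the form \eqref{t-form}. The bridge observation then produces null vectors of each $D_{\bar{\mathsf{t}}}^{(n)}$, which are nonzero because $Q(\xi_n^{(0)})\neq0$ while $\mathsf{q}_{\,\bar{\mathsf{t}},0}^{(n)}\neq0$ by Remark~\ref{rem-eigenvector-D}; thus \eqref{discrete-sys} holds for all $n$ and $\bar{\mathsf{t}}\in\Sigma_{\bar{\mathsf{T}}}$ by Theorem~\ref{th-eigen-t}.

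For the substantive direction $\ref{cond1}\Rightarrow\ref{cond3}$ I would reconstruct $Q$ by interpolation. Starting from the null vectors $\mathbf{q}_{\,\bar{\mathsf{t}}}^{(n)}$ of Theorem~\ref{th-eigen-t}, I impose $Q(\xi_n^{(h)})=c_n\,G_n^{(h)}\,\mathsf{q}_{\,\bar{\mathsf{t}},h}^{(n)}$ at the $\mathsf{N}+\mathsf{N}_s$ special points, the $c_n$ being free normalizations. Viewing the $\mathsf{N}_s+1$ coefficients of $Q$ together with the $\mathsf{N}$ scalars $c_n$ as unknowns, this is a homogeneous linear system of $\mathsf{N}+\mathsf{N}_s$ equations in $\mathsf{N}+\mathsf{N}_s+1$ unknowns, with a solution space of dimension at least one; for all but finitely many $e^\alpha$ the rank is maximal, the solution is unique up to scale and is of degree exactly $\mathsf{N}_s$ with all $c_n\neq0$ and $Q(\xi_j^{(0)})\neq0$ (each genericity condition being the non-vanishing of a determinant that is a quasi-polynomial in $e^\alpha$, not identically zero). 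By construction the difference $\Delta(\lambda):=\bar{\mathsf{t}}(\lambda)Q(\lambda)+e^{\lambda-\alpha}a(\lambda)Q(\lambda-\eta)-e^{-\lambda-\eta+\alpha}d(\lambda)Q(\lambda+\eta)-F_{\alpha+\bar\lambda}(\lambda)$ vanishes at all $\mathsf{N}+\mathsf{N}_s$ special points; its two extreme Fourier modes vanish by the same leading-coefficient cancellation as above; and the remaining $\mathsf{N}+\mathsf{N}_s$ modes form a square generalized Vandermonde system in the variables $e^{2\xi_n^{(h)}}$, invertible because \eqref{cond-inh} forces the $\xi_n^{(h)}$ to be pairwise distinct modulo $i\pi$. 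Hence $\Delta\equiv0$ and \eqref{eq-inh} holds identically.

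The hard part is precisely the calibration that makes the continuous equation equivalent to its discretization: one must verify that $F_{\alpha+\bar\lambda}$ is tuned so that the two highest and lowest Fourier modes of the right-hand side of \eqref{eq-inh} cancel automatically for every admissible $Q$, the self-referential dependence on $\bar\lambda=\sum_j\lambda_j$ being exactly what enforces this cancellation. Once this calibration and the distinctness of the $\xi_n^{(h)}$ modulo $i\pi$ are in place, the mode count is tight — $\mathsf{N}+\mathsf{N}_s$ free modes against $\mathsf{N}+\mathsf{N}_s$ interpolation points — and the equivalence is forced. The remaining delicate point is purely quantitative: controlling the finite exceptional sets of $e^\alpha$ in $\ref{cond3}$ by exhibiting the relevant rank and non-vanishing conditions as quasi-polynomials in $e^\alpha$ that do not vanish identically.
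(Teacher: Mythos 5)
Your strategy is essentially the paper's own: the same cyclic chain of implications, the same gauge recursion $G_n^{(h)}/G_n^{(h-1)}=e^{\alpha-\xi_n^{(h-1)}}$ turning the evaluation of \eqref{eq-inh} at the points $\xi_n^{(h_n)}$ into the nullspace conditions for the matrices $D_{\bar{\mathsf{t}}}^{(n)}$ of \eqref{Dt}, the same cancellation of the two extreme Fourier modes enforced by the calibration $x=\alpha+\bar\lambda$ (the paper's Lemma~\ref{lem-ZQ}), and the same count of $\mathsf{N}+\mathsf{N}_s$ interpolation conditions against a solution space of dimension at least one. Two points, however, are genuine gaps rather than deferred bookkeeping. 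First, item~\ref{cond3} claims the existence, for \emph{every} $\alpha\in\mathbb{C}$, of a solution of the exact form \eqref{Q-Ns}, i.e.\ an element of $\hat{\mathbb{T}}_{\mathsf{N}_s}[\lambda]$ with both extreme Fourier coefficients non-zero; your interpolation argument only yields a non-zero $Q\in\mathbb{T}_{\mathsf{N}_s}[\lambda]$, and you relegate ``degree exactly $\mathsf{N}_s$'' to the generic-$e^\alpha$ case, which proves a weaker statement. The paper closes this for all $\alpha$ through Lemmas~\ref{lem-hom} and~\ref{lem-deg}: if an extreme coefficient of $Q$ vanished, the inhomogeneous term would be a trigonometric polynomial of too low a degree vanishing at too many points, hence identically zero, and the resulting homogeneous equation has no non-trivial trigonometric-polynomial solution because its two sides have incompatible leading asymptotics as $\lambda\to\pm\infty$.

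Second, the assertion that the conditions ``the solution is unique'' and ``$Q(\xi_j^{(0)})\neq0$'' each fail for at most finitely many $e^\alpha$ is not the non-vanishing of an obviously non-trivial quasi-polynomial: by Cramer's rule $Q(\xi_j^{(0)})$ is a rational function of $e^\alpha$, and one must exclude that it vanishes identically. This is where the paper spends most of its effort. For uniqueness, the relevant determinant $\det_\mathsf{N}[\mathcal{M}_{\bar{\mathsf{t}}}(\beta,\zeta)]$ in \eqref{det-M} is shown to have degree exactly $\mathsf{N}_s$ in $\beta=e^\alpha$, with non-zero leading coefficient precisely because $\mathsf{q}_{\,\bar{\mathsf{t}},2s_i}^{(i)}\neq0$ (Remark~\ref{rem-eigenvector-D}). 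For admissibility, assuming $Q(\xi_n^{(0)})=0$ forces $Q$ to vanish on the whole string $\xi_n^{(0)},\dots,\xi_n^{(2s_n)}$; factoring these zeros out yields the reduced inhomogeneous equation \eqref{eq-Qtilde} for the quotient $Q_n$, whose solvability is governed by a further determinant \eqref{mat-sys-bis}, shown to be a non-zero polynomial of degree $\mathsf{N}_n+1$ in $e^\alpha$. You explicitly flag this as the ``remaining delicate point'', but without it the finiteness of the exceptional sets in item~\ref{cond3} --- and hence the one-to-one correspondence of Corollary~\ref{cor-Bethe} --- is not established.
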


\begin{rem}
Note that, contrary to what happens in the periodic case (see for instance \cite{TarV95}), the deformation parameter $\alpha$ is not related to the twist parameter $\kappa$. It is only introduced here for convenience.
\end{rem}

\begin{rem}\label{rem-xi_j}
The existence of a non-zero solution $Q(\lambda)$ of the form \eqref{Q-Ns} to \eqref{eq-inh} for $\bar{\mathsf{t}}(\lambda)$ entire is {\em a priori} not enough to ensure that $\bar{\mathsf{t}}(\lambda)$ belongs to $\Sigma_{\bar{\mathsf{T}}}$: in the case where one of the vectors \eqref{def-Qn} vanishes, it does not provide us with a non-zero element of the nullspace of the corresponding matrix \eqref{Dt}, and therefore does not imply the vanishing of the corresponding determinant.
This explains why we introduce in Theorem~\ref{th-inh-eq} an additional condition, namely the fact that  $Q(\xi_j^{(0)})\not=0$ for all $j\in\{1,\ldots,\mathsf{N}\}$. The latter is indeed a  sufficient condition to ensure that the corresponding $\bar{\mathsf{t}}(\lambda)$ satisfies the discrete system  \eqref{discrete-sys}.
Such solutions will be called {\em admissible}.
\end{rem}

This theorem enables us to completely characterize the spectrum and eigenstates of the ($\kappa$-twisted) antiperiodic transfer matrix in terms of the admissible solutions $\Lambda_\alpha=\{\lambda_1,\ldots,\lambda_{\mathsf{N}_s}\}$ of the system of generalized Bethe equations following from the cancellation conditions of all apparent poles at $\lambda=\lambda_j$, $1\le j\le \mathsf{N}_s$,  of the function
\begin{multline}\label{bethe-inh}
\bar{\mathsf{t}}_{\Lambda_\alpha}(\lambda)
=-e^{\lambda-\alpha} \, a(\lambda)\, \prod_{j=1}^{\mathsf{N}_s}\frac{\sinh(\lambda-\lambda_j-\eta)}{\sinh(\lambda-\lambda_j)}\\
   +e^{\alpha-\lambda-\eta}\,d(\lambda)\,\prod_{j=1}^{\mathsf{N}_s}\frac{\sinh(\lambda-\lambda_j+\eta)}{\sinh(\lambda-\lambda_j)}
   +\frac{F_{\alpha+\bar\lambda}(\lambda)}{\prod_{j=1}^{\mathsf{N}_s}\sinh(\lambda-\lambda_j)},
\end{multline}
where $F_{\alpha+\bar\lambda}$ is defined by \eqref{F-Q} in terms of $\bar{\lambda}\equiv\sum_{j=1}^{\mathsf{N}_s}\lambda_j$.

\begin{corollary}\label{cor-Bethe}
Under the condition \eqref{cond-inh}, for all but finitely many values of $e^\alpha$, there exists a one-to-one correspondence between $\Sigma_{\bar{\mathsf{T}}}$ and the set $\Sigma^{(\alpha)}_{\mathrm{InBAE}}$ of solutions $\Lambda_\alpha=\{\lambda_1,\ldots,\lambda_{\mathsf{N}_s}\}$ 
such that
\begin{equation}
\forall j\in\{1,\ldots,\mathsf{N}_s\},\qquad
\Im\lambda_j\in[0,\pi[ \quad\text{and}\quad
\lambda_j\not=\xi_k^{(0)}\hspace{-3mm}\mod i\pi,\ \, 1\le k\le \mathsf{N},
\end{equation}
to the system of $\mathsf{N}_s$ inhomogeneous equations of Bethe type following from the condition that the function \eqref{bethe-inh} is entire.
The eigenvalue $\bar{\mathsf{t}}_{\Lambda_\alpha}(\lambda)\in\Sigma_{\bar{\mathsf{T}}}$ associated to $\Lambda_\alpha\in\Sigma^{(\alpha)}_{\mathrm{InBAE}}$ is then given by the entire function \eqref{bethe-inh}.
For any $\kappa\in\mathbb{C}\setminus\{0\}$, the corresponding left and right $\bar{\mathsf{T}}^{(\kappa)}$-eigenstates are respectively given by
\begin{align}
  &\bra{ {\Lambda_\alpha} } 
  = \sum_{\mathbf{h}}
     \prod_{j=1}^\mathsf{N} \left\{\kappa^{h_j}\, Q_{\Lambda_\alpha}(\xi_j^{(h_j)})\right\}\,
     \prod_{i<j}\sinh(\xi_j^{(h_j)}-\xi_i^{(h_i)})\, \bra{\mathbf{h} },
     \label{eigenl-bis}\\
   &\ket{ {\Lambda_\alpha} } 
  = \sum_{\mathbf{h}}
     \prod_{j=1}^\mathsf{N} \left\{(-\kappa)^{-h_j}\!
     \left(\prod_{\ell=0}^{h_j-1}\!\frac{a(\xi_n^{(\ell)})}{d(\xi_n^{(\ell+1)})}\right)\!
     Q_{\Lambda_\alpha}(\xi_j^{(h_j)})\right\}
     \prod_{i<j}\sinh(\xi_j^{(h_j)}-\xi_i^{(h_i)})\, \ket{\mathbf{h} },  \label{eigenr-bis}
\end{align}
in terms of a function $Q_{\Lambda_\alpha}(\lambda)$ satisfying \eqref{eq-left}, and which can be expressed in terms of the set of Bethe roots $\Lambda_\alpha$ as follows:
\begin{equation}\label{Q-t}
   Q_{\Lambda_\alpha}(\lambda)=e^{-\frac{\lambda(\lambda+\eta-2\alpha)}{2\eta}}
   \prod_{j=1}^{\mathsf{N}_s}\sinh(\lambda-\lambda_j).
\end{equation}
\end{corollary}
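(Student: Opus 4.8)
My plan is to obtain Corollary~\ref{cor-Bethe} as a repackaging of Theorem~\ref{th-inh-eq} (which I assume) into the language of Bethe roots, together with the eigenstate formulas of Theorem~\ref{th-eigen-t}; the analytic heavy lifting (existence, uniqueness and admissibility of $Q$) is already contained in Theorem~\ref{th-inh-eq}. First I would fix the dictionary between a set $\Lambda_\alpha=\{\lambda_1,\dots,\lambda_{\mathsf{N}_s}\}$ with $\Im\lambda_j\in[0,\pi[$ and a function $Q(\lambda)=\prod_{j=1}^{\mathsf{N}_s}\sinh(\lambda-\lambda_j)$ of the form \eqref{Q-Ns}. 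Since $\sinh$ is $2i\pi$-periodic and changes sign under $\lambda\mapsto\lambda+i\pi$, the zeros of a given $Q$ reduced to the strip $\Im\in[0,\pi[$ form a well-defined multiset $\Lambda_\alpha$, and $\Lambda_\alpha\mapsto\prod_j\sinh(\lambda-\lambda_j)$ is its inverse; under this dictionary the constraint $\lambda_j\neq\xi_k^{(0)}\bmod i\pi$ is precisely the admissibility $Q(\xi_k^{(0)})\neq0$ of Remark~\ref{rem-xi_j}. I would then note that the function \eqref{bethe-inh} is exactly the functional equation \eqref{eq-inh} divided by $Q(\lambda)$: its only possible singularities are simple poles at $\lambda=\lambda_j\bmod i\pi$ (the term $F_{\alpha+\bar\lambda}$ being entire), and multiplying \eqref{bethe-inh} back by $Q$ shows that $\bar{\mathsf{t}}_{\Lambda_\alpha}$ is entire if and only if the pair $(\bar{\mathsf{t}}_{\Lambda_\alpha},Q)$ satisfies \eqref{eq-inh}. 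Thus $\Lambda_\alpha\in\Sigma^{(\alpha)}_{\mathrm{InBAE}}$ if and only if $(\bar{\mathsf{t}}_{\Lambda_\alpha},Q)$ is an admissible solution of \eqref{eq-inh}.

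Next I would establish the one-to-one correspondence using Theorem~\ref{th-inh-eq}. In the forward direction, any $\Lambda_\alpha\in\Sigma^{(\alpha)}_{\mathrm{InBAE}}$ yields an entire $\bar{\mathsf{t}}_{\Lambda_\alpha}$ and an admissible $Q$ solving \eqref{eq-inh}, so the implication (condition~\ref{cond2} $\Rightarrow$ condition~\ref{cond1}) gives $\bar{\mathsf{t}}_{\Lambda_\alpha}\in\Sigma_{\bar{\mathsf{T}}}$, which is then automatically of the form \eqref{t-form}. In the reverse direction, given $\bar{\mathsf{t}}\in\Sigma_{\bar{\mathsf{T}}}$, for all but finitely many $e^\alpha$ the implication \ref{cond1}$\Rightarrow$\ref{cond3} furnishes a \emph{unique} admissible $Q$ of the form \eqref{Q-Ns} solving \eqref{eq-inh}; its strip-zeros define $\Lambda_\alpha$, and dividing \eqref{eq-inh} by $Q$ identifies $\bar{\mathsf{t}}$ with \eqref{bethe-inh}, whence $\Lambda_\alpha\in\Sigma^{(\alpha)}_{\mathrm{InBAE}}$. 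These two maps are mutually inverse: $\bar{\mathsf{t}}\mapsto\Lambda_\alpha\mapsto\bar{\mathsf{t}}_{\Lambda_\alpha}=\bar{\mathsf{t}}$ by the division identity, and $\Lambda_\alpha\mapsto\bar{\mathsf{t}}_{\Lambda_\alpha}\mapsto\Lambda_\alpha$ because the uniqueness clause of item~\ref{cond3} forces the admissible $Q$ attached to $\bar{\mathsf{t}}_{\Lambda_\alpha}$ to be exactly $\prod_j\sinh(\lambda-\lambda_j)$. A point I would emphasize is that $F_{\alpha+\bar\lambda}$ does not scale with $Q$, so \eqref{eq-inh} is affine rather than linear in $Q$ and pins down its normalization rigidly; this is what turns ``$Q$ up to a constant'' into a genuine bijection. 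The excluded finite set of $e^\alpha$ is the union of the exceptional sets appearing in item~\ref{cond3}, which is harmless since $\Sigma_{\bar{\mathsf{T}}}$ is finite.

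Finally I would derive the eigenstate formulas \eqref{eigenl-bis}--\eqref{eigenr-bis}, the key being to recognize $Q_{\Lambda_\alpha}(\lambda)=g(\lambda)\,Q(\lambda)$ of \eqref{Q-t}, with $g(\lambda)=e^{-\lambda(\lambda+\eta-2\alpha)/2\eta}$, as a solution of the \emph{homogeneous} discrete system \eqref{eq-left}. Evaluating \eqref{eq-inh} at $\lambda=\xi_n^{(h_n)}$, where $F_{\alpha+\bar\lambda}$ vanishes and $\xi_n^{(h_n)}\mp\eta=\xi_n^{(h_n\pm1)}$, and multiplying by $g(\xi_n^{(h_n)})$, one checks the two gauge identities $e^{\lambda-\alpha}g(\lambda)/g(\lambda-\eta)=1$ and $e^{-\lambda-\eta+\alpha}g(\lambda)/g(\lambda+\eta)=1$ (a one-line computation with the quadratic exponent), which convert the discretized inhomogeneous equation into \eqref{eq-left} for $Q_{\Lambda_\alpha}$. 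Hence the vector $\big(Q_{\Lambda_\alpha}(\xi_n^{(h_n)})\big)_{0\le h_n\le 2s_n}$ lies in the one-dimensional nullspace of $D_{\bar{\mathsf{t}}}^{(n)}$ and is therefore proportional, string by string, to $\mathbf{q}_{\bar{\mathsf{t}}}^{(n)}$ of \eqref{q-vect}, with nonzero proportionality constants by admissibility. Substituting $Q_{\Lambda_\alpha}(\xi_n^{(h_n)})\propto\mathsf{q}_{\bar{\mathsf{t}},h_n}^{(n)}$ into the SOV eigenvectors \eqref{eigenl}--\eqref{eigenr}, and using \eqref{left-vector} to trade $\mathbf{p}_{\bar{\mathsf{t}}}^{(n)}$ for $\mathbf{q}_{\bar{\mathsf{t}}}^{(n)}$ in the right eigenstate (which produces the factor $(-\kappa)^{-h_j}\prod_{\ell=0}^{h_j-1}a(\xi_n^{(\ell)})/d(\xi_n^{(\ell+1)})$, since $(-1)^{h_j}\kappa^{-h_j}=(-\kappa)^{-h_j}$), reproduces \eqref{eigenl-bis}--\eqref{eigenr-bis} up to the overall $\mathbf{h}$-independent normalization $\prod_n c_n$.

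The main obstacle is not computational but conceptual: the bijectivity rests entirely on the uniqueness statement of item~\ref{cond3} of Theorem~\ref{th-inh-eq} combined with the rigidity of the normalization of $Q$ induced by the inhomogeneous term, and one must verify that the set of $e^\alpha$ for which uniqueness or admissibility fails is finite and can be chosen uniformly over the finite spectrum $\Sigma_{\bar{\mathsf{T}}}$. A secondary technical point to treat carefully is the pole analysis of \eqref{bethe-inh}: one should confirm that $F_{\alpha+\bar\lambda}/\prod_j\sinh(\lambda-\lambda_j)$ contributes poles only at the $\lambda_j$ and nowhere else, so that entireness is genuinely equivalent to the announced system of $\mathsf{N}_s$ Bethe-type equations and carries no hidden extra constraint.
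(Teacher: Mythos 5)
Your proposal is correct and follows exactly the route the paper intends for this corollary (which it leaves implicit): Theorem~\ref{th-inh-eq} supplies the bijection with admissible $Q$-solutions, division by $Q$ yields \eqref{bethe-inh} and the Bethe-type system, and the gauge factor $e^{-\lambda(\lambda+\eta-2\alpha)/(2\eta)}$ — whose two shift identities you verify correctly — converts the discretized inhomogeneous equation into \eqref{eq-left}, so that \eqref{eigenl-bis}--\eqref{eigenr-bis} follow from \eqref{eigenl}--\eqref{eigenr} together with \eqref{left-vector} and the observation $(-1)^{h}\kappa^{-h}=(-\kappa)^{-h}$. Your added care about the rigidity of the normalization of $Q$ and the uniform finiteness of the excluded set of $e^\alpha$ over the finite spectrum is consistent with, and slightly more explicit than, the paper's own presentation.
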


The rest of this section is devoted to the proof of Theorem~\ref{th-inh-eq}. We shall for convenience adopt the terminology of Appendix~\ref{app-trig-pol}.

Before turning explicitly to the proof of Theorem~\ref{th-inh-eq}, lets us first formulate a few useful lemmas.

\begin{lemma}\label{lem-hom}
Let $\alpha\in\mathbb{C}$ and let $\mathsf{\bar{t}}(\lambda)$ be a function of the form \eqref{t-form}.
Then the homogeneous functional equation
\begin{equation}\label{hom-fct}
     \mathsf{\bar{t}}(\lambda)\, Q(\lambda)
      =-e^{\lambda-\alpha}\, a(\lambda)\, Q(\lambda-\eta)
      +e^{-\lambda-\eta+\alpha}\, d(\lambda)\, Q(\lambda+\eta)
\end{equation}
does not admit non-trivial solutions $Q(\lambda )$ in the class $\mathbb{T}[\lambda]$ of trigonometric polynomials in $\lambda$.
\end{lemma}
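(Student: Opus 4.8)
The plan is to prove that the homogeneous equation \eqref{hom-fct} admits no non-trivial trigonometric-polynomial solution by studying the quasi-periodicity of the two sides under the shift $\lambda\mapsto\lambda+i\pi$, which forces an inconsistency unless $Q\equiv 0$. The key observation is that the functions $a(\lambda)$ and $d(\lambda)$ of \eqref{a-d} are products of $\mathsf{N}$ factors $\sinh(\cdots)$, so each is $i\pi$-quasi-periodic with a definite multiplier: under $\lambda\mapsto\lambda+i\pi$ one has $a(\lambda+i\pi)=(-1)^{\mathsf{N}}a(\lambda)$ and likewise $d(\lambda+i\pi)=(-1)^{\mathsf{N}}d(\lambda)$. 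The extra exponential prefactors $e^{\lambda-\alpha}$ and $e^{-\lambda-\eta+\alpha}$ multiply respectively by $e^{i\pi}=-1$ and $e^{-i\pi}=-1$. By contrast, the eigenvalue-type function $\bar{\mathsf{t}}(\lambda)$ of the form \eqref{t-form} and a putative solution $Q\in\mathbb{T}[\lambda]$ each have their own quasi-periodicity multipliers under the same shift, determined by their degree and leading behaviour.

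First I would fix notation using Appendix~\ref{app-trig-pol}: write $Q(\lambda)=\sum_{k}c_k\,e^{k\lambda}$ as a finite Laurent-type (trigonometric) sum, and record that a genuine trigonometric polynomial of half-integer degree $p$ satisfies $Q(\lambda+i\pi)=(-1)^{2p}Q(\lambda)$ when $Q$ lies in the relevant class, i.e. its quasi-period multiplier is a fixed sign depending only on the parity of the top and bottom exponents. The point is that all three terms on the right-hand side of \eqref{hom-fct} carry the same total multiplier under $\lambda\mapsto\lambda+i\pi$, namely $(-1)\cdot(-1)^{\mathsf{N}}=(-1)^{\mathsf{N}+1}$ times the multiplier of $Q$ (since the shift $\lambda\mapsto\lambda\mp\eta$ inside $Q$ commutes with the $i\pi$-shift and does not change $Q$'s own multiplier). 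Thus the right-hand side has multiplier $(-1)^{\mathsf{N}+1}$ relative to $Q$. On the left-hand side, the product $\bar{\mathsf{t}}(\lambda)Q(\lambda)$ has multiplier equal to the product of the multiplier of $\bar{\mathsf{t}}$ and that of $Q$.

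The decisive step is then to compute the $i\pi$-multiplier of $\bar{\mathsf{t}}(\lambda)$ and show it differs from $(-1)^{\mathsf{N}+1}$, so that matching the two sides would force $Q$ to transform with two incompatible signs, hence $Q\equiv 0$. Concretely, a function $\bar{\mathsf{t}}(\lambda)$ of the form \eqref{t-form} is an interpolation through $\mathsf{N}$ points built from the Lagrange-type factors $\prod_{\ell\neq n}\sinh(\lambda-\xi_\ell)/\sinh(\xi_n-\xi_\ell)$; each such basis function is a product of $\mathsf{N}-1$ factors $\sinh(\lambda-\xi_\ell)$ and therefore has multiplier $(-1)^{\mathsf{N}-1}$ under $\lambda\mapsto\lambda+i\pi$. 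Hence $\bar{\mathsf{t}}(\lambda+i\pi)=(-1)^{\mathsf{N}-1}\bar{\mathsf{t}}(\lambda)$, so the left-hand side of \eqref{hom-fct} carries multiplier $(-1)^{\mathsf{N}-1}$ times that of $Q$, whereas the right-hand side carries $(-1)^{\mathsf{N}+1}=(-1)^{\mathsf{N}-1}$ times that of $Q$ — so a naive parity count actually matches. This tells me the obstruction is \emph{not} a crude $i\pi$-parity mismatch, and the real content must come from the exponential asymmetry $e^{\lambda-\alpha}$ versus $e^{-\lambda+\alpha}$, which breaks the symmetric $\sinh$-structure.

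The hard part will therefore be to extract the contradiction from the exponential factors rather than from $\sinh$-parity. The cleaner route is an asymptotic/leading-coefficient analysis: suppose $Q(\lambda)=\sum_{k=k_-}^{k_+}c_k e^{k\lambda}$ is non-trivial, and compare the coefficients of the extreme exponentials $e^{(\cdot)\lambda}$ on both sides of \eqref{hom-fct} as $\mathrm{Re}\,\lambda\to\pm\infty$. Because $\bar{\mathsf{t}}(\lambda)$ grows like a trigonometric polynomial of degree $\mathsf{N}-1$ while $e^{\pm\lambda}a(\lambda)$ and $e^{\mp\lambda}d(\lambda)$ grow like degree $\mathsf{N}+1$ objects \emph{but with an extra unit of exponential weight on one side only}, the top (resp. bottom) exponential on the right-hand side cannot be cancelled by the left-hand side: the degrees of $\bar{\mathsf{t}}(\lambda)Q(\lambda)$ in $e^{\lambda}$ would have to simultaneously match $\deg a+1+\deg Q$ at $+\infty$ and $\deg d+1+\deg Q$ at $-\infty$, and the asymmetry of the $\pm\lambda$ exponentials makes these two matching conditions mutually incompatible for any finite-degree $Q$. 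I would make this precise by equating leading coefficients at $+\infty$ and at $-\infty$ and deriving $c_{k_+}=0$ and $c_{k_-}=0$ successively, peeling off extreme coefficients until all $c_k$ vanish. Executing this coefficient bookkeeping carefully — and verifying that no miraculous cancellation occurs between the $a$-term, the $d$-term, and $\bar{\mathsf{t}}Q$ at the extreme exponential weights — is the genuine obstacle; once the extreme coefficients are forced to zero, a downward induction on the span $k_+-k_-$ closes the argument and yields $Q\equiv 0$.
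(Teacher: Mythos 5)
Your operative argument---comparing the leading exponential orders of the two sides as $\lambda\to\pm\infty$, where the r.h.s.\ carries weight $e^{\pm(\mathsf{N}+1+\deg Q)\lambda}$ from the $e^{\pm\lambda}a(\lambda)$ and $e^{\mp\lambda}d(\lambda)$ terms while $\bar{\mathsf{t}}(\lambda)Q(\lambda)$ is only $O(e^{\pm(\mathsf{N}-1+\deg Q)\lambda})$, forcing the extreme coefficients of $Q$ to vanish---is exactly the paper's (one-line) proof, and it is correct; note that once the top coefficient $c_{k_+}$ is killed you already have a contradiction with its maximality, so no peeling induction is needed. The $i\pi$-quasi-periodicity detour in your first half is a dead end, but you correctly recognize it as such before pivoting.
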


\begin{proof}
It is enough to observe that, if $Q(\lambda)\in\mathbb{T}[\lambda]\equiv\mathbb{C}[e^\lambda,e^{-\lambda}]$, the r.h.s. and the l.h.s. of \eqref{hom-fct} have different leading asymptotic behaviors when $\lambda\to\pm\infty$.
\end{proof}

\begin{lemma}\label{lem-deg}
Under the hypothesis of Lemma~\ref{lem-hom},
let us suppose that $Q(\lambda)\in\mathbb{T}_{\mathsf{N}_Q}[\lambda]$ is  such that
\begin{equation}\label{fct-F}
     \mathsf{\bar{t}}(\lambda)\, Q(\lambda)
      =-e^{\lambda-\alpha}\, a(\lambda)\, Q(\lambda-\eta)
      +e^{-\lambda-\eta+\alpha}\, d(\lambda)\, Q(\lambda+\eta)+F(\lambda)
\end{equation}
for some function $F(\lambda)$ vanishing at $\mathsf{N}+\mathsf{N}_Q+1$ points which are pairwise distinct modulo $i\pi$. Then $Q(\lambda)\in\hat{\mathbb{T}}_{\mathsf{N}_Q}[\lambda]$ or $Q(\lambda)$ is identically zero.
\end{lemma}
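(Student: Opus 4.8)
The plan is to assume $Q\not\equiv 0$ and to combine a leading-order analysis of the right-hand side of \eqref{fct-F}, in the spirit of the proof of Lemma~\ref{lem-hom}, with the $i\pi$-quasi-periodicity of all the data and the hypothesis on the zeros of $F$. The whole argument turns on pinning down the two extreme exponents of $F$ and then counting its zeros modulo $2i\pi$.

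First I would rewrite \eqref{fct-F} as $F(\lambda)=\mathsf{\bar t}(\lambda)\,Q(\lambda)+e^{\lambda-\alpha}\,a(\lambda)\,Q(\lambda-\eta)-e^{-\lambda-\eta+\alpha}\,d(\lambda)\,Q(\lambda+\eta)$, which is manifestly a trigonometric polynomial, and determine its two extreme exponents. Writing $M_+$ and $M_-$ for the highest and lowest exponents of $Q$ (so that $-\mathsf{N}_Q\le M_-\le M_+\le \mathsf{N}_Q$ and $M_\pm\equiv \mathsf{N}_Q \bmod 2$, since $Q\in\mathbb{T}_{\mathsf{N}_Q}[\lambda]$), one sees that as $\lambda\to+\infty$ the middle term dominates, behaving like a nonzero multiple of $e^{(\mathsf{N}+M_++1)\lambda}$, an exponent strictly larger than those of the other two terms (at most $\mathsf{N}+M_+-1$); the nonvanishing of the leading coefficient follows from that of $a$ in \eqref{a-d} together with $M_+$ being a genuine exponent of $Q$. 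Symmetrically, the last term dominates as $\lambda\to-\infty$ and produces the lowest exponent $-\mathsf{N}+M_--1$. Hence $F\not\equiv 0$, and its span (highest minus lowest exponent) equals $s=2\mathsf{N}+2+(M_+-M_-)$.

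Next I would use parity. Since $\mathsf{\bar t}$ is $i\pi$-quasi-periodic of sign $(-1)^{\mathsf{N}-1}$ (being of the form \eqref{t-form}) while $a,d$ have sign $(-1)^{\mathsf{N}}$ and $e^{\pm\lambda}$ has sign $-1$, each of the three terms building $F$ carries the same sign $(-1)^{\mathsf{N}-1+\mathsf{N}_Q}$ under $\lambda\mapsto\lambda+i\pi$ (using that $Q$ has sign $(-1)^{\mathsf{N}_Q}$); thus $F(\lambda+i\pi)=(-1)^{\mathsf{N}-1+\mathsf{N}_Q}F(\lambda)$ and the zero set of $F$ is invariant under $\lambda\mapsto\lambda+i\pi$. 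By hypothesis $F$ vanishes at $m\equiv\mathsf{N}+\mathsf{N}_Q+1$ points pairwise distinct modulo $i\pi$, hence pairwise distinct modulo $2i\pi$; adjoining their $i\pi$-translates, which are again zeros of $F$, yields $2m$ zeros of $F$ that are still pairwise distinct modulo $2i\pi$. Since a nonzero trigonometric polynomial of span $s$ has exactly $s$ zeros modulo $2i\pi$ counted with multiplicity, we obtain $s\ge 2m$.

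Finally, $2\mathsf{N}+2+(M_+-M_-)=s\ge 2m=2\mathsf{N}+2\mathsf{N}_Q+2$ forces $M_+-M_-\ge 2\mathsf{N}_Q$, which together with $M_+\le\mathsf{N}_Q$ and $M_-\ge-\mathsf{N}_Q$ gives $M_+=\mathsf{N}_Q$ and $M_-=-\mathsf{N}_Q$: the polynomial $Q$ then has full degree $\mathsf{N}_Q$, i.e. $Q\in\hat{\mathbb{T}}_{\mathsf{N}_Q}[\lambda]$, the trivial alternative $Q\equiv0$ having been set aside at the outset. I expect the main obstacle to be the bookkeeping of the first step—making sure that no cancellation spoils the extreme coefficients of $F$, which is exactly the asymptotic mismatch already exploited in Lemma~\ref{lem-hom}—and, above all, the clean passage between counting zeros modulo $i\pi$ and modulo $2i\pi$: it is the definite parity of $F$ that upgrades the $m$ hypothesised zeros into the $2m$ zeros needed to saturate the degree bound, and this doubling is the heart of the argument.
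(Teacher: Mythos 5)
Your proof is correct. It takes a genuinely different (though closely related) route from the paper's. The paper argues by contraposition: if $Q$ is not of full span, then $Q\in\mathbb{T}_{\mathsf{N}',\mathsf{N}_Q-1}[\lambda]$ with $\mathsf{N}'=\mathsf{N}_Q$ or $\mathsf{N}_Q-2$, whence $F\in\mathbb{T}_{\mathsf{N}'+\mathsf{N}+1,\mathsf{N}_Q+\mathsf{N}}[\lambda]$, a space of dimension $\mathsf{N}+\mathsf{N}_Q+1$; the interpolation property \eqref{interpol-P} (which already encodes the $i\pi$-parity, since distinctness modulo $i\pi$ suffices there) then forces $F\equiv 0$, and Lemma~\ref{lem-hom} kills $Q$. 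You instead work directly: you pin down the \emph{exact} extreme exponents of $F$, namely $\mathsf{N}+M_++1$ and $-\mathsf{N}+M_--1$, by redoing the asymptotic mismatch that underlies Lemma~\ref{lem-hom} (so you never need to invoke that lemma as a black box, and you get $F\not\equiv0$ for free once $Q\not\equiv0$), then make the parity doubling explicit to convert $\mathsf{N}+\mathsf{N}_Q+1$ zeros distinct modulo $i\pi$ into $2(\mathsf{N}+\mathsf{N}_Q+1)$ zeros distinct modulo $2i\pi$, and conclude by comparing with the span $2\mathsf{N}+2+(M_+-M_-)$ of $F$. What your version buys is a sharper intermediate statement (the precise span of $F$ in terms of that of $Q$) and independence from Lemma~\ref{lem-hom}; what the paper's version buys is brevity, since the mod-$i\pi$ counting is delegated once and for all to the interpolation formula of Appendix~\ref{app-trig-pol}. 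All the individual steps you flag as delicate (non-cancellation of the extreme coefficients, the passage from counting modulo $i\pi$ to modulo $2i\pi$) check out.
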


\begin{proof}
Let us suppose that $Q(\lambda)\notin\hat{\mathbb{T}}_{\mathsf{N}_Q}[\lambda]$, i.e. that $Q(\lambda)$ can be written in the form
\begin{equation}
  Q(\lambda)=e^{-\mathsf{N}_Q\lambda}\sum_{j=0}^{\mathsf{N}_Q}c_j\, e^{2j\lambda},
\end{equation}
with either $c_{\mathsf{N}_Q}=0$ or $c_0=0$.
 It means that $Q(\lambda)\in {\mathbb{T}}_{\mathsf{N}',\mathsf{N}_Q-1}[\lambda]$, with $\mathsf{N}'=\mathsf{N}_Q$ or $\mathsf{N}'=\mathsf{N}_Q-2$.
Then, from \eqref{fct-F}, $F(\lambda)\in {\mathbb{T}}_{\mathsf{N}'+\mathsf{N}+1,\mathsf{N}_Q+\mathsf{N}}[\lambda]$. Since $F(\lambda)$ vanishes in $\mathsf{N}+\mathsf{N}_Q+1$ different points, we have $F(\lambda)=0$. It follows from Lemma~\ref{lem-hom} that $Q(\lambda)=0$.
\end{proof}

\begin{lemma}\label{lem-ZQ}
If $Q(\lambda)$ is a trigonometric polynomial of the form \eqref{Q-Ns}, then the function $Z_{\alpha,Q}(\lambda)$ defined as
\begin{equation}\label{Z-Q}
  Z_{\alpha,Q}(\lambda)= -e^{\lambda-\alpha} a(\lambda)\, Q(\lambda-\eta)
      +e^{-\lambda-\eta+\alpha} d(\lambda)\, Q(\lambda+\eta)
      +  F_{\alpha+\bar\lambda}(\lambda),
\end{equation}
where $F_{\alpha+\bar\lambda}(\lambda)$ is given by \eqref{F-Q}, is an element of $\mathbb{T}_{\mathsf{N}_s+\mathsf{N}-1}[\lambda]$.
\end{lemma}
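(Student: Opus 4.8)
The plan is to regard $Z_{\alpha,Q}(\lambda)$ as a sum of three explicit trigonometric polynomials, to control its degree through the leading asymptotic behaviour as $\lambda\to\pm\infty$, and to exhibit the cancellation of the two extreme Fourier modes that the inhomogeneous term $F_{\alpha+\bar\lambda}$ has been tailored to produce.

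First I would record the individual degrees. Each summand in \eqref{Z-Q} lies in $\mathbb{T}[\lambda]=\mathbb{C}[e^\lambda,e^{-\lambda}]$. By \eqref{a-d}, $a(\lambda)$ and $d(\lambda)$ are trigonometric polynomials of degree $\mathsf{N}$, and by \eqref{Q-Ns} the functions $Q(\lambda\mp\eta)$ have degree $\mathsf{N}_s$; hence $-e^{\lambda-\alpha}a(\lambda)\,Q(\lambda-\eta)$ and $e^{-\lambda-\eta+\alpha}d(\lambda)\,Q(\lambda+\eta)$ both have degree $\mathsf{N}_s+\mathsf{N}+1$. Likewise, in \eqref{F-Q} the product $\prod_{j=1}^{\mathsf{N}}\prod_{h_j=0}^{2s_j}\sinh(\lambda-\xi_j^{(h_j)})$ has $\sum_{j}(2s_j+1)=\mathsf{N}_s+\mathsf{N}$ factors, so the extra $\sinh$ raises the degree of $F_{\alpha+\bar\lambda}(\lambda)$ to $\mathsf{N}_s+\mathsf{N}+1$ as well. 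A priori therefore $Z_{\alpha,Q}\in\mathbb{T}_{\mathsf{N}_s+\mathsf{N}+1}[\lambda]$.

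Next I would compute the coefficient of the top mode $e^{(\mathsf{N}_s+\mathsf{N}+1)\lambda}$. As $\lambda\to+\infty$ the second summand contributes only at order $e^{(\mathsf{N}_s+\mathsf{N}-1)\lambda}$, so the top coefficient of $Z_{\alpha,Q}$ receives contributions solely from the first summand and from $F_{\alpha+\bar\lambda}$. Expanding each $\sinh$ by its dominant exponential, and using $\xi_j^{(0)}=\xi_j+s_j\eta$ together with $\mathsf{N}_s=2\sum_a s_a$, I would check that the leading coefficient of $-e^{\lambda-\alpha}a(\lambda)\,Q(\lambda-\eta)$ is $-2^{-(\mathsf{N}+\mathsf{N}_s)}\exp\!\big(-\alpha-\bar\lambda-\sum_{j=1}^{\mathsf{N}}\xi_j^{(0)}\big)$, while the internal shift $\sum_{j}\sum_{h_j=1}^{2s_j}\xi_j^{(h_j)}+\tfrac{\mathsf{N}_s+1}{2}\eta$ and the choice $x=\alpha+\bar\lambda$ make the leading coefficient of $F_{\alpha+\bar\lambda}$ precisely its opposite (the prefactor $2e^{-\frac{\mathsf{N}_s+1}{2}\eta}$ being exactly what cancels the residual $\eta$-dependence). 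The symmetric computation as $\lambda\to-\infty$ shows that the bottom coefficient of $e^{-(\mathsf{N}_s+\mathsf{N}+1)\lambda}$ receives contributions only from the second summand and from $F_{\alpha+\bar\lambda}$, and these cancel in the same way. Hence both extreme modes $e^{\pm(\mathsf{N}_s+\mathsf{N}+1)\lambda}$ are absent from $Z_{\alpha,Q}$. Finally I would invoke parity: since the modes of $a$, $d$ have exponents $\equiv\mathsf{N}\pmod2$ and those of $Q$ exponents $\equiv\mathsf{N}_s\pmod2$, every mode surviving in any of the three summands (after the multiplications by $e^{\pm\lambda}$ and the extra $\sinh$ in $F$) carries an exponent $\equiv\mathsf{N}_s+\mathsf{N}+1\pmod2$. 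The modes of $Z_{\alpha,Q}$ therefore step by $2$, so once the extreme pair at $\pm(\mathsf{N}_s+\mathsf{N}+1)$ is removed the next admissible exponents are $\pm(\mathsf{N}_s+\mathsf{N}-1)$, giving $Z_{\alpha,Q}\in\mathbb{T}_{\mathsf{N}_s+\mathsf{N}-1}[\lambda]$ as claimed.

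The main obstacle is the bookkeeping in the cancellation step: matching the two extreme coefficients requires tracking simultaneously the numerical prefactors (the powers of $2$ and the signs $(-1)^{\mathsf{N}+\mathsf{N}_s}$ coming from the $\lambda\to-\infty$ expansions), the $\eta$-dependent factors arising from the shifts $Q(\lambda\mp\eta)$ and from the prefactor $2e^{-\frac{\mathsf{N}_s+1}{2}\eta}$ in \eqref{F-Q}, and the $\xi_j$- and $\bar\lambda$-dependent exponentials. It is exactly the reconciliation of all these data at both ends $\lambda\to\pm\infty$ that forces the specific constant inside the $\sinh$ of $F_x$ and the substitution $x=\alpha+\bar\lambda$; everything else in the argument is routine.
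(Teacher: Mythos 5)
Your proof is correct and follows essentially the same route as the paper's: the paper simply observes that $Z_{\alpha,Q}(\lambda)\in\mathbb{T}_{\mathsf{N}_s+\mathsf{N}+1}[\lambda]$ (which already encodes your parity remark) and that $\lim_{\lambda\to\pm\infty}e^{\mp(\mathsf{N}_s+\mathsf{N}+1)\lambda}Z_{\alpha,Q}(\lambda)=0$, leaving the cancellation of the extreme coefficients as a straightforward check. You have merely carried out that check explicitly, and your leading coefficients (in particular $-2^{-(\mathsf{N}+\mathsf{N}_s)}e^{-\alpha-\bar\lambda-\sum_j\xi_j^{(0)}}$ versus its opposite coming from $F_{\alpha+\bar\lambda}$) are right.
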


\begin{proof}
It is straightforward to see that $Z_{\alpha,Q}(\lambda)$ is an element of $\mathbb{T}_{\mathsf{N}_s+\mathsf{N}+1}[\lambda]$. The claim follows from the fact that
\begin{equation}\label{asympt-Z}
  \lim_{\lambda \rightarrow \pm \infty }
  e^{\mp (\mathsf{N}_s+\mathsf{N}+1)\lambda}\, Z_{\alpha,Q}(\lambda) =0.
\end{equation}
\end{proof}

\smallskip
\noindent
\textit{Proof of Theorem~\ref{th-inh-eq}.}
The fact that {\it \ref{cond3}.} implies {\it \ref{cond2}.} is obvious.

The fact that  {\it \ref{cond2}.} implies  {\it \ref{cond1}.} is also quite straightforward. Indeed, let us suppose that   {\it \ref{cond2}.}  holds. 
It follows from \eqref{eq-inh}, from Lemma~\ref{lem-ZQ} and from the fact that $\bar{\mathsf{t}}(\lambda)$ is an entire function in  $\lambda$ that the latter is indeed of the form \eqref{t-form}.
Moreover, particularizing \eqref{eq-inh} in the $\mathsf{N}+\mathsf{N}_s$ points $\xi_n^{(h_n)}$, $1\le n\le\mathsf{N}$, $0\le h_n\le 2s_n$, we get that each of the $\mathsf{N}$ following homogeneous systems of equations ($1\le n\le \mathsf{N}$):
\begin{equation}
   \sum_{k=1}^{2s_n+1} \Big[ \big(X_\alpha\, X_f^{(n)}\big)^{-1}\, D_{\bar{\mathsf{t}}}^{(n)}\, 
   \big(X_{\alpha}\, X_f^{(n)}\big) \Big]_{j,k}\ \mathsf{v}_k^{(n)}=0,
   \qquad 1\le j \le 2s_n+1,
\end{equation}
admit non-zero solutions given by
\begin{equation}
  \mathsf{v}_j^{(n)}=Q(\xi_n^{(j-1)}), \qquad 1\le j\le 2s_n+1.
\end{equation}
Here $D_{\bar{\mathsf{t}}}^{(n)}$ is the $(2s_n+1)\times(2s_n+1)$ matrix \eqref{Dt}, and $X_{\alpha}$, $X_f^{(n)}$ are the $(2s_n+1)\times(2s_n+1)$ diagonal matrices of elements
\begin{equation}\label{Xn}
 \big[X_{\alpha}\big]_{j,k}=\delta_{j,k}\, e^{-(j-1)\alpha},
 \qquad
  \big[X_f^{(n)}\big]_{j,k}=\delta_{j,k}\,\prod_{\ell=0}^{j-2}f(\xi_n^{(\ell)}), 
 \qquad 1\le j,k\le (2s_n+1),
\end{equation}
with here $f(\lambda)=e^\lambda$.
Hence it means that $\bar{\mathsf{t}}(\lambda)$ satisfies the system \eqref{discrete-sys}, and therefore that $\bar{\mathsf{t}}(\lambda)\in\Sigma_{\bar{\mathsf{T}}}$.

Finally, let us show that  {\it \ref{cond1}.} implies  {\it \ref{cond3}.}
Let us point out that if $\bar{\mathsf{t}}(\lambda)\in\Sigma_{\bar{\mathsf{T}}}$ and $Q(\lambda )$ is a trigonometric polynomial in $\lambda$ of the form \eqref{Q-Ns}, then, from Lemma~\ref{lem-ZQ}, both members of the functional equation \eqref{eq-inh} belong to $\mathbb{T}_{\mathsf{N}_s+\mathsf{N}-1}[\lambda]$.
Hence the equation \eqref{eq-inh} for $\bar{\mathsf{t}}(\lambda)$ and $Q(\lambda)$ is satisfied if and only if it is satisfied in $\mathsf{N}_s+\mathsf{N}$  different values of $\lambda$, say at the $\mathsf{N}_s+\mathsf{N}$ points $\xi_j^{(h_j)}$, $1\le j\le \mathsf{N}$, $0\le h_j\le 2s_j$.
In other words, it means that the functional equation \eqref{eq-inh} is equivalent to the following system of $\mathsf{N}$ sub-systems of the form (corresponding to a global system of $\mathsf{N}+\mathsf{N}_s$ equations):
\begin{equation}\label{sub-sys}
  \big(X_\alpha\, X_f^{(n)}\big)^{-1}\, D_{\bar{\mathsf{t}}}^{(n)}\, 
   \big(X_{\alpha}\, X_f^{(n)}\big)
\cdot
  \left( 
\begin{array}{c}
Q(\xi_n^{(0)}) \\ 
Q(\xi_n^{(1)}) \\ 
\vdots \\ 
Q(\xi_n^{(2s_{n})})
\end{array}
\right) 
=\left( 
\begin{array}{c}
0 \\ 
\vdots \\ 
\vdots \\ 
0
\end{array}
\right),
\qquad
\forall n\in \{1,\ldots,\mathsf{N}\},
\end{equation}
where $D_{\bar{\mathsf{t}}}^{(n)}$ is the $(2s_n+1)\times(2s_n+1)$ matrix \eqref{Dt} and $X_{\alpha}$, $X_f^{(n)}$ are given by \eqref{Xn} with $f(\lambda)=e^\lambda$.

We recall that each of these subsystems admit, up to an overall normalization coefficient $Q(\xi_n^{(0)})$, only one non-zero solution which is given in terms of the vector $\mathbf{q}_{\,\bar{\mathsf{t}}}^{(n)}$ \eqref{q-vect}-\eqref{rec-y} generating the nullspace of the matrix $D_{\bar{\mathsf{t}}}^{(n)}$ by
\begin{equation}
   Q(\xi_n^{(h)})= e^{h\alpha}\ \mathsf{x}_{\bar{\mathsf{t}},f,h}^{(n)}\ Q(\xi_n^{(0)}),
   \qquad \text{with} \quad
   \mathsf{x}_{\bar{\mathsf{t}},f,h}^{(n)}\equiv 
   \left[\prod_{\ell=0}^{h-1} f(\xi_n^{(\ell)})\right]^{-1}
   \mathsf{q}_{\,\bar{\mathsf{t}},h}^{(n)},
\end{equation}
for any $h\in\{0,\ldots 2s_n\}$.
We also recall that any trigonometric polynomial $Q(\lambda)\in\mathbb{T}_{\mathsf{N}_s}[\lambda]$ can be written as in \eqref{interpol-P} in terms of its values at $\mathsf{N}_s+1$ different points $\zeta_0,\zeta_1,\ldots,\zeta_{\mathsf{N}_s}$  (which are pairwise distinct modulo $i\pi$).
Choosing the points $\zeta_1,\ldots,\zeta_{\mathsf{N}_s}$ to be
\begin{equation}
  \zeta_{j_{n,h}}\equiv \xi_n^{(j)}, \quad
  \text{with}\quad j_{n,h}=2\sum_{k=1}^{n-1}s_k+h+1, \quad
  1\le n \le \mathsf{N}, \ \ 0\le h \le 2s_n-1,
\end{equation}
$\zeta_0$ being an arbitrary complex number (distinct from $\zeta_1,\ldots,\zeta_{\mathsf{N}_s}$ modulo $i\pi$), we obtain that the previous system of $\mathsf{N}+\mathsf{N}_s$ equations is equivalent, for $Q(\lambda)\in\mathbb{T}_{\mathsf{N}_s}[\lambda]$, to the union of the following coupled sub-systems, for $1\le n\le \mathsf{N}$:
\begin{equation}\label{sub-sys2}
  \left\{
  \begin{split}
   &Q(\xi_n^{(h)})=e^{h\alpha}\ \mathsf{x}_{\bar{\mathsf{t}},f,h}^{(n)}\ Q(\xi_n^{(0)}),
     \qquad 1\le h\le 2s_n,\\
   & Q(\xi_n^{(2s_n)})
     =\sum_{k=0}^{\mathsf{N_s}}\ \prod_{\substack{ \ell=0  \\ \ell\neq k }}^{\mathsf{N}_s}
     \frac{\sinh (\xi_n^{(2s_n)} -\zeta_{\ell})}{\sinh (\zeta_k-\zeta_\ell)} \ Q(\zeta_k).
  \end{split}
  \right.
\end{equation}
This system admits a non-zero solution if and only if the following homogeneous linear system of $\mathsf{N}$ equations,
\begin{multline}\label{sys-0-1}
   e^{2s_n\alpha\,}\mathsf{x}_{\bar{\mathsf{t}},f,2s_n}^{(n)}\  q_n
   -\sum_{j=1}^\mathsf{N}\!\sum_{h=0}^{2s_j-1}
   \frac{\sinh(\xi_n^{(2s_n)}-\zeta_0)}{\sinh(\xi_j^{(h)}-\zeta_0)}
   \underset{(k,\ell)\not=(j,h)\ }{\prod_{k=1}^\mathsf{N}\!\prod_{\ell=0}^{2s_k-1}}\!
   \frac{\sinh(\xi_n^{(2s_n)}-\xi_k^{(\ell)})}{\sinh(\xi_j^{(h)}-\xi_k^{(\ell)})}\,
   e^{j\alpha\,}\mathsf{x}_{\bar{\mathsf{t}},f,h}^{(j)}\ q_j
   \\
    - \prod_{\ell=1}^{\mathsf{N}_s}\frac{\sinh(\xi_n^{(2s_n)}-\zeta_\ell)}{\sinh(\zeta_0-\zeta_\ell)}\    q_0
    =0,
   \qquad
   1\le n\le \mathsf{N},
\end{multline}
in the $\mathsf{N}+1$ variables $q_0\equiv Q(\zeta_0)$, $q_j\equiv Q(\xi_j^{(0)})$, $1\le j\le \mathsf{N}$,
admits a non-zero solution.
It is obvious that it is always the case since the space of solutions to the system \eqref{sys-0-1} is at least of dimension 1. 
Therefore, for any value of $e^\alpha$, there exists a non-zero solution $Q(\lambda)\in\mathbb{T}_{\mathsf{N}_s}[\lambda]$ to the functional equation \eqref{eq-inh}. It follows from Lemma~\ref{lem-deg} that  $Q(\lambda)\in\hat{\mathbb{T}}_{\mathsf{N}_s}[\lambda]$, and hence is of the form \eqref{Q-Ns}.

The system \eqref{sys-0-1} can be rewritten as
\begin{equation}\label{sys-0-2}
   \sum_{j=1}^\mathsf{N} \left[\mathcal{M}_{\bar{\mathsf{t}}}(e^\alpha,\zeta_0)\right]_{i,j}\,  q_j
   = \prod_{\ell=1}^{\mathsf{N}_s}\frac{\sinh(\xi_i^{(2s_i)}-\zeta_\ell)}{\sinh(\zeta_0-\zeta_\ell)}\  q_0,
   \qquad
   1\le i\le \mathsf{N},
\end{equation}
where the matrix $\mathcal{M}_{\bar{\mathsf{t}}}(\beta,\zeta)$ is defined as
\begin{multline}\label{mat-sys0}
  \left[\mathcal{M}_{\bar{\mathsf{t}}}(\beta,\zeta)\right]_{i,j}
  =\delta_{i,j}\  \beta^{2s_i}\, \mathsf{x}_{\bar{\mathsf{t}},f,2s_i}^{(i)}\\
  \quad
  -\sum_{h=0}^{2s_j-1} \beta^h\ \frac{\sinh(\xi_i^{(2s_i)}-\zeta)}{\sinh(\xi_j^{(h)}-\zeta)}\,
  \underset{(m,k)\not=(j,h)\ }{\prod_{m=1}^\mathsf{N}\!\prod_{k=0}^{2s_m-1}}
  \frac{\sinh(\xi_i^{(2s_i)}-\xi_m^{(k)})}{\sinh(\xi_j^{(h)}-\xi_m^{(k)})}\ \mathsf{x}_{\bar{\mathsf{t}},f,h}^{(j)}.
\end{multline}
Hence, for all values of $e^\alpha$ for which there exists $\zeta_0$  (distinct from $\zeta_1,\ldots,\zeta_{\mathsf{N}_s}$ modulo $i\pi$) such that $\det_\mathsf{N}\left[\mathcal{M}_{\bar{\mathsf{t}}}(e^\alpha,\zeta_0)\right]\not=0$, the solution to the system \eqref{sys-0-2} is unique if we fix the value of $q_0\equiv Q(\zeta_0)\in\mathbb{C}$.
In other words it means that, in this case, there exists (up to a global normalization related to the value of $Q(\zeta_0)$) one and only one solution $Q(\lambda)\in\mathbb{T}_{\mathsf{N}_s}[\lambda]$ to the functional equation \eqref{eq-inh}.
It is easy to see, from the explicit expression \eqref{mat-sys0}, that $\det_\mathsf{N}\left[\mathcal{M}_{\bar{\mathsf{t}}}(\beta,\zeta)\right]$ is a non-zero polynomial in $\beta$ of degree exactly\footnote{This comes from the fact that $\mathsf{q}_{\,\bar{\mathsf{t}},2s_i}^{(i)}\not=0$, $\forall i\in\{1,\ldots,\mathsf{N}\}$ (see Remark~\ref{rem-eigenvector-D}).} $\mathsf{N}_s$ (hence non-constant), which can therefore be written in the following form in terms of its $\mathsf{N}_s$ zeros $\beta_{\bar{\mathsf{t}},j}(\zeta)$, $1\le j\le \mathsf{N}_s$:
\begin{equation}\label{det-M}
  \det_\mathsf{N}\left[\mathcal{M}_{\bar{\mathsf{t}}}(\beta,\zeta)\right]
    =C_{\bar{\mathsf{t}}} \prod_{j=1}^{\mathsf{N}_s} (\beta-\beta_{\bar{\mathsf{t}},j}(\zeta)),
  \qquad\text{with}
  \quad
  C_{\bar{\mathsf{t}}} = \prod_{j=1}^{\mathsf{N}_s} \mathsf{x}_{\bar{\mathsf{t}},2s_j}^{(j)}.
\end{equation}
Note that we have in particular
\begin{align}
  \det_\mathsf{N}\left[\mathcal{M}_{\bar{\mathsf{t}}}(0,\zeta_0)\right]
  &=(-1)^{\mathsf{N}_s}\, C_{\bar{\mathsf{t}}}\, \prod_{j=1}^{\mathsf{N}_s} \beta_{\bar{\mathsf{t}},j}(\zeta_0) 
\label{det-coeff+} \\
  &=
\frac{\prod_{a<b}^{\mathsf{N}}\sinh(\xi_a^{(2s_a)}-\xi_b^{(2s_b)})\,\sinh (\xi_b^{(0)}-\xi_a^{(0)})}
       {\prod_{a,b=1}^{\mathsf{N}}\sinh (\xi_a^{(2s_a)}-\xi_b^{(0)})}
\prod_{n=1}^{\mathsf{N}}
\frac{\prod_{k=0}^{\mathsf{N}_s}\sinh (\xi_n^{(2s_n)}-\zeta_k)}
       {\prod_{\substack{ k=0\\ k\neq j_{n,0}}}^{\mathsf{N}_s}\sinh (\xi_n^{(0)}-\zeta_k)}.\notag
\end{align}
Hence, for a given value of $\zeta_0$, there exists at most $\mathsf{N}_s$ values of $e^\alpha$ such that $\det_\mathsf{N}\left[\mathcal{M}_{\bar{\mathsf{t}}}(e^\alpha,\zeta_0)\right]=0$.
So, if any, there exists at most finitely many values of $e^\alpha$ such that, for any $\zeta_0$  distinct from $\zeta_1,\ldots,\zeta_{\mathsf{N}_s}$ modulo $i\pi$, $\det_\mathsf{N}\left[\mathcal{M}_{\bar{\mathsf{t}}}(e^\alpha,\zeta_0)\right]=0$. Note that this last condition is equivalent to the fact that the functional equation \eqref{eq-inh} admits at least two independent solutions\footnote{Indeed, the fact that $\det_\mathsf{N}\left[\mathcal{M}_{\bar{\mathsf{t}}}(e^\alpha,\zeta_0)\right]=0$ for infinitely many distinct values of $\zeta_0$ modulo $i\pi$ means that, for each of these values of $\zeta_0$, one can construct a non-zero solution to the system \eqref{sys-0-2} with $q_0=0$, i.e. a non-zero solution $Q(\lambda)\in\mathbb{T}_{\mathsf{N}_s}[\lambda]$ to the functional equation \eqref{eq-inh} such that $Q(\zeta_0)=0$. It is easy to see that this means that the space of solutions is at least of dimension 2.}.

Let us finally suppose that, for a solution $Q(\lambda)$ to the functional equation \eqref{eq-inh}, there exists $n\in\{1,\ldots,\mathsf{N}\}$ such that $Q(\xi_n^{(0)})=0$. This implies that, $\forall h\in\{0,\ldots,2s_n\}$, $Q(\xi_n^{(h)})=e^{h\alpha}\, \mathsf{x}_{\bar{\mathsf{t}},f,h}^{(n)}\, Q(\xi_n^{(0)})=0$, i.e. that $Q(\lambda)$ is of the form
\begin{equation}
   Q(\lambda)=\prod_{h=0}^{2s_n}\sinh(\lambda-\xi_n^{(h)})\ Q_n(\lambda),
\end{equation}
where ${Q}_n(\lambda)\in\hat{\mathbb{T}}_{\mathsf{N}_n}[\lambda]$ with $\mathsf{N}_n\equiv\mathsf{N}_s-(2s_n+1)=2\sum_{ j\not= n}s_j-1$. Then the functional equation \eqref{eq-inh} for $Q(\lambda)$ can be simplified into a functional equation for $Q_n(\lambda)$ of the form:
\begin{equation}\label{eq-Qtilde}
  \mathsf{\bar{t}}(\lambda)\, Q_n(\lambda)
      =-e^{-\alpha}\, f_n(\lambda)\,a(\lambda)\, Q_n(\lambda-\eta)
      +\frac{e^{\alpha}\, d(\lambda)}{f_n(\lambda+\eta)}\, Q_n(\lambda+\eta)
      +F_{\alpha+\bar{\lambda}_n,n}(\lambda),
\end{equation}
where
\begin{align}\label{fn}
  &f_n(\lambda)=e^\lambda\,\frac{\sinh(\lambda-\xi_n-(s_n+1)\eta)}{\sinh(\lambda-\xi_n+s_n\eta)},\\
  &F_{x,n}(\lambda)=2 e^{-\frac{\mathsf{N}_s+1}{2}\eta}
     \sinh\!\Bigg(\!
     \lambda-x-\xi_n^{(0)} +\sum_{\substack{j=1\\ j\not=n}}^\mathsf{N}\sum_{h_j=1}^{2s_j}\!\xi_j^{(h_j)}+\frac{\mathsf{N}_s+1}{2}\eta\!
     \Bigg) 
     \prod_{\substack{j=1\\ j\not= n}}^\mathsf{N}\prod_{h_j=0}^{2s_j}\!\sinh\!\big(\lambda-\xi_j^{(h_j)}\big),\notag
\end{align}
and where $\bar{\lambda}_n$ stands for the sum of the roots of the trigonometric polynomial $Q_n(\lambda)$.
Note that the apparent pole in $f_n(\lambda)$ or $f_n(\lambda+\eta)^{-1}$ is cancelled by the corresponding factor in $a(\lambda)$ or $d(\lambda)$, and that $F_{\alpha+\bar{\lambda}_n,n}(\lambda)$ cancels the leading behavior when $\lambda\to\pm\infty$, proportional to $e^{\pm(\mathsf{N}+\mathsf{N}_n+1)\lambda}$, of the other two terms of r.h.s. of \eqref{eq-Qtilde}.
In other words, it means that \eqref{eq-Qtilde} is an identity between trigonometric polynomials in $\mathbb{T}_{\mathsf{N}+\mathsf{N}_n-1}[\lambda]$.
It therefore remains to see whether or not such an equation admits a non-zero solution ${Q}_n$.

We can use a similar reasoning as previously. As an identity in $\mathbb{T}_{\mathsf{N}+\mathsf{N}_n-1}[\lambda]$, the equation \eqref{eq-Qtilde} can be rewritten as a system of $\mathsf{N}+\mathsf{N}_n$ equations obtained by particularizing \eqref{eq-Qtilde} at the $\mathsf{N}+\mathsf{N}_n$ points $\xi_j^{(h_j)}$, $j\in\{1,\ldots,\mathsf{N}\}\setminus\{n\}$, $h_j\in\{0,\ldots,2s_j\}$, i.e. by the union of the following $\mathsf{N}-1$ subsystems for $j\in \{1,\ldots,\mathsf{N}\}\setminus\{n\}$:
\begin{equation}\label{sub-sys-tilde}
 \big(X_\alpha\, X_{f_n}^{(j)}\big)^{-1}\, D_{\bar{\mathsf{t}}}^{(j)}\, 
   \big(X_{\alpha}\, X_{f_n}^{(j)}\big)
 \cdot
  \left( 
\begin{array}{c}
{Q}_n(\xi_j^{(0)}) \\ 
{Q}_n(\xi_j^{(1)}) \\ 
\vdots \\ 
{Q}_n(\xi_j^{(2s_{j})})
\end{array}
\right) 
=\left( 
\begin{array}{c}
0 \\ 
\vdots \\ 
\vdots \\ 
0
\end{array}
\right),
\end{equation}
where $X_\alpha$ and $X_{f_n}^{(j)}$ are defined as in  \eqref{Xn}.
Each of these subsystems admits, up to an overall normalization coefficient ${Q}_n(\xi_j^{(0)})$, only one non-zero solution which is given in terms of the vector $\mathbf{q}_{\,\bar{\mathsf{t}}}^{(n)}$ \eqref{q-vect}-\eqref{rec-y}.
On the other hand, every polynomial $Q_n(\lambda)\in\mathbb{T}_{\mathsf{N}_n}[\lambda]$ is completely characterized by the data of its values at $\mathsf{N}_n+1$ different points, for instance at the points $\xi_j^{(h)}$, $j\in\{1,\ldots\mathsf{N}\}\setminus\{n\}$, $h\in\{0,1,\ldots,2s_j-1\}$.
Hence the above system of $\mathsf{N}+\mathsf{N}_n$ equations is equivalent, for $Q_n(\lambda)\in\mathbb{T}_{\mathsf{N}_n}[\lambda]$, to the union of the following coupled sub-systems for $j\in\{1,\ldots\mathsf{N}\}\setminus\{n\}$:
\begin{equation}\label{sys-bis}
  \left\{
  \begin{split}
   &Q_n(\xi_j^{(h)})=e^{h\alpha}\ \mathsf{x}_{\bar{\mathsf{t}},f_n,h}^{(j)}\ Q_n(\xi_j^{(0)}),
     \qquad 1\le h\le 2s_j,\\
   & Q_n(\xi_j^{(2s_j)})
     =\sum_{\substack{m=1 \\ m\not=n}}^{\mathsf{N}}\sum_{h=0}^{2s_j}\ 
     \underset{k\not=n,\, (k,\ell)\not=(m,h) }{\prod_{k=1}^\mathsf{N} \prod_{\ell=0}^{2s_k-1}}
     \hspace{-3mm}
     \frac{\sinh(\xi_j^{(2s_j)}-\xi_k^{(\ell)})}{\sinh(\xi_m^{(h)}-\xi_k^{(\ell)})}\
     Q_n(\xi_m^{(h)}).
  \end{split}
  \right.
\end{equation}
This system admits a non-zero solution if and only if the following  homogeneous system of $\mathsf{N}-1$ equations in the $\mathsf{N}-1$ variables $Q_n(\xi_j^{(0)})$,  $j\in\{1,\ldots,\mathsf{N}\}\setminus\{n\}$ admits a non-zero solution:
\begin{equation}
  \sum_{\substack{j=1\\ j\not=n}}^\mathsf{N} \left[\mathcal{M}_{\bar{\mathsf{t}},n}(e^\alpha)\right]_{i,j}\, Q_n(\xi_j^{(0)})
   = 0,
   \qquad i\in\{1,\ldots\mathsf{N}\}\setminus\{n\},
\end{equation}
where the matrix $\mathcal{M}_{\bar{\mathsf{t}},n}(\beta)$ is defined as
\begin{equation}\label{mat-sys-bis}
  \left[\mathcal{M}_{\bar{\mathsf{t}},n}(\beta)\right]_{i,j}
  =\delta_{i,j}\  \beta^{2s_i}\, \mathsf{x}_{\bar{\mathsf{t}},f_n,2s_i}^{(i)}
  -\sum_{h=0}^{2s_j-1} \beta^h \hspace{-1mm}
  \underset{m\not=n,\, (m,k)\not=(j,h)\ }{\prod_{m=1}^\mathsf{N}\prod_{k=1}^{2s_m-1}}\hspace{-3mm}
  \frac{\sinh(\xi_i^{(2s_i)}-\xi_m^{(k)})}{\sinh(\xi_j^{(h)}-\xi_m^{(k)})}\ \mathsf{x}_{\bar{\mathsf{t}},f_n,h}^{(j)},
\end{equation}
for $i,j\in\{1,\ldots,\mathsf{N}\}\setminus\{n\}$.
Since the determinant of $\mathcal{M}_{\bar{\mathsf{t}},n}(\beta)$ is a non-zero polynomial in $\beta$ of degree  $\mathsf{N}_n+1$, it vanishes for at most $\mathsf{N}_n+1$ different values of $\beta$, which ends the proof of the theorem.
\qed

\section{Reformulation of the SOV spectrum in terms of solutions of a homogeneous functional equation}
\label{sec-hom}

We have seen in the previous section that the eigenvalues and the eigenvectors of the ($\kappa$-twisted) antiperiodic transfer matrix could be completely characterized by trigonometric polynomial solutions (i.e. solutions of the form \eqref{Q-2pi}) of a certain degree $\mathsf{N}_s$, of some particular inhomogeneous second-order finite-difference functional equation. This correspondence is bijective, i.e. with each admissible solution of the associated system  of Bethe equations correspond exactly one eigenvalue and one eigenvector (given respectively by \eqref{bethe-inh} and by \eqref{eigenl-bis}-\eqref{eigenr-bis} with \eqref{Q-t}). This follows from the SOV characterization of the eigenvalues \eqref{t-form}-\eqref{discrete-sys} and of the eigenvectors \eqref{eigenl}-\eqref{eigenr} by analytical and very simple arguments only.

Nevertheless, the resolution of a system of Bethe equations with such an additional inhomogeneous contribution is likely to involve extra difficulties compared to what happens for a usual (homogeneous) system of Bethe equations.
Hence it is surely preferable, whenever it is possible and as long as the functions involved are not too complicated, to express our eigenfunctions and eigenvectors in terms of solutions of {\em homogeneous} Bethe equations. In the case of antiperiodic boundary conditions, and unlike what happens for periodic boundary conditions, it is well known that the homogeneous second-order finite-difference equation which is the continuous analog of \eqref{eq-left},
\begin{equation}\label{Bax-hom-eq}
   \bar{\mathsf{t}}(\lambda)\, Q(\lambda )
   =-a(\lambda )\, Q(\lambda -\eta )+d(\lambda)\, Q(\lambda +\eta ),
\end{equation}
does not admit, for $\bar{\mathsf{t}}(\lambda)\in\Sigma_{\bar{\mathsf{T}}}$, any solution of the type \eqref{Q-2pi}, since the $i\pi$-quasi-periodicity behavior 
of the left and right hand sides of \eqref{Bax-hom-eq} would not coincide. 
Hence the whole problem is to identify the functional form of a complete family of $Q$-solutions to \eqref{Bax-hom-eq} for all $\bar{\mathsf{t}}(\lambda)\in\Sigma_{\bar{\mathsf{T}}}$, so as to be able to write explicitly a complete system of homogenous Bethe equations for these eigenvalues. Depending on the model we consider, this may be a difficult problem\footnote{In particular, for XXZ chains with completely general boundary conditions, this still remains an open problem.} which is usually solved by means of the construction of Baxter's $Q$-operator \cite{Bax72,Bax82L}. Note that, even for periodic models, such a construction may be quite involved, and has already generated an impressive quantity of works (see for instance \cite{PasG92,BazLZ97,AntF97,BazLZ99,Der99,Pro00,Kor06,Der08,BazLMS10,Man14}).

A construction of the $Q$-operator, in the case of the antiperiodic integrable XXZ chain of spin $s$, has been proposed in \cite{YunB95} (see \cite{BatBOY95} for the spin-1/2 case) using the ``pair-propagation through a vertex" property of the six-vertex model.
The eigenvalues $Q(\lambda)$ of this operator have there been found to be of the form\footnote{Note that, from a purely functional viewpoint and without knowing anything about the construction of \cite{BatBOY95,YunB95}, it is not unreasonable to postulate that \eqref{Bax-hom-eq} admits $Q$-solutions with a double period compared to the $Q$-solutions of \eqref{hom-TQ} in the periodic case: indeed, the functional equation itself (once conveniently renormalized), has coefficient with double periodicity with respect to its periodic counterpart. The degree $\mathsf{N}_s$ of the solution is then completely fixed by the leading asymptotic behavior of the two members of \eqref{Bax-hom-eq}.}
\begin{equation}\label{Q-homo-Ansatz}
Q(\lambda )=\prod_{a=1}^{\mathsf{N}_s}\sinh \left( \frac{\lambda -\lambda_a}{2}\right) ,
\qquad
\mathsf{N}_s\equiv 2\sum_{a=1}^{\mathsf{N}} s_{a},
\end{equation}
in terms of some roots $\lambda_1,\ldots,\lambda_{\mathsf{N}_s}$.
By means of such solutions\footnote{The construction of \cite{BatBOY95,YunB95} has actually been achieved for the homogeneous model, a case for which Theorem~\ref{th-eigen-t} does not hold. The method used in \cite{BatBOY95,YunB95} should nevertheless be generalizable without problem to the inhomogeneous model considered in the present paper.}, the spectral problem for the antiperiodic transfer matrix can then be reformulated in terms of a system of Bethe equations for the roots $\lambda_1,\ldots,\lambda_{\mathsf{N}_s}$ of the functions \eqref{Q-homo-Ansatz}. 
The explicit construction of the $Q$-operator ensures in principle, together with the SOV characterization of the transfer matrix spectrum and eigenstates provided by Theorem~\ref{th-eigen-t}, the completeness\footnote{modulo the problem of admissibility of the corresponding solutions, see Remarks~\ref{rem-xi_j} and \ref{rem-Q+ipi}.} of the solution issued from \eqref{Bax-hom-eq}-\eqref{Q-homo-Ansatz}.
However we would like to recall that the construction of \cite{BatBOY95,YunB95} relies in fact on a conjecture (the assumption that the operator called in \cite{BatBOY95,YunB95} $Q_R(v)$ is invertible at some point $v=v_0$) so that, except if we can prove the aforementioned completeness property of the corresponding family of solutions by independent arguments, the latter is conditioned to the validity of this conjecture.
This is precisely the purpose of the present section to provide such an independent proof of this property:
we explain here how to reformulate the SOV characterization of the spectrum of Theorem~\ref{th-eigen-t} in terms of solutions \eqref{Q-homo-Ansatz} to \eqref{Bax-hom-eq} using only very simple analytical arguments, in the spirit of what has been done in the previous section.
As in the previous section, we show that this reformulation is an equivalence, so that it preserves the completeness property of Theorem~\ref{th-eigen-t}. 
Once again, we insist on the fact that our proof  uses only simple analytical properties of the $Q$-function, so that we hope that our study can be extended to other models solvable by SOV, and in particular to cases in which the explicit construction of the $Q$-operator is presently not known.

Our result can be stated as follows.
 
 \begin{theorem}\label{th-hom}
 Under the condition \eqref{cond-inh}, the following propositions are equivalent:
\begin{enumerate}
   \item\label{cond1-hom}
   $\mathsf{\bar{t}}(\lambda)$ is an eigenvalue function of the $\kappa$-twisted antiperiodic transfer matrix $\bar{\mathsf{T}}^{(\kappa)}(\lambda)$ for any $\kappa\in\mathbb{C}\setminus\{0\}$
   (i.e. $\mathsf{\bar{t}}(\lambda)\in\Sigma_{\bar{\mathsf{T}}}$);
   \item\label{cond2-hom}
   $\mathsf{\bar{t}}(\lambda)$ is an entire function of $\lambda$ such that $\mathsf{\bar{t}}(\lambda+i\pi)=(-1)^{\mathsf{N}-1}\,\mathsf{\bar{t}}(\lambda)$, and there exists a unique function $Q(\lambda)$ of the form \eqref{Q-homo-Ansatz} such that $\mathsf{\bar{t}}(\lambda)$ and $Q(\lambda)$ satisfy the homogeneous functional equation \eqref{Bax-hom-eq}.
This function $Q(\lambda)$ is  such that  $\big( Q(\xi_j^{(0)}), Q(\xi_j^{(0)}+i\pi)\big)\not=(0,0)$ for all $j\in\{1,\ldots,\mathsf{N}\}$. 
   \item\label{cond3-hom}
   $\mathsf{\bar{t}}(\lambda)$ is an entire function of $\lambda$, and there exist $\epsilon \in\{-1,+1\}$ and a function $Q(\lambda)$ of the form \eqref{Q-homo-Ansatz} satisfying the relation
\begin{equation}
Q(\lambda )\,Q(\lambda +i\pi -\eta )+Q(\lambda +i\pi )\,Q(\lambda -\eta
)=d(\lambda )\, w_{\epsilon}(\lambda ),  \label{W-relation}
\end{equation}
such that
\begin{equation}
\mathsf{\bar{t}}(\lambda ) 
=\frac{Q(\lambda +\eta )\,Q(\lambda +i\pi -\eta )-Q(\lambda +\eta +i\pi
)\,Q(\lambda -\eta )}{w_{\epsilon}(\lambda )},  \label{expr-t-Q}
\end{equation}
with
\begin{equation}
w_\epsilon(\lambda )
=2\epsilon \left(\frac{ i}{2}\right)^{\!\mathsf{N}_s}  \, 
\prod_{n=1}^\mathsf{N}\prod_{h_n=1}^{2s_n-1}\sinh(\lambda-\xi_n^{(h_n)}).
\label{w-eps}
\end{equation}
The couple $(\epsilon,Q)$ is unique and the function $Q(\lambda)$ is  such that  $\big( Q(\xi_j^{(0)}), Q(\xi_j^{(0)}+i\pi)\big)\not=(0,0)$ for all $j\in\{1,\ldots,\mathsf{N}\}$.
\end{enumerate}
 \end{theorem}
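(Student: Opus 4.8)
The plan is to follow the same logical backbone as for Theorem~\ref{th-inh-eq}, establishing the cycle \ref{cond1-hom} $\Leftrightarrow$ \ref{cond2-hom} $\Leftrightarrow$ \ref{cond3-hom}. The core will be the equivalence \ref{cond1-hom} $\Leftrightarrow$ \ref{cond2-hom}, obtained by discretizing the functional equation \eqref{Bax-hom-eq} at the SOV points and invoking Theorem~\ref{th-eigen-t}; the equivalence \ref{cond2-hom} $\Leftrightarrow$ \ref{cond3-hom} will be a self-contained algebraic reformulation, the only new ingredient being the \emph{companion solution} $Q^+(\lambda)\equiv Q(\lambda+i\pi)$. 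Throughout I would use that $a,d$ are $i\pi$-periodic up to the sign $(-1)^{\mathsf{N}}$ while $\bar{\mathsf{t}}$ (in \ref{cond2-hom}) is $i\pi$-periodic up to $(-1)^{\mathsf{N}-1}$; shifting \eqref{Bax-hom-eq} by $i\pi$ then shows that $Q^+$ obeys the sign-flipped equation $\bar{\mathsf{t}}(\lambda)\,Q^+(\lambda)=a(\lambda)\,Q^+(\lambda-\eta)-d(\lambda)\,Q^+(\lambda+\eta)$.

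To prove \ref{cond2-hom} $\Rightarrow$ \ref{cond1-hom}, I would first deduce from \eqref{Bax-hom-eq} and the leading asymptotics (as in Lemma~\ref{lem-ZQ}) that $\bar{\mathsf{t}}$ is of the form \eqref{t-form}. Particularizing \eqref{Bax-hom-eq} at $\lambda=\xi_n^{(h_n)}$ and using $\xi_n^{(h_n)}\mp\eta=\xi_n^{(h_n\pm1)}$ reproduces exactly the discrete relations \eqref{eq-left}, so the vector $\big(Q(\xi_n^{(0)}),\ldots,Q(\xi_n^{(2s_n)})\big)$ lies in the nullspace of $D_{\bar{\mathsf{t}}}^{(n)}$; particularizing instead at $\lambda=\xi_n^{(h_n)}+i\pi$ and using the $Q^+$-equation shows that $\big(Q^+(\xi_n^{(0)}),\ldots,Q^+(\xi_n^{(2s_n)})\big)$ lies in the nullspace of the tridiagonal matrix obtained from $D_{\bar{\mathsf{t}}}^{(n)}$ by flipping the sign of both off-diagonals. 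Since this sign flip is a similarity by $\mathrm{diag}(1,-1,1,\ldots)$, the two matrices have the same determinant. The condition $\big(Q(\xi_n^{(0)}),Q(\xi_n^{(0)}+i\pi)\big)\neq(0,0)$ guarantees that at least one of these two vectors is non-zero (its first entry being non-zero), hence $\det_{2s_n+1}D_{\bar{\mathsf{t}}}^{(n)}=0$ for every $n$, i.e. \eqref{discrete-sys} holds and $\bar{\mathsf{t}}\in\Sigma_{\bar{\mathsf{T}}}$.

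The equivalence \ref{cond2-hom} $\Leftrightarrow$ \ref{cond3-hom} I would treat purely algebraically. Assuming \ref{cond2-hom}, the combination $W(\lambda)\equiv Q(\lambda)\,Q^+(\lambda-\eta)+Q^+(\lambda)\,Q(\lambda-\eta)$ satisfies, thanks to \eqref{Bax-hom-eq} and the $Q^+$-equation, the first-order relation $a(\lambda)\,W(\lambda)=d(\lambda)\,W(\lambda+\eta)$; matching this against the identity $a(\lambda)\,w_\epsilon(\lambda)=d(\lambda+\eta)\,w_\epsilon(\lambda+\eta)$ (a direct telescoping computation from \eqref{w-eps}), together with the degree and leading coefficient of $W$, forces $W(\lambda)=d(\lambda)\,w_\epsilon(\lambda)$ for a uniquely determined sign $\epsilon$, which is \eqref{W-relation}. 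Eliminating the $a$-terms between \eqref{Bax-hom-eq} and the $Q^+$-equation (multiplying them by $Q^+(\lambda-\eta)$ and $Q(\lambda-\eta)$ respectively and adding) then yields \eqref{expr-t-Q}. Conversely, starting from \eqref{W-relation}--\eqref{expr-t-Q}, I would substitute the two shifts of \eqref{W-relation} into $\bar{\mathsf{t}}(\lambda)\,Q(\lambda)\,w_\epsilon(\lambda)$ as computed from \eqref{expr-t-Q}; the cross terms cancel and, using again $d(\lambda+\eta)\,w_\epsilon(\lambda+\eta)=a(\lambda)\,w_\epsilon(\lambda)$, one recovers \eqref{Bax-hom-eq}. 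The $i\pi$-quasi-periodicity of $\bar{\mathsf{t}}$ and the non-vanishing condition are read off \eqref{expr-t-Q} and \eqref{W-relation}.

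The main work, and the main obstacle, is the remaining direction \ref{cond1-hom} $\Rightarrow$ \ref{cond2-hom}: given $\bar{\mathsf{t}}\in\Sigma_{\bar{\mathsf{T}}}$ one must construct a $Q$ of the form \eqref{Q-homo-Ansatz} and prove its uniqueness. As in the inhomogeneous case I would observe that, for $Q$ a double-period trigonometric polynomial of degree $\mathsf{N}_s$, both members of \eqref{Bax-hom-eq} are double-period trigonometric polynomials of controlled degree, so that \eqref{Bax-hom-eq} is equivalent to its evaluation at a finite set of points pairwise distinct modulo $4i\pi$; the natural choice is the union of the strings $\{\xi_n^{(h)}\}$ and their $i\pi$-shifts $\{\xi_n^{(h)}+i\pi\}$, supplemented by an auxiliary point. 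On each string the unique (up to scale) solution is fixed by the nullspace vector $\mathbf{q}_{\,\bar{\mathsf{t}}}^{(n)}$ through the recursion \eqref{rec-y}, and the remaining freedom is closed by interpolation in the $(\mathsf{N}_s+1)$-dimensional space of such polynomials, reducing existence and uniqueness to the non-vanishing of a determinant which I expect to be a non-zero polynomial in the interpolation data, exactly as for the matrix $\mathcal{M}_{\bar{\mathsf{t}}}$ appearing in the proof of Theorem~\ref{th-inh-eq}. Here the genuinely new difficulty is the double period: unlike in the inhomogeneous case the solution is only $4i\pi$-periodic, so one must work in the space of half-integer trigonometric polynomials and control simultaneously the values of $Q$ on the real strings and those of $Q^+$ on their shifts. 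An analogue of Lemma~\ref{lem-deg} (leading-asymptotic matching of the two members of \eqref{Bax-hom-eq}) is then needed to guarantee that the interpolating solution has exact degree $\mathsf{N}_s$ and genuinely factorizes as $\prod_a\sinh\!\big(\tfrac{\lambda-\lambda_a}{2}\big)$, while the degenerate situation in which $Q$ would vanish on a whole string is to be excluded precisely by the admissibility condition $\big(Q(\xi_j^{(0)}),Q(\xi_j^{(0)}+i\pi)\big)\neq(0,0)$.
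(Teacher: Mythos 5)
Your overall architecture coincides with the paper's, and two of the three blocks are essentially the paper's arguments: the implication \ref{cond2-hom}\,$\Rightarrow$\,\ref{cond1-hom} via the two vectors $\mathbf{Q}^{(n)}$ and $\widetilde{\mathbf{Q}}^{(n)}$ lying in the nullspaces of $D_{\bar{\mathsf{t}}}^{(n)}$ and of its diagonal conjugate $X_{i\pi}^{-1}D_{\bar{\mathsf{t}}}^{(n)}X_{i\pi}$, with at least one of them non-zero by the admissibility condition, is exactly what the paper does; and your treatment of \ref{cond2-hom}\,$\Leftrightarrow$\,\ref{cond3-hom} through the quantum Wronskian $W_Q$, the first-order relation $a(\lambda)\,W_Q(\lambda)=d(\lambda)\,W_Q(\lambda+\eta)$ and the telescoping identity $a(\lambda)\,w_\epsilon(\lambda)=d(\lambda+\eta)\,w_\epsilon(\lambda+\eta)$ reproduces Lemma~\ref{lem-WQ} and the corresponding elimination computations of the paper.

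The genuine gap is in \ref{cond1-hom}\,$\Rightarrow$\,\ref{cond2-hom}, precisely where you locate ``the main work''. Your plan --- fix $Q$ on each string by the nullspace vectors up to scales, close the system by interpolation in the $(\mathsf{N}_s+1)$-dimensional space $\mathbb{T}_{\mathsf{N}_s}[\frac{\lambda}{2}]$, and ``reduce existence and uniqueness to the non-vanishing of a determinant exactly as for the matrix $\mathcal{M}_{\bar{\mathsf{t}}}$'' --- does not go through, because the analogy with Theorem~\ref{th-inh-eq} breaks at exactly this point. In the inhomogeneous case the discretization yields $\mathsf{N}$ homogeneous equations in $\mathsf{N}+1$ unknowns, so \emph{existence} is automatic and the determinant only controls uniqueness. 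Here, for $Q\in\mathbb{T}_{\mathsf{N}_s}[\frac{\lambda}{2}]$, the function $F_Q$ belongs to $\mathbb{T}_{\mathsf{N}_s+2\mathsf{N}-2}[\frac{\lambda}{2}]$, so $F_Q\equiv 0$ amounts to $\mathsf{N}_s+2\mathsf{N}-1$ linear conditions on $\mathsf{N}_s+1$ coefficients: imposing the vanishing only at the points $\xi_n^{(h)}$ produces a space $\mathbb{V}_1$ of dimension $\ge 1$, but leaves $F_Q=g(\lambda)\prod_{n,h}\sinh\big(\frac{\lambda-\xi_n^{(h)}}{2}\big)$ with $g\in\mathbb{T}_{\mathsf{N}-2}[\frac{\lambda}{2}]$ a priori non-zero; the residual system for $g$ is overdetermined, and existence of a non-trivial $Q$ with $g=0$ is a rank-\emph{deficiency} statement, not a determinant non-vanishing statement. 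The paper's resolution is a rank--nullity argument: the image of $\mathcal{F}_{|_{\mathbb{V}_1}}$ is constrained by the orthogonality of $\mathrm{Im}\,D_{\bar{\mathsf{t}}}^{(n)}$ to the left null vectors $\mathbf{p}_{\,\bar{\mathsf{t}}}^{(n)}$ (the conditions \eqref{cond-g-bis} on $g$), and the resulting matrix $\widetilde{\mathcal{N}}_{\bar{\mathsf{t}}}(\zeta)$ is similar to the transpose of $\mathcal{N}_{\bar{\mathsf{t}}}(\zeta_0)$, whence $\dim\mathrm{Im}\,\mathcal{F}_{|_{\mathbb{V}_1}}\le\dim\mathbb{V}_1-1$ and $\ker\mathcal{F}\neq\{0\}$. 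This interplay between the right and left nullspaces of $D_{\bar{\mathsf{t}}}^{(n)}$ is the key idea and is absent from your sketch. Two smaller omissions in the same direction: uniqueness comes from a separate Wronskian argument (Lemma~\ref{lem-uniqueness}), not from your interpolation determinant; and the admissibility $\big(Q(\xi_j^{(0)}),Q(\xi_j^{(0)}+i\pi)\big)\neq(0,0)$ is a \emph{conclusion} to be proved for the constructed $Q$ (the paper shows that a $Q$ vanishing on a whole string would force $W_Q(\xi_n^{(2s_n)})=0$, contradicting \eqref{W-relation}), not a hypothesis you may invoke to exclude the degenerate case.
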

 
 Before proving this theorem, let us briefly comment about its content.
 
 This theorem provides in fact two equivalent characterizations of the transfer matrix spectrum $\Sigma_{\bar{\mathsf{T}}}$ in terms of functions  of the form \eqref{Q-homo-Ansatz}. The first one (item  {\it \ref{cond2-hom}.}) corresponds to the usual characterization in terms of the solutions of the homogeneous Baxter functional equation \eqref{Bax-hom-eq}. The statement  is that the description in terms of solutions of the form \eqref{Q-homo-Ansatz} to \eqref{Bax-hom-eq} is indeed complete. We can therefore use the solutions of the corresponding Bethe equations (see Corollary~\ref{cor-Bethe-hom}) to analyze the spectrum and construct the corresponding eigenvectors. Note that the proof of  {\it \ref{cond2-hom}.} provides an implicit construction of the $Q$-operator, as a diagonal matrix in the eigenbasis of the ($\kappa$-twisted) antiperiodic transfer matrix.
 
 The second characterization (item  {\it \ref{cond3-hom}.}) is maybe less usual. It proposes an alternative way to obtain the Bethe roots without solving explicitly the Bethe equations, but instead based on the identification of the $Q$-solutions to the relation \eqref{W-relation}. 
This characterization is related to the fact that, as a second order difference equation in the function $Q(\lambda)$, \eqref{Bax-hom-eq} is actually expected to admit, for each fixed function $\bar{\mathsf{t}}(\lambda)$,  two linearly independent $Q$-solutions \cite{BazLZ97,KriLWZ97}. It is indeed easy to see that, if $Q(\lambda)$ satisfies the homogeneous functional equation \eqref{Bax-hom-eq} for a given function $\bar{\mathsf{t}}(\lambda)$ such that $\mathsf{\bar{t}}(\lambda+i\pi)=(-1)^{\mathsf{N}-1}\,\mathsf{\bar{t}}(\lambda)$, then the function $Q(\lambda+i\pi)$ satisfies the $i\pi$-twisted homogeneous functional equation
\begin{equation}\label{hom-eq-bis}
   \bar{\mathsf{t}}(\lambda)\, Q(\lambda+i\pi )
   =a(\lambda )\, Q(\lambda+i\pi -\eta )-d(\lambda)\, Q(\lambda +i\pi+\eta ),
\end{equation}
or, in other words, the function $\widetilde{Q}(\lambda)\equiv e^{i\pi\frac{\lambda}{\eta}} Q(\lambda+i\pi)$ provides another solution to the homogeneous functional equation \eqref{Bax-hom-eq} associated to $\bar{\mathsf{t}}(\lambda)$ (of course, if $Q(\lambda)$ is of the form \eqref{Q-homo-Ansatz}, this is not the case for $\widetilde{Q}(\lambda)$). The quantum Wronskian of these two solutions is then proportional (with proportionality factor $e^{i\pi\frac{\lambda}{\eta}}$) to the following function:
\begin{equation}\label{def-W}
W_Q(\lambda )\equiv Q(\lambda +i\pi )\, Q(\lambda -\eta )+Q(\lambda )\, Q(\lambda+i\pi -\eta ).
\end{equation}
It can be shown that the latter has to satisfy the identity \eqref{W-relation}, a condition which is in fact equivalent to the $i\pi$-quasi-periodicity property of $\bar{\mathsf{t}}(\lambda)$ (see Lemma~\ref{lem-WQ}). Hence, as functions, these two solutions $Q(\lambda)$ and $\widetilde{Q}(\lambda)$ are independent and they can be used to construct the function $\bar{\mathsf{t}}(\lambda)$ as in \eqref{expr-t-Q}.
Item  {\it \ref{cond3-hom}.} of Theorem~\ref{th-hom} states that, conversely, solutions of the form \eqref{Q-homo-Ansatz} to the Wronskian identity~\eqref{W-relation} completely characterize the transfer matrix spectrum $\Sigma_{\bar{\mathsf{T}}}$.

\begin{rem}\label{rem-spin1/2}
The formulation of item {\it \ref{cond3-hom}.} of Theorem~\ref{th-hom} can be simplified in the spin-1/2 case: in that case the function $w_\epsilon(\lambda)$ \eqref{w-eps} is in fact a constant, so that the function $\bar{\mathsf{t}}(\lambda)$ defined through \eqref{expr-t-Q} is automatically entire. 
\end{rem}

\begin{rem}
The Wronskian identity \eqref{W-relation} notably implies the following sum-rule for the roots $\lambda_j$ of the function $Q(\lambda)$, which can be obtained from the comparison of the $\lambda\to\pm\infty$ leading asymptotic behavior of the two members of \eqref{W-relation}:
\begin{equation}
\exists m\in \mathbb{Z},\quad
\sum_{a=1}^{\mathsf{N}_s}\lambda _{a}=i\Big(\frac{1-\epsilon}{2}+2m\Big)\pi +
\sum_{b=1}^{\mathsf{N}}\sum_{h_b=0}^{2s_b-1}\xi _{b}^{(h_b)}
-\frac{\mathsf{N}_s}{2}\eta.  \label{asymp-cond}
\end{equation}
\end{rem}

\begin{rem}\label{rem-prop-vec}
For $\bar{\mathsf{t}}(\lambda)\in\Sigma_{\bar{\mathsf{T}}}$, the two vectors $\mathbf{Q}^{(n)}$ and $\widetilde{\mathbf{Q}}^{(n)}$ of size $2s_n$ defined from the two independent solutions $Q(\lambda)$ and $\widetilde{Q}(\lambda)\equiv e^{i\pi\frac{\lambda}{\eta}} Q(\lambda+i\pi)$ to \eqref{Bax-hom-eq} as
\begin{equation}\label{def-Q-tildeQ}
  \mathbf{Q}^{(n)}=\begin{pmatrix}
    Q(\xi_n^{(0)})  \\ \vdots \\ Q(\xi_n^{(h_n)}) \\ \vdots \\ Q(\xi_n^{(2s_n)})
    \end{pmatrix}
    \qquad
    \text{and}
    \qquad
   \widetilde{\mathbf{Q}}^{(n)}=\begin{pmatrix}
    Q(\xi_n^{(0)}+i\pi) \\ \vdots \\ (-1)^{h_n}\,Q(\xi_n^{(h_n)}+i\pi) \\ \vdots \\ (-1)^{2s_n}\,Q(\xi_n^{(2s_n)}+i\pi) 
    \end{pmatrix}
\end{equation}
should instead be proportional, since they both satisfy \eqref{eq-left} and therefore both belong to the one-dimensional nullspace of the matrix $D_{\bar{\mathsf{t}}}^{(n)}$ (note that this is in agreement with the form \eqref{W-relation} of the Wronskian). 
One can show conversely (see Proposition~\ref{prop-WQ}) that the proportionality of these two vectors (together with the fact that at least one of them is non-zero) for each $n\in\{1,\ldots,\mathsf{N}\}$ is a sufficient condition for a function $\mathsf{\bar{t}}(\lambda)$ defined by \eqref{expr-t-Q} to belong to $\Sigma_{\bar{\mathsf{T}}}$. Here there is no need to impose in addition that the function \eqref{expr-t-Q} is entire or satisfies the adequate $i\pi$-quasi-periodicity, since these properties are simple consequences of the form \eqref{Q-homo-Ansatz} of $Q(\lambda)$ and of the proportionality property of the vectors \eqref{def-Q-tildeQ}.
Hence, solving this proportionality condition for functions of the form \eqref{Q-homo-Ansatz} is an alternative to solving the Bethe equations and leads to the same sets of Bethe roots.
Note that, in the spin-$1/2$ case, for $Q(\lambda)$ of the form \eqref{Q-homo-Ansatz}, the proportionality condition of the two vectors \eqref{def-Q-tildeQ} is strictly equivalent to the Wronskian identity \eqref{W-relation}, as it is clear from the proof of Proposition~\ref{prop-WQ}.
\end{rem}

 \begin{rem}\label{rem-Q+ipi}
The fact that we have these two solutions also means that we can construct the eigenvectors \eqref{eigenl} and \eqref{eigenr} either from $Q(\lambda)$ or from $Q(\lambda+i\pi)$, i.e. from either of the two vectors \eqref{def-Q-tildeQ}. This enables us to relax our notion of {\em admissible} solution with respect to Remark~\ref{rem-xi_j}: an admissible solution $Q(\lambda)$ of Equation~\eqref{Bax-hom-eq} is here defined to be a solution such that $\big( Q(\xi_j^{(0)}), Q(\xi_j^{(0)}+i\pi)\big)\not=(0,0)$ for all $j\in\{1,\ldots,\mathsf{N}\}$. This condition is indeed sufficient to ensure that at least one of the two vectors \eqref{def-Q-tildeQ} is non-zero, and as a consequence that the corresponding function $\mathsf{\bar{t}}(\lambda)$ belongs to the transfer matrix spectrum.
\end{rem}

As previously, item  {\it \ref{cond2-hom}.} of this theorem enables us to completely characterize the spectrum and eigenstates of the ($\kappa$-twisted) antiperiodic transfer matrix in terms of the admissible solutions  $\Lambda=\{\lambda_1,\ldots,\lambda_{\mathsf{N}_s}\}$ of the system of homogeneous Bethe equations following from the cancellation conditions of all apparent poles at $\lambda=\lambda_j$, $1\le j\le \mathsf{N}_s$,  of the function
\begin{equation}\label{bethe-hom}
 \bar{\mathsf{t}}_{\Lambda}(\lambda)\equiv
-a(\lambda)\prod_{j=1}^{\mathsf{N}_s}\frac{\sinh\big(\frac{\lambda-\lambda_j-\eta}{2}\big)}{\sinh\big(\frac{\lambda-\lambda_j}{2}\big)}
   +d(\lambda)\prod_{j=1}^{\mathsf{N}_s}\frac{\sinh\big(\frac{\lambda-\lambda_j+\eta}{2}\big)}{\sinh\big(\frac{\lambda-\lambda_j}{2}\big)}.
\end{equation}

\begin{corollary}\label{cor-Bethe-hom}
Under the condition \eqref{cond-inh}, there exists a one-to-one correspondence between $\Sigma_{\bar{\mathsf{T}}}$ and the set $\Sigma_{\mathrm{BAE}}$ of solutions $\Lambda=\{\lambda_1,\ldots,\lambda_{\mathsf{N}_s}\}$ to the system of $\mathsf{N}_s$ homogeneous Bethe equations following from the condition that the function \eqref{bethe-hom} is entire 
and which satisfy in addition the following conditions:
\begin{enumerate} 
  \item[\sf{({\makebox[0.6em]{i}})}] $\forall j\in\{1,\ldots,\mathsf{N}_s\}, \quad\Im\lambda_j\in[0,2\pi[\ $;
  \item[\sf{({\makebox[0.6em]{ii}})}]
  $\forall k\in\{1,\ldots,\mathsf{N}\}, \quad
  \big\{\xi_k^{(0)},\xi_k^{(0)}+i\pi\big\} \not\subset
  \big\{\lambda_j+2 i\ell\pi \mid 1\le j\le \mathsf{N}_s,\ \ell\in\mathbb{Z} \big\}\,$;
  \item[\sf{({\makebox[0.6em]{iii}})}]
  the function \eqref{bethe-hom} satisfies the quasi-periodicity property $\bar{\mathsf{t}}_{\Lambda}(\lambda+i\pi)=(-1)^{\mathsf{N}-1}\bar{\mathsf{t}}_{\Lambda}(\lambda)$.
\end{enumerate}
The eigenvalue $\bar{\mathsf{t}}_{\Lambda}(\lambda)\in\Sigma_{\bar{\mathsf{T}}}$ associated to $\Lambda\in\Sigma_{\mathrm{BAE}}$ is then given by the entire function \eqref{bethe-hom}.
For any $\kappa\in\mathbb{C}\setminus\{0\}$, the one-dimensional left and right eigenspaces of $\bar{\mathsf{T}}^{(\kappa)}(\lambda)$ associated with the eigenvalue $\bar{\mathsf{t}}_{\Lambda}(\lambda)\in\Sigma_{\bar{\mathsf{T}}}$ are respectively spanned by the vectors
\begin{equation}\label{l-space}
     \sum_{\mathbf{h}}
     \prod_{j=1}^\mathsf{N} \left\{(\epsilon \kappa)^{h_j}\, Q^{(\epsilon)}_{\Lambda}(\xi_j^{(h_j)})\right\}\,
     \prod_{i<j}\sinh(\xi_j^{(h_j)}-\xi_i^{(h_i)})\, \bra{\mathbf{h} }  ,\ \ \epsilon=\pm1,
\end{equation}
and
\begin{equation}\label{r-space}
     \sum_{\mathbf{h}}
     \prod_{j=1}^\mathsf{N} \Bigg\{\! (-\epsilon\kappa)^{-h_j}
     \Bigg(\!\prod_{\ell=0}^{h_j-1}\!\frac{a(\xi_n^{(\ell)})}{d(\xi_n^{(\ell+1)})}\!\Bigg)
     Q^{(\epsilon)}_{\Lambda}(\xi_j^{(h_j)})\!\Bigg\}
     \prod_{i<j}\sinh(\xi_j^{(h_j)}-\xi_i^{(h_i)})\, \ket{\mathbf{h} }   ,\ \ \epsilon=\pm1.
\end{equation}
Here the two functions $Q^{(\epsilon)}_{\Lambda}(\lambda)$, $\epsilon\in\{+1,-1\}$, are defined in terms of the set of Bethe roots $\Lambda$ as follows:
\begin{equation}\label{Q-t-pm}
   Q_{\Lambda}^{(\epsilon)}(\lambda)=
   \prod_{j=1}^{\mathsf{N}_s}\sinh\left(\frac{\lambda-\lambda_j}{2}+\frac{(1-\epsilon)}{4} i\pi\right).
\end{equation}
\end{corollary}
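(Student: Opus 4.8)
The plan is to deduce the corollary from Theorem~\ref{th-hom}, specifically from the equivalence of items \textit{\ref{cond1-hom}.} and \textit{\ref{cond2-hom}.}, together with the eigenvector formulas of Theorem~\ref{th-eigen-t}, by translating the statement about $Q$-solutions of~\eqref{Bax-hom-eq} into a statement about the set $\Lambda$ of their roots. First I would set up the dictionary between a $Q$-function of the form~\eqref{Q-homo-Ansatz} and its root set. Since each factor $\sinh\big(\frac{\lambda-\lambda_j}{2}\big)$ changes only by an overall sign under $\lambda_j\mapsto\lambda_j+2i\pi$ and is genuinely $4i\pi$-periodic in $\lambda_j$, the multiset of roots is canonically defined modulo $2i\pi$; condition \textsf{(i)} merely fixes the representative with $\Im\lambda_j\in[0,2\pi[$, the residual global sign of $Q$ being irrelevant because~\eqref{Bax-hom-eq} is homogeneous in $Q$. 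Dividing~\eqref{Bax-hom-eq} by $Q(\lambda)$ then produces exactly the function~\eqref{bethe-hom}, and the statement that $\bar{\mathsf{t}}(\lambda)$ is \emph{entire} is equivalent to the cancellation of the apparent simple poles of~\eqref{bethe-hom} at each $\lambda=\lambda_j$, i.e.\ to the system of $\mathsf{N}_s$ homogeneous Bethe equations. I would also record the elementary computation that $Q(\xi_k^{(0)})=0$ (resp.\ $Q(\xi_k^{(0)}+i\pi)=0$) precisely when $\xi_k^{(0)}$ (resp.\ $\xi_k^{(0)}+i\pi$) is a root of $Q$ modulo $2i\pi$, so that the admissibility requirement $\big(Q(\xi_k^{(0)}),Q(\xi_k^{(0)}+i\pi)\big)\neq(0,0)$ of Theorem~\ref{th-hom} translates verbatim into condition \textsf{(ii)}.

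With this dictionary the forward implication is immediate: if $\bar{\mathsf{t}}\in\Sigma_{\bar{\mathsf{T}}}$, item \textit{\ref{cond2-hom}.} of Theorem~\ref{th-hom} supplies a unique admissible $Q$ of the form~\eqref{Q-homo-Ansatz} solving~\eqref{Bax-hom-eq}; its roots (chosen with $\Im\lambda_j\in[0,2\pi[$) form a set $\Lambda$ satisfying the Bethe equations (entireness), condition \textsf{(ii)} (admissibility) and condition \textsf{(iii)} (since item \textit{\ref{cond2-hom}.} imposes $\bar{\mathsf{t}}(\lambda+i\pi)=(-1)^{\mathsf{N}-1}\bar{\mathsf{t}}(\lambda)$ and by construction $\bar{\mathsf{t}}=\bar{\mathsf{t}}_\Lambda$). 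Conversely, given $\Lambda\in\Sigma_{\mathrm{BAE}}$, I would set $Q=Q^{(+)}_\Lambda$ via~\eqref{Q-homo-Ansatz}, observe that multiplying~\eqref{bethe-hom} by $Q(\lambda)$ reproduces~\eqref{Bax-hom-eq}, and check that $\bar{\mathsf{t}}_\Lambda$ is entire (Bethe equations), $i\pi$-quasi-periodic with sign $(-1)^{\mathsf{N}-1}$ (condition \textsf{(iii)}) and admissible (condition \textsf{(ii)}); item \textit{\ref{cond2-hom}.} then yields $\bar{\mathsf{t}}_\Lambda\in\Sigma_{\bar{\mathsf{T}}}$. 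Injectivity and well-definedness of the correspondence $\bar{\mathsf{t}}\leftrightarrow\Lambda$ both follow from the \emph{uniqueness} of the $Q$-solution asserted in item \textit{\ref{cond2-hom}.}: two Bethe sets giving the same eigenvalue would, by that uniqueness, yield the same $Q$ up to normalization, hence the same roots modulo $2i\pi$, hence the same $\Lambda$ under the normalization \textsf{(i)}. This establishes the bijection, the associated eigenvalue being~\eqref{bethe-hom}.

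It remains to produce the eigenvector formulas~\eqref{l-space}--\eqref{r-space}. Here I would use that~\eqref{Bax-hom-eq}, being a second-order difference equation, admits for fixed $\bar{\mathsf{t}}$ the two solutions $Q$ and $\widetilde{Q}(\lambda)=e^{i\pi\lambda/\eta}Q(\lambda+i\pi)$ discussed around Remark~\ref{rem-prop-vec}. One checks directly from~\eqref{Q-t-pm} that $Q^{(+)}_\Lambda(\lambda)=Q(\lambda)$ and $Q^{(-)}_\Lambda(\lambda)=Q(\lambda+i\pi)$, so that the two vectors $\mathbf{Q}^{(n)}$ and $\widetilde{\mathbf{Q}}^{(n)}$ of~\eqref{def-Q-tildeQ} have respective components $Q^{(+)}_\Lambda(\xi_n^{(h_n)})$ and $(-1)^{h_n}Q^{(-)}_\Lambda(\xi_n^{(h_n)})$, the sign $(-1)^{h_n}$ arising from the prefactor $e^{i\pi\lambda/\eta}$ evaluated at $\xi_n^{(h_n)}=\xi_n+(s_n-h_n)\eta$ (up to an $h_n$-independent constant absorbed into normalization). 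By Remark~\ref{rem-prop-vec} both vectors lie in the one-dimensional nullspace of $D_{\bar{\mathsf{t}}}^{(n)}$, hence each is proportional to $\mathbf{q}_{\,\bar{\mathsf{t}}}^{(n)}$; substituting $\mathsf{q}_{\,\bar{\mathsf{t}},h_j}^{(j)}\propto\epsilon^{h_j}Q^{(\epsilon)}_\Lambda(\xi_j^{(h_j)})$ into the eigenvector expressions~\eqref{eigenl}--\eqref{eigenr} of Theorem~\ref{th-eigen-t} reproduces~\eqref{l-space}--\eqref{r-space}, the factor $\epsilon^{h_j}$ combining with $\kappa^{h_j}$ into $(\epsilon\kappa)^{h_j}$. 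Condition \textsf{(ii)} guarantees that at least one of the two vectors is non-zero, so both values of $\epsilon$ span the same one-dimensional eigenspace.

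I expect the main obstacle to be organizational rather than analytic: carefully matching the three side-conditions \textsf{(i)}--\textsf{(iii)} to, respectively, the root-normalization freedom, the admissibility hypothesis, and the quasi-periodicity hypothesis of Theorem~\ref{th-hom}, while tracking the $2i\pi$ and $4i\pi$ periodicities and the $\epsilon$-dependent signs in passing from $\mathbf{q}_{\,\bar{\mathsf{t}}}^{(n)}$ to $Q^{(\pm)}_\Lambda$. In particular one must take care that condition \textsf{(iii)} is a genuine extra constraint, not automatically implied by the Bethe equations alone, and that it corresponds exactly to the $i\pi$-quasi-periodicity required in item \textit{\ref{cond2-hom}.} of Theorem~\ref{th-hom}.
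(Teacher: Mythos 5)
Your proposal follows exactly the route the paper intends for this corollary: the paper gives no separate proof, presenting the statement as the direct translation of item \textit{2.}\ of Theorem~\ref{th-hom} combined with Theorem~\ref{th-eigen-t}, Remark~\ref{rem-eigenvector-D} and Remarks~\ref{rem-prop-vec}--\ref{rem-Q+ipi}; your dictionary (roots modulo $2i\pi$ versus $Q$ up to an irrelevant sign, condition \textsf{(ii)} versus admissibility, condition \textsf{(iii)} versus the quasi-periodicity hypothesis), the identifications $Q^{(+)}_\Lambda=Q$, $Q^{(-)}_\Lambda(\lambda)=Q(\lambda+i\pi)$ with the sign $(-1)^{h_n}$ absorbed from the prefactor $e^{i\pi\lambda/\eta}$, and the substitution into \eqref{eigenl}--\eqref{eigenr} via \eqref{left-vector} carry out precisely that translation, and your handling of the root-normalization/sign ambiguity is in fact more careful than the paper's own phrasing of uniqueness.

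One step is looser than the rest: your closing claim that ``condition \textsf{(ii)} guarantees that at least one of the two vectors is non-zero, so both values of $\epsilon$ span the same one-dimensional eigenspace.'' Condition \textsf{(ii)} is a site-by-site statement: for each $n$, at least one of $\mathbf{Q}^{(n)}$, $\widetilde{\mathbf{Q}}^{(n)}$ in \eqref{def-Q-tildeQ} is non-zero. But the vector \eqref{l-space} for a \emph{fixed} $\epsilon$ is non-zero only if the corresponding vector is non-zero at \emph{every} site (its coefficients factorize over sites), and nothing in your argument excludes the ``mixed'' possibility that, say, $\mathbf{Q}^{(1)}=0\neq\widetilde{\mathbf{Q}}^{(1)}$ while $\widetilde{\mathbf{Q}}^{(2)}=0\neq\mathbf{Q}^{(2)}$; in that case both displayed vectors would vanish and neither value of $\epsilon$ would produce an eigenvector, the true eigenvector of Theorem~\ref{th-eigen-t} being built from a site-dependent choice between the two vectors. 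To make this step airtight one must either prove that $Q(\xi_n^{(0)})\neq 0$ \emph{and} $Q(\xi_n^{(0)}+i\pi)\neq 0$ for all $n$ (the proof of Theorem~\ref{th-hom} only excludes their \emph{simultaneous} vanishing, via the Wronskian relation \eqref{W-relation}), or weaken the conclusion to the site-wise choice just described. Note that this imprecision is inherited from the paper, whose corollary asserts the same double-$\epsilon$ statement without justification; apart from this point, your bijection argument, the injectivity via uniqueness of the $Q$-solution, and the derivation of \eqref{l-space}--\eqref{r-space} are sound.
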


We shall now prove Theorem~\ref{th-hom}, for which we first formulate a couple of useful lemmas.

\begin{lemma}\label{lem-uniqueness}
Let $Q_1(\lambda)$ and $Q_2(\lambda)$ be of the form \eqref{Q-homo-Ansatz} such that
the two functions
\begin{equation}\label{t-egal}
   \frac{-a(\lambda )\, Q_1(\lambda -\eta )+d(\lambda)\, Q_1(\lambda +\eta )}{Q_1(\lambda )}
   \quad
   \text{and}
   \quad
    \frac{-a(\lambda )\, Q_2(\lambda -\eta )+d(\lambda)\, Q_2(\lambda +\eta )}{Q_2(\lambda )}
\end{equation}
are identically equal. Then $Q_1(\lambda)=Q_2(\lambda)$ identically.
\end{lemma}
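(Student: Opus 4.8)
The plan is to use that, by hypothesis, $Q_1$ and $Q_2$ both solve the same homogeneous Baxter equation \eqref{Bax-hom-eq} with the common function $\bar{\mathsf{t}}(\lambda)$ furnished by the (equal) ratios in \eqref{t-egal}, and to compare the two solutions through their discrete Wronskian (Casoratian). First I would introduce
\[
  W(\lambda)\equiv Q_1(\lambda)\,Q_2(\lambda+\eta)-Q_1(\lambda+\eta)\,Q_2(\lambda).
\]
Writing \eqref{Bax-hom-eq} for $Q_1$ and for $Q_2$, solving each for $Q_i(\lambda+\eta)$ and substituting into $W$, the terms proportional to $\bar{\mathsf{t}}(\lambda)$ cancel and one is left with the first-order difference relation $d(\lambda)\,W(\lambda)=a(\lambda)\,W(\lambda-\eta)$, with $a,d$ as in \eqref{a-d}. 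Since $Q_1,Q_2$ are of the form \eqref{Q-homo-Ansatz}, hence $4i\pi$-periodic trigonometric polynomials in $e^{\lambda/2}$, the function $W$ is again such a trigonometric polynomial, $W(\lambda)=\sum_k w_k e^{k\lambda/2}$ with only finitely many nonzero $w_k$.

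The crucial step is to prove that $W\equiv 0$. Supposing $W\not\equiv0$, let $k_+$ and $k_-$ be the largest and smallest indices with $w_{k_\pm}\neq0$. I would then match the leading coefficients on the two sides of $d(\lambda)W(\lambda)=a(\lambda)W(\lambda-\eta)$ as $\lambda\to+\infty$ and as $\lambda\to-\infty$, using $a(\lambda)=\prod_n\sinh(\lambda-\xi_n^{(0)})$, $d(\lambda)=\prod_n\sinh(\lambda-\xi_n^{(2s_n)})$ together with $\sum_n(\xi_n^{(0)}-\xi_n^{(2s_n)})=\mathsf{N}_s\eta$. The exponential growth rates agree automatically on both sides, so the matching becomes a condition on the coefficients: at $+\infty$ it forces $(\mathsf{N}_s+k_+/2)\eta\in 2i\pi\mathbb{Z}$ and at $-\infty$ it forces $(\mathsf{N}_s-k_-/2)\eta\in 2i\pi\mathbb{Z}$. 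Because $\eta$ is non-commensurate with $i\pi$, these relations give $k_+=-2\mathsf{N}_s$ and $k_-=+2\mathsf{N}_s$, which is incompatible with $k_-\le k_+$ whenever $\mathsf{N}_s\ge1$. Hence $W\equiv0$ (the case $\mathsf{N}_s=0$ being trivial).

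From $W\equiv0$ one gets $Q_1(\lambda)Q_2(\lambda+\eta)=Q_1(\lambda+\eta)Q_2(\lambda)$, so the meromorphic ratio $R(\lambda)=Q_1(\lambda)/Q_2(\lambda)$ is $\eta$-periodic. As a rational function of $z=e^{\lambda/2}$ invariant under $z\mapsto e^{\eta/2}z$ with $e^{\eta/2}$ not a root of unity, $R$ would take each of its values on the infinite orbit $\{e^{n\eta/2}z\}$, forcing $R$ to be constant: $Q_1=c\,Q_2$. Comparing the coefficients of $e^{\mathsf{N}_s\lambda/2}$ and of $e^{-\mathsf{N}_s\lambda/2}$, and using that for a function of the form \eqref{Q-homo-Ansatz} the product of these two extreme coefficients equals $(-1)^{\mathsf{N}_s}4^{-\mathsf{N}_s}$ independently of the roots, I obtain $c^2=1$, i.e. $c=\pm1$. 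Finally, since replacing a single root $\lambda_a$ by $\lambda_a+2i\pi$ preserves the form \eqref{Q-homo-Ansatz} while flipping the overall sign of $Q$, the two values $c=\pm1$ merely reflect this root ambiguity; with the roots fixed in a common fundamental domain for $\Im$ (the convention used in Corollary~\ref{cor-Bethe-hom}) one has $c=1$, whence $Q_1=Q_2$.

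I expect the main obstacle to be the step $W\equiv0$: the asymptotic coefficient-matching at $\pm\infty$ must be set up carefully, and it is precisely there that the non-commensurability of $\eta$ with $i\pi$ enters decisively (without it, nonzero Casoratians could survive). The only other delicate point is the reduction of the proportionality constant from $\pm1$ to $+1$, which is genuinely a matter of the sign/normalization convention for the roots rather than of analysis.
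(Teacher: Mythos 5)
Your strategy is exactly the paper's: form the Casoratian of the two putative solutions, derive for it a first-order difference relation with coefficients $a,d$ from \eqref{a-d}, kill it by matching leading asymptotics, and conclude that $Q_1/Q_2$ is $\eta$-periodic hence constant. However, the central step contains a sign slip that undoes the contradiction you announce. Because the Baxter equation \eqref{Bax-hom-eq} carries a \emph{minus} sign on the $a$-term, your Casoratian $W(\lambda)=Q_1(\lambda)Q_2(\lambda+\eta)-Q_1(\lambda+\eta)Q_2(\lambda)$ satisfies $d(\lambda)\,W(\lambda)=-a(\lambda)\,W(\lambda-\eta)$, not $d(\lambda)W(\lambda)=+a(\lambda)W(\lambda-\eta)$. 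Moreover, even accepting your sign, the coefficient matching at $+\infty$ gives $(\mathsf{N}_s-k_+/2)\eta\in 2i\pi\mathbb{Z}$, hence $k_+=+2\mathsf{N}_s$ (and $k_-=-2\mathsf{N}_s$ at $-\infty$), which is perfectly compatible with $k_-\le k_+$: the incompatibility you claim does not arise. Two repairs are available. \textsf{(i)} With the correct minus sign the matching condition becomes $e^{(\mathsf{N}_s-k_+/2)\eta}=-1$, which is impossible for any integer $k_+$ when $\eta$ is non-commensurate with $i\pi$, so $W\equiv0$ at once. \textsf{(ii)} Alternatively — and this is the paper's route, via $W_{12}(\lambda)=Q_1(\lambda-\eta)Q_2(\lambda)-Q_1(\lambda)Q_2(\lambda-\eta)$ and $a(\lambda)W_{12}(\lambda)=-d(\lambda)W_{12}(\lambda+\eta)$ — observe that the coefficients of $e^{\pm\mathsf{N}_s\lambda}$ in the two products composing $W$ cancel, so that $W\in\mathbb{T}_{2\mathsf{N}_s-2}[\frac{\lambda}{2}]$, i.e. $|k_\pm|\le 2\mathsf{N}_s-2$, contradicting $k_+=2\mathsf{N}_s$. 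Either fix is a one-line addition, but as written the step ``hence $W\equiv0$'' is not established.

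The remainder of your argument is sound, and in one respect more careful than the paper's own proof, which jumps directly from $W_{12}\equiv 0$ to $Q_1=Q_2$: as you note, $\eta$-periodicity of $Q_1/Q_2$ together with the constraint on the product of extreme coefficients only yields $Q_1=\pm Q_2$, and $-Q$ is again of the form \eqref{Q-homo-Ansatz} (shift one root by $2i\pi$). So the literal statement of the lemma does implicitly rely on the root-normalization you invoke (or on identifying $Q$ with $-Q$, which is harmless everywhere the lemma is used, since only the ratio and the lines spanned by the eigenvectors matter). Flagging this is a genuine improvement in precision.
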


\begin{proof}
  The fact that the two functions \eqref{t-egal} coincide implies that
  \begin{equation}\label{W12-eq}
   a(\lambda)\, W_{12}(\lambda) = -d(\lambda)\, W_{12}(\lambda+\eta),
  \end{equation}
  where
  \begin{equation}
    W_{12}(\lambda)\equiv Q_1(\lambda-\eta)\, Q_2(\lambda)-Q_1(\lambda)\, Q_2(\lambda-\eta).
  \end{equation}
  It is clear that $W_{12}(\lambda)\in\mathbb{T}_{2\mathsf{N}_s-2}[\frac{\lambda}{2}]$, so that the only solution to \eqref{W12-eq} is $W_{12}(\lambda)=0$, which means that $Q_1(\lambda)=Q_2(\lambda)$.
\end{proof} 
 
\begin{lemma}\label{lem-WQ}
Let $Q(\lambda)$ be of the form \eqref{Q-homo-Ansatz}. Then the function $\bar{\mathsf{t}}(\lambda)$ defined as 
\begin{equation}\label{def-bart}
  \bar{\mathsf{t}}(\lambda)\equiv 
  \frac{-a(\lambda )\, Q(\lambda -\eta )+d(\lambda)\, Q(\lambda +\eta )}{Q(\lambda )},
\end{equation}
satisfies the quasi-periodicity property $\bar{\mathsf{t}}(\lambda +i\pi )=(-1)^{\mathsf{N}-1}\,\bar{\mathsf{t}}(\lambda )$ if and only if $Q(\lambda)$ satisfies the Wronskian identity \eqref{W-relation} for some $\epsilon\in\{-1,1\}$.
\end{lemma}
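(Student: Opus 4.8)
The plan is to recast the quasi-periodicity condition as a first-order difference equation for the Wronskian function $W_Q(\lambda)$ of \eqref{def-W}, and then to pin down $W_Q$ from its zeros together with its asymptotics. First I would reduce the statement to an equation on $W_Q$: writing $\bar{\mathsf{t}}$ as in \eqref{def-bart}, using $a(\lambda+i\pi)=(-1)^{\mathsf{N}}a(\lambda)$ and $d(\lambda+i\pi)=(-1)^{\mathsf{N}}d(\lambda)$, and clearing the denominators $Q(\lambda)\,Q(\lambda+i\pi)$, a direct computation shows that $\bar{\mathsf{t}}(\lambda+i\pi)=(-1)^{\mathsf{N}-1}\bar{\mathsf{t}}(\lambda)$ is equivalent to
\[
  a(\lambda)\,W_Q(\lambda)=d(\lambda)\,W_Q(\lambda+\eta),
\]
where one recognizes that the combination $Q(\lambda)Q(\lambda+i\pi+\eta)+Q(\lambda+i\pi)Q(\lambda+\eta)$ produced on the way is exactly $W_Q(\lambda+\eta)$. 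Both members being trigonometric polynomials, this is a genuine polynomial identity.

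With this reduction, the easy direction is immediate: if \eqref{W-relation} holds, i.e. $W_Q=d\,w_\epsilon$, the displayed equation reduces, after cancelling the polynomial $d(\lambda)$, to $a(\lambda)\,w_\epsilon(\lambda)=d(\lambda+\eta)\,w_\epsilon(\lambda+\eta)$, which is checked by inspection: using $\xi_n^{(h)}-\eta=\xi_n^{(h+1)}$, both sides equal $2\epsilon(i/2)^{\mathsf{N}_s}\prod_{n}\prod_{h=1}^{2s_n}\sinh(\lambda-\xi_n^{(h)})$. Hence \eqref{W-relation} implies the difference equation, and therefore the desired quasi-periodicity.

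For the converse I would start from the difference equation and exploit the strings $\xi_n^{(h)}$. Since $a$ vanishes modulo $i\pi$ only at the $\xi_n^{(2s_n)}$ and $d$ only at the $\xi_n^{(0)}$, the condition \eqref{cond-inh} guarantees $a(\xi_n^{(h)})\neq0$ for $0\le h\le 2s_n-1$ and $d(\xi_n^{(h)})\neq0$ for $1\le h\le 2s_n$. Evaluating the equation at $\lambda=\xi_n^{(0)}$ gives $W_Q(\xi_n^{(0)})=0$, and at $\lambda=\xi_n^{(h)}$ for $1\le h\le 2s_n-1$ it gives the recursion $a(\xi_n^{(h)})W_Q(\xi_n^{(h)})=d(\xi_n^{(h)})W_Q(\xi_n^{(h-1)})$, so $W_Q$ vanishes on the whole string $\xi_n^{(0)},\ldots,\xi_n^{(2s_n-1)}$. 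Next, from $Q(\lambda+2i\pi)=(-1)^{\mathsf{N}_s}Q(\lambda)$ one gets $W_Q(\lambda+i\pi)=(-1)^{\mathsf{N}_s}W_Q(\lambda)$, so each such zero is accompanied by its $i\pi$-translate, yielding $2\mathsf{N}_s$ zeros pairwise distinct modulo $2i\pi$ (again by \eqref{cond-inh}). Since $W_Q\in\mathbb{T}_{\mathsf{N}_s}[\lambda]$ has nonvanishing leading and trailing coefficients — a short $\lambda\to\pm\infty$ computation gives them explicitly, in particular $W_Q\not\equiv0$ — it has exactly $2\mathsf{N}_s$ zeros modulo $2i\pi$, whence $W_Q(\lambda)=C\prod_{n=1}^{\mathsf{N}}\prod_{h=0}^{2s_n-1}\sinh(\lambda-\xi_n^{(h)})=\tfrac{C}{2(i/2)^{\mathsf{N}_s}}\,d(\lambda)\,w_{+1}(\lambda)$ for some constant $C$.

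The main obstacle — and the only genuinely computational point — is to show that $C=\pm2(i/2)^{\mathsf{N}_s}$, so that $W_Q$ is exactly of the form $d\,w_\epsilon$ with $\epsilon\in\{-1,1\}$. Here I would compute the leading and the trailing coefficients of $W_Q$ directly from \eqref{Q-homo-Ansatz}: they come out proportional to $e^{-\bar\lambda}$ and $e^{+\bar\lambda}$ respectively (with $\bar\lambda=\sum_a\lambda_a$), and identifying the two resulting expressions for the single constant $C$ forces $e^{\Sigma_\xi-\bar\lambda-\mathsf{N}_s\eta/2}=\pm1$ with $\Sigma_\xi=\sum_{n}\sum_{h=0}^{2s_n-1}\xi_n^{(h)}$, i.e. precisely the sum-rule \eqref{asymp-cond}. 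This simultaneously fixes $\epsilon=(-1)^m\in\{-1,1\}$ and gives $W_Q=d\,w_\epsilon$, completing the equivalence. The subtle point to flag is that the difference equation alone does not constrain $C$ (it is compatible with any scalar multiple of the product over the $\xi_n^{(h)}$, and matching one asymptotic end only defines $C$); it is the compatibility of the two ends, both encoding $\bar\lambda$, that produces the value $\epsilon=\pm1$.
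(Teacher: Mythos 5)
Your proposal is correct and follows essentially the same route as the paper's proof: both reduce the quasi-periodicity of $\bar{\mathsf{t}}$ to the first-order difference equation $a(\lambda)\,W_Q(\lambda)=d(\lambda)\,W_Q(\lambda+\eta)$, identify the zeros of $W_Q\in\hat{\mathbb{T}}_{\mathsf{N}_s}[\lambda]$ with the shifted inhomogeneities $\xi_n^{(h)}$, $0\le h\le 2s_n-1$, and fix the normalization $c_W=2\epsilon(i/2)^{\mathsf{N}_s}$ by matching the $\lambda\to\pm\infty$ asymptotics, which is exactly where the sum rule \eqref{asymp-cond} and the sign $\epsilon$ emerge. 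Your explicit recursion along the strings and the remark that a single asymptotic end does not fix $C$ are just slightly more detailed renderings of the paper's "simple to verify" and "comparison of the $\lambda\to\pm\infty$ leading asymptotic behavior".
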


\begin{proof}
Taking into consideration the functional form of the coefficients $a(\lambda )$ and $d(\lambda )$, the $i\pi$-quasi-periodicity condition of the function \eqref{def-bart} can be rewritten as
\begin{equation}
\frac{-a(\lambda )\,Q(\lambda -\eta )+d(\lambda )\,Q(\lambda +\eta )}{Q(\lambda )}
=\frac{a(\lambda )\,Q(\lambda +i\pi -\eta )-d(\lambda )\,Q(\lambda +i\pi +\eta )}{Q(\lambda +i\pi )},
\end{equation}
or equivalently,
\begin{equation}\label{cond-W-ad}
a(\lambda )\,W_Q(\lambda )=d(\lambda )\,W_Q(\lambda +\eta ),
\end{equation}
with $W_Q(\lambda)$ defined by \eqref{def-W}.
Note that, for $Q(\lambda)$ of the form \eqref{Q-homo-Ansatz}, $W_Q(\lambda)$ takes the form
\begin{multline}\label{W-form}
  W_Q(\lambda)=\left(\frac{i}{2}\right)^{\!\mathsf{N}_s}\left\{\,
  \prod_{j=1}^{\mathsf{N}_s}\left[\sinh\!\left(\lambda-\lambda_j-\frac{\eta}{2}\right)-\sinh\frac{\eta}{2}\right]
  \right. \\
  \left.
  +\prod_{j=1}^{\mathsf{N}_s}\left[\sinh\!\left(\lambda-\lambda_j-\frac{\eta}{2}\right)+\sinh\frac{\eta}{2}\right]
  \right\},
\end{multline}
and therefore belongs to $\hat{\mathbb{T}}_{\mathsf{N}_s}[\lambda]$; hence it can be written in the form 
\begin{equation}\label{W-d-form}
   W_Q(\lambda)=c_W\prod_{j=1}^{\mathsf{N}_s}\sinh(\lambda-\tilde\lambda_j),
\end{equation}
where the roots $\tilde\lambda_j$ and the normalization constant $c_W$ are functions of the roots $\lambda_k$ of $Q(\lambda)$.
From the explicit expression \eqref{a-d} of $a(\lambda)$ and $d(\lambda)$, it is therefore simple to verify that the condition \eqref{cond-W-ad} is satisfied if and only if the set of roots $\{\tilde\lambda_j;1\le j\le \mathsf{N}_s\}$ of $W_Q(\lambda)$ coincides with the set of shifted inhomogeneity parameters $\big\{\xi_j^{(h_j)}; 1\le j\le \mathsf{N}, 0\le h_j\le 2s_j-1\big\}$.
The fact that $c_{W}=2\, \epsilon \, (i/2)^{\mathsf{N}_s}$ for some $\epsilon \in\{-1,1\}$ follows (together with the sum-rule \eqref{asymp-cond}) from the comparison of the $\lambda\to\pm\infty$ leading asymptotic behavior of the two expressions of $W_Q(\lambda)$, namely \eqref{def-W} using \eqref{Q-homo-Ansatz} and \eqref{W-d-form} with the roots identified with the shifted inhomogeneity parameters. 
\end{proof}

\bigskip
\noindent
\textit{Proof of Theorem~\ref{th-hom}.} 
 Let us first show that  {\it \ref{cond2-hom}.} implies {\it \ref{cond1-hom}.}
 We recall that, if $Q(\lambda)$ of the form \eqref{Q-homo-Ansatz} satisfies the homogeneous functional equation \eqref{Bax-hom-eq}, then $Q(\lambda+i\pi)$ satisfies the homogeneous functional equation \eqref{hom-eq-bis}.
Particularizing \eqref{Bax-hom-eq} and \eqref{hom-eq-bis} to the points $\xi_n^{(h_n)}$, $n\in\{1,\ldots,\mathsf{N}\}$, $h_n\in\{0,1,\ldots,2s_n\}$, we obtain that
\begin{equation}
D_{\bar{\mathsf{t}}}^{(n)}
\cdot
 \begin{pmatrix}
Q(\xi_n^{(0)}) \\ 
Q(\xi_n^{(1)}) \\ 
\vdots \\ 
Q(\xi_n^{(2s_{n})})
\end{pmatrix}
=
  X_{i\pi}^{-1}\, D_{\bar{\mathsf{t}}}^{(n)}\, X_{i\pi}
\cdot 
\begin{pmatrix}
Q(\xi_n^{(0)}+i\pi) \\ 
Q(\xi_n^{(1)}+i\pi) \\ 
\vdots \\ 
Q(\xi_n^{(2s_{n})}+i\pi)
\end{pmatrix}
=
\begin{pmatrix}
0 \\ 
\vdots \\ 
\vdots \\ 
0
\end{pmatrix},
\quad
\forall n\in \{1,\ldots,\mathsf{N}\},
\end{equation}
where $X_{i\pi}\equiv X_{\alpha=i\pi}$ is the diagonal matrix defined as in \eqref{Xn}.
It means that, for any $n\in\{1,\ldots,\mathsf{N}\}$, both vectors
\begin{equation}
  \begin{pmatrix}
Q(\xi_n^{(0)}) \\ 
Q(\xi_n^{(1)}) \\ 
\vdots \\ 
Q(\xi_n^{(2s_{n})})
\end{pmatrix}
=\mathbf{Q}^{(n)}
\qquad\text{and}\qquad
X_{i\pi}
\cdot 
\begin{pmatrix}
Q(\xi_n^{(0)}+i\pi) \\ 
Q(\xi_n^{(1)}+i\pi) \\ 
\vdots \\ 
Q(\xi_n^{(2s_{n})}+i\pi)
\end{pmatrix}
=\tilde{\mathbf{Q}}^{(n)}
\end{equation}
belong to the kernel of the matrix $D_{\bar{\mathsf{t}}}^{(n)}$. Since by hypothesis at least one of them is non-zero, it means that $\bar{\mathsf{t}}(\lambda)$ satisfies the system \eqref{discrete-sys}.
 Moreover, from the form of the functional equation \eqref{Bax-hom-eq} and the fact that $\bar{\mathsf{t}}(\lambda)$ is entire, it follows that $\bar{\mathsf{t}}(\lambda)$ belongs to $\mathbb{T}_{2\mathsf{N}-2}[\frac{\lambda}{2}]$. Since $\bar{\mathsf{t}}(\lambda)$ satisfies in addition the quasi-periodicity property $\mathsf{\bar{t}}(\lambda+i\pi)=(-1)^{\mathsf{N}-1}\,\mathsf{\bar{t}}(\lambda)$, it is of the form \eqref{t-form} and therefore belongs to $\Sigma_{\bar{\mathsf{T}}}$. Hence   {\it \ref{cond2-hom}.}  implies   {\it \ref{cond1-hom}.} 
 
 \medskip
 
 Let us now show that   {\it \ref{cond1-hom}.}  implies   {\it \ref{cond2-hom}.} 
 
 Let $\bar{\mathsf{t}}(\lambda)\in\Sigma_{\bar{\mathsf{T}}}$. Let $\epsilon\in\{0,1\}$ such that $\mathsf{N}_s\in(2\mathbb{Z}+\epsilon)$. We consider the following linear map:
 \begin{equation}\label{def-mapF}
 \begin{aligned}
 \mathcal{F}\colon 
 \mathbb{T}^{(\epsilon)}[\textstyle{\frac{\lambda}{2}}]  &\,\to\, 
 \mathbb{T}^{(\epsilon)}[\textstyle{\frac{\lambda}{2}}] 
 \\
  Q(\lambda)\ \ &\,\mapsto \,
     F_Q(\lambda)\equiv
      \mathsf{\bar{t}}(\lambda)\, Q(\lambda)
      + a(\lambda)\, Q(\lambda-\eta)
      - d(\lambda)\, Q(\lambda+\eta).
 \end{aligned}
 \end{equation}
 It is easy to see, from the consideration of the leading asymptotic behavior of $F_Q(\lambda)$ when $\lambda\to \pm\infty$, that
 \begin{itemize}
 \item if $Q(\lambda)\in\mathbb{T}_{\mathsf{N}_s}[\frac{\lambda}{2}]$, then $F_Q(\lambda)\in\mathbb{T}_{\mathsf{N}_s+2\mathsf{N}-2}[\frac{\lambda}{2}]$ ;
 \item if $Q(\lambda)\in\mathbb{T}^{(\epsilon)}[\frac{\lambda}{2}]\setminus\mathbb{T}_{\mathsf{N}_s}[\frac{\lambda}{2}]$, then $F_Q(\lambda)\in\mathbb{T}^{(\epsilon)}[\frac{\lambda}{2}]\setminus\mathbb{T}_{\mathsf{N}_s+2\mathsf{N}}[\frac{\lambda}{2}]$ ;
 \item $\ker\mathcal{F}\subset\hat{\mathbb{T}}_{\mathsf{N}_s}[\frac{\lambda}{2}]\cup\{0\}$.
 \end{itemize}
 We shall moreover prove that $\dim\ker\mathcal{F}\ge 1$, which (together with Lemma~\ref{lem-uniqueness}) will prove the first assertion of {\it \ref{cond2-hom}.} 
 
 Let us first consider the subspace $\mathbb{V}_1$ of $\mathbb{T}^{(\epsilon)}[\frac{\lambda}{2}]$ defined as
 \begin{equation}
    \mathbb{V}_1=\left\{ Q(\lambda)\in \mathbb{T}_{\mathsf{N}_s}[\textstyle{\frac{\lambda}{2}}] \mid
    F_Q(\xi_n^{(h_n)})=0,\ 1\le n\le \mathsf{N},\ 0\le h_n\le 2s_n
    \right\},
 \end{equation}
 i.e. the set of all functions $Q(\lambda)\in\mathbb{T}_{\mathsf{N}_s}[\frac{\lambda}{2}]$ such that
\begin{equation}\label{sub-sys-hom}
D_{\bar{\mathsf{t}}}^{(n)}
\cdot
\mathbf{Q}^{(n)}= 0,
\qquad
\forall n\in \{1,\ldots,\mathsf{N}\}.
\end{equation}
Decomposing $Q(\lambda)$ on the basis of $\mathbb{T}_{\mathsf{N}_s}[\textstyle{\frac{\lambda}{2}}] $ with elements
\begin{equation}\label{basis-TNs}
   \prod_{\substack{ \ell=0  \\ \ell\neq k }}^{\mathsf{N}_s}
     \frac{\sinh \big(\frac{\lambda -\zeta_{\ell}}{2}\big)}{\sinh \big(\frac{\zeta_k-\zeta_\ell}{2}\big)},
     \qquad 0\le k\le \mathsf{N}_s,
\end{equation}
where $\zeta_{j_{n,h}}\equiv \xi_n^{(j)}$, $j_{n,h}=2\sum_{k=1}^{n-1}s_k+h+1$, for  $1\le n \le \mathsf{N}$, $0\le h \le 2s_n-1$, $\zeta_0$ being an arbitrary complex number (distinct from $\zeta_1,\ldots,\zeta_{\mathsf{N}_s}$ modulo $2\pi i$), we obtain that the elements of $\mathbb{V}_1$ are given by the solutions to the union of the following coupled sub-systems, for $1\le n\le \mathsf{N}$:
\begin{equation}\label{sub-sys2-hom}
  \left\{
  \begin{split}
   &Q(\xi_n^{(h)})= \mathsf{q}_{\,\bar{\mathsf{t}},h}^{(n)}\ Q(\xi_n^{(0)}),
     \qquad 1\le h\le 2s_n,\\
   & Q(\xi_n^{(2s_n)})
     =\sum_{k=0}^{\mathsf{N_s}}\ \prod_{\substack{ \ell=0  \\ \ell\neq k }}^{\mathsf{N}_s}
     \frac{\sinh \big(\frac{\xi_n^{(2s_n)} -\zeta_{\ell}}{2}\big)}{\sinh \big(\frac{\zeta_k-\zeta_\ell}{2}\big)} \ Q(\zeta_k).
  \end{split}
  \right.
\end{equation}
The first line in \eqref{sub-sys2-hom} is just a rewriting of \eqref{sub-sys-hom}, whereas the second line encodes the fact that $Q(\lambda)\in\mathbb{T}_{\mathsf{N}_s}[\frac{\lambda}{2}]$.
It is easy to see that the space of solutions to \eqref{sub-sys2-hom} is isomorphic to the space of solutions to the following homogeneous system of $\mathsf{N}$ equations in $\mathsf{N}+1$ unknowns $q_0\equiv Q(\zeta_0)$ and $q_j\equiv Q(\xi_j^{(0)})$, $1\le j\le \mathsf{N}$,
\begin{equation}\label{sys-0-2-hom}
   \sum_{j=0}^\mathsf{N} \left[\,\overline{\mathcal{N}}_{\bar{\mathsf{t}}}(\zeta_0)\right]_{i,j}\, q_j
   = 0,
   \qquad
   1\le i\le \mathsf{N},
\end{equation}
with
\begin{align}
  &\left[\,\overline{\mathcal{N}}_{\bar{\mathsf{t}}}(\zeta)\right]_{i,0}
    =-\prod_{k=1}^\mathsf{N}\prod_{\ell=0}^{2s_k-1}
      \frac{\sinh\Big(\frac{\xi_i^{(2s_i)}-\xi_k^{(\ell)}}{2}\Big)}{\sinh\Big(\frac{\zeta-\xi_k^{(\ell)}}{2}\Big)},
      \qquad 1\le i\le \mathsf{N},
      \label{mat-N-bar0}\\
  &\left[\,\overline{\mathcal{N}}_{\bar{\mathsf{t}}}(\zeta)\right]_{i,j}
    =\delta_{i,j}\  \mathsf{q}_{\,\bar{\mathsf{t}},2s_i}^{(i)} \notag\\
  &\hspace{1.8cm}
  -\sum_{h=0}^{2s_j-1}  \frac{\sinh\Big(\frac{\xi_i^{(2s_i)}-\zeta}{2}\Big)}{\sinh\Big(\frac{\xi_j^{(h)}-\zeta}{2}\Big)}\,
  \underset{(k,\ell)\not=(j,h)\ }{\prod_{k=1}^\mathsf{N}\prod_{\ell=0}^{2s_k-1}}
  \frac{\sinh\Big(\frac{\xi_i^{(2s_i)}-\xi_k^{(\ell)}}{2}\Big)}{\sinh\Big(\frac{\xi_j^{(h)}-\xi_k^{(\ell)}}{2}\Big)}\ 
  \mathsf{q}_{\bar{\mathsf{t}},h}^{(j)},
  \quad 1\le i,j\le \mathsf{N}.
      \label{mat-N-bar1}
\end{align}
Hence, the dimension of the space  $\mathbb{V}_1$ is given by the dimension of the nullspace of the $\mathsf{N}\times(\mathsf{N}+1)$ matrix $\overline{\mathcal{N}}_{\bar{\mathsf{t}}}(\zeta_0)$, i.e.
\begin{equation}
   \dim\mathbb{V}_1=\dim\ker\overline{\mathcal{N}}_{\bar{\mathsf{t}}}(\zeta_0)
   =\mathsf{N}+1-\mathrm{rk}\,\overline{\mathcal{N}}_{\bar{\mathsf{t}}}(\zeta_0) \ge 1, 
\end{equation}
where $\mathrm{rk}\,\overline{\mathcal{N}}_{\bar{\mathsf{t}}}(\zeta_0)\le \mathsf{N}$ denotes the rank of the matrix $\overline{\mathcal{N}}_{\bar{\mathsf{t}}}(\zeta_0)$.
Note at this point that it is always possible to choose $\zeta_0$ in such a way that the rank of the matrices $\overline{\mathcal{N}}_{\bar{\mathsf{t}}}(\zeta_0)$ and $\mathcal{N}_{\bar{\mathsf{t}}}(\zeta_0)$, where $\mathcal{N}_{\bar{\mathsf{t}}}(\zeta_0)$ is the $\mathsf{N}\times\mathsf{\mathsf{N}}$ matrix consisting of the last $\mathsf{N}$ columns of  $\overline{\mathcal{N}}_{\bar{\mathsf{t}}}(\zeta_0)$ (i.e. of the elements \eqref{mat-N-bar1} for $1\le i,j\le \mathsf{N}$), are equal\footnote{Indeed, if $Q(\lambda)$ is a non-zero solution to the above system (which exists since $\dim\mathbb{V}_1\ge 1$), it is enough to choose $\zeta_0$ such that $Q(\zeta_0)\not=0$.}, so that we have also
\begin{equation}
   \dim\mathbb{V}_1=\mathsf{N}+1-\mathrm{rk}\, {\mathcal{N}}_{\bar{\mathsf{t}}}(\zeta_0).
\end{equation}

Let us now consider the restriction $\mathcal{F}_{|_{\mathbb{V}_1}}$ of $\mathcal{F}$ to $\mathbb{V}_1$. Since $\ker\mathcal{F}\subset\mathbb{V}_1$, we have that $\ker\mathcal{F}=\ker\mathcal{F}_{|_{\mathbb{V}_1}}$. It remains to compute the dimension of the image $\mathrm{Im}\,\mathcal{F}_{|_{\mathbb{V}_1}}$ of $\mathcal{F}_{|_{\mathbb{V}_1}}$.

Let $Q(\lambda)\in\mathbb{V}_1$. Then $F_Q(\lambda)\in\mathbb{T}_{\mathsf{N}_s+2\mathsf{N}-2}[\frac{\lambda}{2}]$ and vanishes at the points $\xi_n^{(h_n)}$, $n\in\{1,\ldots,\mathsf{N}\}$, $h_n\in\{0,1,\ldots,2s_n\}$, i.e. it is of the form
\begin{equation}\label{rep-F_Q}
  F_Q(\lambda)=g(\lambda)\, \prod_{n=1}^{\mathsf{N}}\prod_{h_n=0}^{2s_n}\!\sinh\Big(\frac{\lambda-\xi_n^{(h_n)}}{2}\Big),
\end{equation}
with $g(\lambda)\in\mathbb{T}_{\mathsf{N}-2}[\frac{\lambda}{2}]$.
Moreover, its values at the points $\xi_n^{(h_n)}+i\pi$, $n\in\{1,\ldots,\mathsf{N}\}$, $h_n\in\{0,1,\ldots,2s_n\}$, are given in terms of the values of $Q(\lambda)$ in these points by
%
\begin{equation}\label{sub-sys-F}
  X_{i\pi}^{-1}\, D_{\bar{\mathsf{t}}}^{(n)}\, X_{i\pi}
\cdot
\begin{pmatrix}
Q(\xi_n^{(0)}+i\pi) \\ 
Q(\xi_n^{(1)}+i\pi) \\ 
\vdots \\ 
Q(\xi_n^{(2s_{n})}+i\pi)
\end{pmatrix}
=
(-1)^{\mathsf{N}-1}
\begin{pmatrix}
F_Q(\xi_n^{(0)}+i\pi) \\ 
F_Q(\xi_n^{(1)}+i\pi) \\ 
\vdots \\ 
F_Q(\xi_n^{(2s_n)}+i\pi)
\end{pmatrix},
\end{equation}
where we have used the $i\pi$-quasi-periodicity of $\bar{\mathsf{t}}(\lambda)$, $a(\lambda)$ and $d(\lambda)$.
Equation \eqref{sub-sys-F} means that, for each  $n\in \{1,\ldots,\mathsf{N}\}$, the vector
\begin{equation}
  (-1)^{\mathsf{N}-1}
  X_{i\pi}
\cdot
\begin{pmatrix}
F_Q(\xi_n^{(0)}+i\pi) \\ 
F_Q(\xi_n^{(1)}+i\pi) \\ 
\vdots \\ 
F_Q(\xi_n^{(2s_n)}+i\pi)
\end{pmatrix}
\end{equation}
belongs to the image of the matrix $D_{\bar{\mathsf{t}}}^{(n)}$, i.e. is orthogonal to the row vector ${}^t \mathbf{p}_{\,\bar{\mathsf{t}}}^{(n)}$ \eqref{left-vector}:
\begin{equation}\label{cond-F_Q}
   {}^t \mathbf{p}_{\,\bar{\mathsf{t}}}^{(n)} \, X_{i\pi}\cdot
   \begin{pmatrix}
F_Q(\xi_n^{(0)}+i\pi) \\ 
F_Q(\xi_n^{(1)}+i\pi) \\ 
\vdots \\ 
F_Q(\xi_n^{(2s_n)}+i\pi)
\end{pmatrix} =0,
\qquad \forall n\in \{1,\ldots,\mathsf{N}\}.
\end{equation}
Using the expression \eqref{rep-F_Q}, the condition \eqref{cond-F_Q} can be rewritten as
\begin{equation}\label{cond-g}
  \sum_{h=0}^{2s_n}  \mathsf{z}_{\bar{\mathsf{t}},h}^{(n)} \ g(\xi_n^{(h)}+i\pi)=0,
  \qquad \forall n\in \{1,\ldots,\mathsf{N}\},
\end{equation}
where
\begin{align*}
  \mathsf{z}_{\bar{\mathsf{t}},h}^{(n)}
  &=\prod_{j=1}^{\mathsf{N}}\prod_{h_j=0}^{2s_j}\!\sinh\Big(\frac{\xi_n^{(h)}+i\pi-\xi_j^{(h_j)}}{2}\Big)\
  \prod_{\ell=0}^{h-1}\frac{a(\xi_n^{(\ell)})}{d(\xi_n^{(\ell+1)})}\
  \mathsf{q}_{\bar{\mathsf{t}},h}^{(n)},
  \\
  &=\prod_{k=1}^\mathsf{N}
  \frac{\prod_{\ell=1}^{2s_k}\sinh\Big(\frac{\xi_n^{(0)}-\xi_k^{(0)}+\ell\eta}{2}\Big)
           \prod_{\ell=0}^{2s_k}\sinh\Big(\frac{\xi_n^{(0)}-\xi_k^{(0)}+\ell\eta+i\pi}{2}\Big)}
          {\prod_{\substack{\ell=0\\ (k,\ell)\not= (n,h)}}^{2s_k}\sinh\Big(\frac{\xi_n^{(h)}-\xi_k^{(\ell)}}{2}\Big)}
   \prod_{\substack{k=1 \\ k\not= n}}^\mathsf{N} \sinh\Big(\frac{\xi_n^{(0)}-\xi_k^{(0)}}{2}\Big)\
  \mathsf{q}_{\bar{\mathsf{t}},h}^{(n)}.
\end{align*}
Note that this system can be simplified as
\begin{equation}\label{cond-g-bis}
  \sum_{h=0}^{2s_n}  \tilde{\mathsf{z}}_{\bar{\mathsf{t}},h}^{(n)} \ g(\xi_n^{(h)}+i\pi)=0,
  \qquad \forall n\in \{1,\ldots,\mathsf{N}\},
\end{equation}
where
\begin{equation}\label{z_t}
   \tilde{\mathsf{z}}_{\bar{\mathsf{t}},h}^{(n)}
   =\left[ \underset{(k,\ell)\not=(n,h)\ }{\prod_{k=1}^\mathsf{N}\prod_{\ell=0}^{2s_k}} \!\!\sinh\Big(\frac{\xi_n^{(h)}-\xi_k^{(\ell)}}{2}\Big)\right]^{-1}  \mathsf{q}_{\bar{\mathsf{t}},h}^{(n)}.
\end{equation}
Hence, $\mathrm{Im}\,\mathcal{F}_{|_{\mathbb{V}_1}}$ is a subspace of the space $\mathbb{V}_2$ consisting of all functions which can be written in the form \eqref{rep-F_Q} in terms of $g(\lambda)\in\mathbb{T}_{\mathsf{N}-2}[\frac{\lambda}{2}]$ satisfying the condition \eqref{cond-g-bis}.


For $g(\lambda)\in\mathbb{T}_{\mathsf{N}-2}[\frac{\lambda}{2}]$ and $\zeta\in\mathbb{C}$ arbitrary, let us consider the function $g_\zeta(\lambda)\in\mathbb{T}_{\mathsf{N}-1}[\frac{\lambda}{2}]$ defined as
\begin{equation}
   g_\zeta(\lambda)\equiv \sinh\!\Big(\frac{\lambda-\zeta-i\pi}{2}\Big)\ g(\lambda).
\end{equation}
It can be expressed as in \eqref{interpol-P} in terms of its values in $\mathsf{N}$ different points modulo $2\pi i$, say the points $\xi_j^{(2s_j)}+i\pi$, $j\in\{1,\ldots,\mathsf{N}\}$, so that the function $g(\lambda)$ admits the following representation:
\begin{equation}\label{rep-g}
  g(\lambda+i\pi)=\sum_{j=1}^\mathsf{N}
  \frac{\sinh\Big(\frac{\xi_j^{(2s_j)}-\zeta}{2}\Big)}{\sinh\big(\frac{\lambda-\zeta}{2}\big)}
  \prod_{\substack{k=1 \\ k\not= j}}^\mathsf{N}
  \frac{\sinh\Big(\frac{\lambda-\xi_k^{(2s_k)} }{2}\Big)}{\sinh\Big(\frac {\xi_j^{(2s_j)}-\xi_k^{(2s_k)}}{2}\Big)}\
  g\big(\xi_j^{(2s_j)}+i\pi\big).
\end{equation}
Using the representation \eqref{rep-g}, one can rewrite \eqref{cond-g-bis}-\eqref{z_t} in the form of a homogeneous system of $\mathsf{N}$ linear equations for the $\mathsf{N}$ values $g\big(\xi_j^{(2s_j)}+i\pi\big)$, $1\le j\le \mathsf{N}$,
\begin{equation}\label{sys-inh-hom}
  \sum_{j=1}^\mathsf{N} \big[\widetilde{\mathcal{N}}_{\bar{\mathsf{t}}}(\zeta)\big]_{i,j}\ 
  g\big(\xi_j^{(2s_j)}+i\pi\big)=0,
  \qquad 1\le i\le \mathsf{N},
\end{equation}
with matrix
\begin{align}
  \big[\widetilde{\mathcal{N}}_{\bar{\mathsf{t}}}(\zeta)\big]_{i,j}
  &=\delta_{i,j}\   \tilde{\mathsf{z}}_{\bar{\mathsf{t}},2s_j}^{(j)}
     +\sum_{h=0}^{2s_i-1} \frac{\sinh\Big(\frac{\xi_j^{(2s_j)}-\zeta}{2}\Big)}{\sinh\Big(\frac{\xi_i^{(h)}-\zeta}{2}\Big)}\,
  \prod_{\substack{k=1 \\ k\not=j}}^{\mathsf{N}}
  \frac{\sinh\Big(\frac{\xi_i^{(h)}-\xi_k^{(2s_k)}}{2}\big)}{\sinh\Big(\frac{\xi_j^{(2s_j)}-\xi_k^{(2s_k)}}{2}\Big)}\ \tilde{\mathsf{z}}_{\bar{\mathsf{t}},h}^{(i)}
  \notag\\
  &=\left[ \underset{(k,\ell)\not=(j,2s_j)\ }{\prod_{k=1}^\mathsf{N}\prod_{\ell=0}^{2s_k}} \!\!\sinh\Big(\frac{\xi_j^{(2s_j)}-\xi_k^{(\ell)}}{2}\Big)\right]^{-1}
       \left\{\delta_{i,j}\  \mathsf{q}_{\bar{\mathsf{t}},2s_j}^{(j)}
       \vphantom{\underset{(k,\ell)\not=(j,2s_j)\ }{\prod_{k=1}^\mathsf{N}\prod_{\ell=0}^{2s_k}}}
       \right.
       \notag\\
  &\hspace{2cm}\left.
       -\sum_{h=0}^{2s_i-1}
       \frac{\sinh\Big(\frac{\xi_j^{(2s_j)}-\zeta}{2}\Big)}{\sinh\Big(\frac{\xi_i^{(h)}-\zeta}{2}\Big)}\,
  \underset{(k,\ell)\not=(i,h)\ }{\prod_{k=1}^\mathsf{N}\prod_{\ell=0}^{2s_k-1}}
  \frac{\sinh\Big(\frac{\xi_j^{(2s_j)}-\xi_k^{(\ell)}}{2}\Big)}{\sinh\Big(\frac{\xi_i^{(h)}-\xi_k^{(\ell)}}{2}\Big)}\ 
  \mathsf{q}_{\bar{\mathsf{t}},h}^{(i)}
       \right\}.
  \label{mat-sys-g}
\end{align}
Hence, the dimension of the space $\mathbb{V}_2$, which corresponds to the dimension of the kernel of the $\mathsf{N}\times\mathsf{N}$ matrix $\widetilde{\mathcal{N}}_{\bar{\mathsf{t}}}(\zeta)$, is
\begin{equation}
   \dim\mathbb{V}_2=\dim\ker\widetilde{\mathcal{N}}_{\bar{\mathsf{t}}}(\zeta)
   =\mathsf{N}-\mathrm{rk}\,\widetilde{\mathcal{N}}_{\bar{\mathsf{t}}}(\zeta), 
\end{equation}
where $\mathrm{rk}\,\widetilde{\mathcal{N}}_{\bar{\mathsf{t}}}(\zeta)$ denotes the rank of the matrix $\widetilde{\mathcal{N}}_{\bar{\mathsf{t}}}(\zeta)$ \eqref{mat-sys-g}.
Note at this point that this matrix can be obtained from the transpose of the matrix $\mathcal{N}_{\bar{\mathsf{t}}}(\zeta)$ with elements \eqref{mat-N-bar1} by a similarity transformation, which means that $\widetilde{\mathcal{N}}_{\bar{\mathsf{t}}}(\zeta)$ and ${\mathcal{N}}_{\bar{\mathsf{t}}}(\zeta)$ have the same rank. This implies that
\begin{equation}
  \dim\mathrm{Im}\,\mathcal{F}_{|_{\mathbb{V}_1}}\le \dim\mathbb{V}_2=\dim\mathbb{V}_1-1,
\end{equation}
so that $\dim\ker\mathcal{F}=\dim\ker\mathcal{F}_{|_{\mathbb{V}_1}}\ge 1$.
In other words, it means that there exists, up to normalization, at least one non-zero element $Q(\lambda)$ of $\mathbb{T}^{(\epsilon)}[\frac{\lambda}{2}]$ (which belongs to $\hat{\mathbb{T}}_{\mathsf{N}_s}[\frac{\lambda}{2}]$ due to the previous remark on the leading asymptotic behavior of $F_Q(\lambda)$) such that $F_Q(\lambda)$ vanishes identically, i.e. that there exists at least one solution $Q(\lambda)$ of the form \eqref{Q-homo-Ansatz} to the homogeneous functional equation \eqref{Bax-hom-eq}.
The uniqueness of this solution follows from Lemma~\ref{lem-uniqueness}.

Let us finally suppose that, for this solution, there exists $n\in\{1,\ldots,\mathsf{N}\}$ such that $Q(\xi_n^{(0)})=Q(\xi_n^{(0)}+i\pi)=0$. This implies that, $\forall h\in\{0,\ldots,2s_n\}$, $Q(\xi_n^{(h)})=Q(\xi_n^{(h)}+i\pi)=0$, i.e. that $Q(\lambda)$ is of the form
\begin{equation}
   Q(\lambda)=\prod_{h=0}^{2s_n}\sinh(\lambda-\xi_n^{(h)})\ Q_n(\lambda),
\end{equation}
where ${Q}_n(\lambda)\in\hat{\mathbb{T}}_{\mathsf{N}_n}[\frac{\lambda}{2}]$ with $\mathsf{N}_n\equiv\mathsf{N}_s-2(2s_n+1)$. 
Then
\begin{equation}\label{W_Q-n}
   W_Q(\lambda)= (-1)^{2s_n+1}\prod_{h=0}^{2s_n}\left[ \sinh(\lambda-\xi_n^{(h)})\, \sinh(\lambda-\xi_n^{(h)}-\eta)\right]\, W_{Q_n}(\lambda),
\end{equation}
where $W_Q(\lambda)\in\mathbb{T}_{\mathsf{N}_s}[\lambda]$ and $W_{Q_n}(\lambda)\in\mathbb{T}_{\mathsf{N}_n}[\lambda]$ are defined from $Q(\lambda)$ and $Q_n(\lambda)$ as in \eqref{def-W}.
It is then enough to notice than the form \eqref{W_Q-n} of $W_Q(\lambda)$ is incompatible with the relation \eqref{W-relation} that should be satisfied for $\bar{\mathsf{t}}(\lambda)\in\Sigma_{\bar{\mathsf{T}}}$ (see Lemma~\ref{lem-WQ}): we have $W_Q(\xi_n^{(2s_n)})=W_Q(\xi_n^{(0)}+\eta)=0$, whereas the r.h.s. of \eqref{W-relation} is obviously non-zero for $\lambda=\xi_n^{(2s_n)}$ or $\lambda=\xi_n^{(0)}+\eta$, which ends the proof of    {\it \ref{cond2-hom}.} 

\medskip

Let us finally show the equivalence between  {\it \ref{cond2-hom}.}  and  {\it \ref{cond3-hom}.}

Let $\bar{\mathsf{t}}(\lambda)$ satisfy the condition {\it \ref{cond2-hom}.} and let $Q(\lambda)$ be the unique solution of the form \eqref{Q-homo-Ansatz} of the homogeneous equation \eqref{Bax-hom-eq} associated to $\bar{\mathsf{t}}(\lambda)$.
Then Lemma~\ref{lem-WQ} implies that $Q(\lambda)$ satisfies the Wronskian identity \eqref{W-relation} for some $\epsilon\in\{-1,1\}$. One can then rewrite the functions $a(\lambda)$ and $d(\lambda)$ in terms of $W_Q(\lambda)$ as
\begin{equation}\label{ad-WQ}
a(\lambda)=W_Q(\lambda+\eta)/ w_\epsilon(\lambda ),\qquad 
d(\lambda)=W_Q(\lambda)/ w_\epsilon(\lambda ),
\end{equation}
%
where $w_\epsilon(\lambda)$ is defined by \eqref{w-eps},
and re-inject these expressions into the homogeneous functional equation \eqref{Bax-hom-eq} satisfied by 
$\bar{\mathsf{t}}(\lambda)$ and $Q(\lambda)$. It immediately follows that $\bar{\mathsf{t}}(\lambda)$ is of the form \eqref{expr-t-Q}.

Conversely, let  $Q(\lambda)$ be a solution of the form \eqref{Q-homo-Ansatz} to the Wronskian equation \eqref{W-relation} for some $\epsilon\in\{-1,1\}$, and let $\bar{\mathsf{t}}(\lambda)$ be expressed in terms of $Q(\lambda)$ and $\epsilon$ as in \eqref{expr-t-Q}.
Then, multiplying both members of \eqref{expr-t-Q} by $Q(\lambda)$, and adding and subtracting the expression
\begin{equation}
  Q(\lambda+i\pi)\,Q(\lambda-\eta)\,Q(\lambda+\eta)
\end{equation}
to the r.h.s. so as to reconstruct the functions $a(\lambda)$ and $d(\lambda)$ from their representations \eqref{ad-WQ}, we obtain that $\bar{\mathsf{t}}(\lambda)$ and $Q(\lambda)$ satisfy the homogeneous functional equation \eqref{Bax-hom-eq}. The fact $\bar{\mathsf{t}}(\lambda)$ satisfies the quasi-periodicity property $\mathsf{\bar{t}}(\lambda+i\pi)=(-1)^{\mathsf{N}-1}\,\mathsf{\bar{t}}(\lambda)$ is a trivial consequence of \eqref{expr-t-Q}.

Finally, it follows from these considerations that the uniqueness of the couple $(\epsilon,Q)$ in the statement of {\it \ref{cond3-hom}.} implies the uniqueness of the solution $Q(\lambda)$ in {\it \ref{cond2-hom}.}, and that conversely, the uniqueness of $Q(\lambda)$ in {\it \ref{cond2-hom}.} implies the uniqueness of  $(\epsilon,Q)$ in {\it \ref{cond3-hom}.}

This ends the proof of the theorem.
\qed
\bigskip

We would like to conclude this section by proving what has been announced  in Remark~\ref{rem-prop-vec}, namely the fact that  {\it \ref{cond3-hom}.} of Theorem~\ref{th-hom} admits a simple geometrical reformulation in terms of the proportionality condition of the two vectors \eqref{def-Q-tildeQ} for each $n\in\{1,\ldots,\mathsf{N}\}$:

\begin{proposition}\label{prop-WQ}
Let $Q(\lambda)$ be a function of the form \eqref{Q-homo-Ansatz} such that, for each $n\in\{1,\ldots,\mathsf{N}\}$, the two vectors $\mathbf{Q}^{(n)}$ and $\tilde{\mathbf{Q}}^{(n)}$ \eqref{def-Q-tildeQ} are proportional and not both equal to zero.
Then $Q(\lambda)$ satisfies the Wronskian identity \eqref{W-relation} for some $\epsilon\in\{-1,1\}$, and the function $\bar{\mathsf{t}}(\lambda)$ defined by \eqref{expr-t-Q} belongs to $\Sigma_{\bar{\mathsf{T}}}$.

Conversely, let $\bar{\mathsf{t}}(\lambda)\in\Sigma_{\bar{\mathsf{T}}}$. 
Then $\bar{\mathsf{t}}(\lambda)$ can be expressed as in \eqref{expr-t-Q} in terms of some $\epsilon\in\{-1,1\}$ and of some function $Q(\lambda)$ of the form \eqref{Q-homo-Ansatz} such that, for each $n\in\{1,\ldots,\mathsf{N}\}$, the two vectors $\mathbf{Q}^{(n)}$ and $\tilde{\mathbf{Q}}^{(n)}$ \eqref{def-Q-tildeQ} are proportional and not both equal to zero.
\end{proposition}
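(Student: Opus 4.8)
The plan is to connect the geometric proportionality condition on the two vectors \eqref{def-Q-tildeQ} directly to the Wronskian identity \eqref{W-relation}, and then to feed the outcome into the machinery already set up for Theorem~\ref{th-hom} and the discrete characterization of Theorem~\ref{th-eigen-t}.

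For the first (direct) implication, I would start from the shift relations $\xi_n^{(h)}-\eta=\xi_n^{(h+1)}$ and $\xi_n^{(h)}+\eta=\xi_n^{(h-1)}$ following from \eqref{xi-shift}, so that evaluating the function $W_Q$ of \eqref{def-W} at $\lambda=\xi_n^{(h)}$ gives
\[
  W_Q(\xi_n^{(h)})=Q(\xi_n^{(h)}+i\pi)\,Q(\xi_n^{(h+1)})+Q(\xi_n^{(h)})\,Q(\xi_n^{(h+1)}+i\pi),\qquad 0\le h\le 2s_n-1.
\]
In terms of the components of $\mathbf{Q}^{(n)}$ and $\tilde{\mathbf{Q}}^{(n)}$ \eqref{def-Q-tildeQ}, this equals, up to the sign $(-1)^{h+1}$, the $2\times2$ minor built on the rows $h$ and $h+1$ of the $(2s_n+1)\times 2$ matrix $\big(\mathbf{Q}^{(n)}\,|\,\tilde{\mathbf{Q}}^{(n)}\big)$. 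Hence proportionality of $\mathbf{Q}^{(n)}$ and $\tilde{\mathbf{Q}}^{(n)}$ forces $W_Q(\xi_n^{(h)})=0$ for all $n$ and $0\le h\le 2s_n-1$. Since $Q$ is of the form \eqref{Q-homo-Ansatz}, one has $W_Q\in\hat{\mathbb{T}}_{\mathsf{N}_s}[\lambda]$ by \eqref{W-form}, and under \eqref{cond-inh} these $\mathsf{N}_s$ zeros are pairwise distinct modulo $i\pi$; they therefore determine $W_Q$ up to a multiplicative constant. Comparing the $\lambda\to\pm\infty$ leading asymptotics exactly as in the proof of Lemma~\ref{lem-WQ} fixes this constant to $2\epsilon(i/2)^{\mathsf{N}_s}$ with $\epsilon\in\{-1,1\}$, and since $d(\lambda)=\prod_n\sinh(\lambda-\xi_n^{(0)})$ and $w_\epsilon$ is given by \eqref{w-eps}, the resulting factorized form of $W_Q$ is precisely the Wronskian identity \eqref{W-relation}.

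Once \eqref{W-relation} holds, I would define $\bar{\mathsf{t}}(\lambda)$ by \eqref{expr-t-Q} and rerun the ``conversely'' computation of the proof of Theorem~\ref{th-hom}: multiplying \eqref{expr-t-Q} by $Q(\lambda)$, adding and subtracting $Q(\lambda+i\pi)Q(\lambda-\eta)Q(\lambda+\eta)$, and re-expressing $a,d$ through \eqref{ad-WQ}, shows that $\bar{\mathsf{t}}$ and $Q$ satisfy the homogeneous equation \eqref{Bax-hom-eq} and that $\bar{\mathsf{t}}(\lambda+i\pi)=(-1)^{\mathsf{N}-1}\bar{\mathsf{t}}(\lambda)$. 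To see that $\bar{\mathsf{t}}$ is entire, I would check that the numerator of \eqref{expr-t-Q} vanishes at the roots $\xi_n^{(h)}$, $1\le h\le 2s_n-1$, of $w_\epsilon$: evaluated there, it reduces up to sign to the minor of $\big(\mathbf{Q}^{(n)}\,|\,\tilde{\mathbf{Q}}^{(n)}\big)$ on rows $h-1$ and $h+1$, which vanishes by proportionality, while the matching $i\pi$-periodicities of numerator and $w_\epsilon$ propagate the cancellation to all the simple zeros of $w_\epsilon$. Finally, since $Q$ solves \eqref{Bax-hom-eq} and $Q(\cdot+i\pi)$ solves \eqref{hom-eq-bis}, both vectors \eqref{def-Q-tildeQ} lie (after the $X_{i\pi}$ conjugation used in the proof of Theorem~\ref{th-hom}) in the kernel of $D_{\bar{\mathsf{t}}}^{(n)}$; the hypothesis that they are not both zero then gives $\det_{2s_n+1}D_{\bar{\mathsf{t}}}^{(n)}=0$ for every $n$, i.e. \eqref{discrete-sys}, and together with the form \eqref{t-form} of $\bar{\mathsf{t}}$ this yields $\bar{\mathsf{t}}\in\Sigma_{\bar{\mathsf{T}}}$ by Theorem~\ref{th-eigen-t}. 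For the converse, letting $\bar{\mathsf{t}}\in\Sigma_{\bar{\mathsf{T}}}$, the equivalence \textit{\ref{cond1-hom}.}$\Leftrightarrow$\textit{\ref{cond2-hom}.}$\Leftrightarrow$\textit{\ref{cond3-hom}.} of Theorem~\ref{th-hom} provides a $Q$ of the form \eqref{Q-homo-Ansatz} solving \eqref{Bax-hom-eq} (hence \eqref{W-relation}) with $\bar{\mathsf{t}}$ given by \eqref{expr-t-Q} and $\big(Q(\xi_j^{(0)}),Q(\xi_j^{(0)}+i\pi)\big)\neq(0,0)$; as above both $\mathbf{Q}^{(n)}$ and $\tilde{\mathbf{Q}}^{(n)}$ lie in $\ker D_{\bar{\mathsf{t}}}^{(n)}$, which is one-dimensional by Remark~\ref{rem-eigenvector-D}, so they are proportional, and the admissibility condition ensures they are not both zero.

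The main obstacle is the index bookkeeping of the first two paragraphs: one must verify that proportionality of the $(2s_n+1)$-component vectors \eqref{def-Q-tildeQ} is exactly equivalent to the vanishing of $W_Q$ on the string $\xi_n^{(0)},\ldots,\xi_n^{(2s_n-1)}$ (and, for entirety, of the numerator of \eqref{expr-t-Q} at $\xi_n^{(1)},\ldots,\xi_n^{(2s_n-1)}$), keeping careful track of the alternating signs $(-1)^{h}$ built into $\tilde{\mathbf{Q}}^{(n)}$ and of the $\pm\eta$ shifts between the $\xi_n^{(h)}$. In the spin-$1/2$ case each vector has only two entries and a single minor, collapsing to the one condition $W_Q(\xi_n^{(0)})=0$ with $w_\epsilon$ constant, which recovers the strict equivalence announced in Remark~\ref{rem-prop-vec}; for higher spin, establishing that ``all minors vanish'' is equivalent to ``$W_Q$ vanishes on the whole string'' is precisely what requires care.
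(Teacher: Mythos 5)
Your proposal is correct and follows essentially the same route as the paper's proof: proportionality of $\mathbf{Q}^{(n)}$ and $\tilde{\mathbf{Q}}^{(n)}$ kills the $2\times 2$ minors, which are (up to sign) the values $W_Q(\xi_n^{(h)})$, so $W_Q\in\hat{\mathbb{T}}_{\mathsf{N}_s}[\lambda]$ is pinned down as in Lemma~\ref{lem-WQ}, the vanishing of the numerator of \eqref{expr-t-Q} at $\xi_n^{(1)},\ldots,\xi_n^{(2s_n-1)}$ gives entireness, and the converse reduces to the one-dimensionality of $\ker D_{\bar{\mathsf{t}}}^{(n)}$ exactly as in Remark~\ref{rem-prop-vec}. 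Your sign and index bookkeeping (the $(-1)^{h+1}$ factors and the $\pm\eta$ shifts) checks out, and you in fact supply slightly more detail than the paper on the $i\pi$-shifted zeros of $w_\epsilon$.
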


\begin{proof}
The linear dependence of the two vectors $\mathbf{Q}^{(n)}$ and $\tilde{\mathbf{Q}}^{(n)}$ \eqref{def-Q-tildeQ} implies the vanishing of each of the following determinants,
\begin{equation}
  \begin{vmatrix}
    Q(\xi_n^{(h_n)}) & Q(\xi_n^{(h_n)}+i\pi)\\
    Q(\xi_n^{(h_n)}-\eta) & -Q(\xi_n^{(h_n)}-\eta+i\pi)
  \end{vmatrix},
  \qquad
  1\le n\le \mathsf{N},   \quad  0\le h_n\le 2s_n-1,
\end{equation}
and hence the vanishing of the function $W_Q(\lambda)$ \eqref{def-W} at the $\mathsf{N}_s$ points $\lambda=\xi_n^{(h_n)}$, $1\le n\le \mathsf{N}$, $0\le h_n\le 2s_n-1$.
We can moreover show, like in the proof of Lemma~\ref{lem-WQ} that, for $Q(\lambda)$ of the form \eqref{Q-homo-Ansatz}, this function $W_Q(\lambda)$ is of the form \eqref{W-d-form}. Hence the set of roots of $W_Q(\lambda)$ coincides exactly with the set of shifted inhomogeneity parameters $\big\{\xi_n^{(h_n)}; 1\le n\le \mathsf{N}, 0\le h_n\le 2s_n-1\big\}$. The normalization coefficient $c_W$ is then fixed as in the proof of Lemma~\ref{lem-WQ}, and therefore  $Q(\lambda)$ satisfies the Wronskian equation \eqref{W-relation} for some $\epsilon\in\{-1,1\}$.
Finally, the proportionality of the two vectors \eqref{def-Q-tildeQ} for each $n\in\{1,\ldots,\mathsf{N}\}$ implies that the numerator of the r.h.s. of \eqref{expr-t-Q} vanishes at the points $\xi_n^{(h_n)}$, $n\in\{1,\ldots,\mathsf{N}\}$, $h_n\in\{1,\ldots,2s_n-1\}$, so that the function $\bar{\mathsf{t}}(\lambda)$ defined by \eqref{expr-t-Q} is entire.
From the proof of Theorem~\ref{th-hom}, it follows that $\bar{\mathsf{t}}(\lambda)$ and $Q(\lambda)$ satisfy the homogeneous functional equation \eqref{Bax-hom-eq}, and that $\bar{\mathsf{t}}(\lambda)\in\Sigma_{\bar{\mathsf{T}}}$. 

Conversely, if $\bar{\mathsf{t}}(\lambda)\in\Sigma_{\bar{\mathsf{T}}}$, then $\bar{\mathsf{t}}(\lambda)$ can be expressed as in \eqref{expr-t-Q} in terms of some function $Q(\lambda)$ of the form \eqref{Q-homo-Ansatz} satisfying the Wronskian identity \eqref{W-relation}.
It follows from the proof of Theorem~\ref{th-hom} that $\mathsf{\bar{t}}(\lambda)$ and $Q(\lambda)$ satisfy the functional equation \eqref{Bax-hom-eq}. The proof of the proportionality of the two vectors $\mathbf{Q}^{(n)}$ and $\widetilde{\mathbf{Q}}^{(n)}$ is then derived as discussed after their definition \eqref{def-Q-tildeQ}.
 \end{proof}

\section{Conclusion}

In this paper, we have studied the inhomogeneous antiperiodic integrable XXZ chain with arbitrary spins in the SOV framework.
We have obtained a complete characterization of the  ($\kappa$-twisted) antiperiodic transfer matrix spectrum and eigenvectors in terms of solutions of systems of discrete finite-difference equations (Theorem~\ref{th-eigen-t}). We have then reformulated this SOV spectral description in terms of particular solutions
of finite-difference functional equations of Baxter's type, hence showing the completeness of the corresponding systems of Bethe-type equations.

We have in fact considered two different reformulations of the SOV characterization of the spectrum of Theorem~\ref{th-eigen-t}.
The first one is in terms of  the solutions of a one-parameter family of {\em inhomogeneous} finite-difference functional equations for trigonometric polynomials $Q(\la)$ of degree $\mathsf{N}_s=2\sum_{n=1}^\mathsf{N}s_n$, where $s_n$ denotes the value of the spin at site $n$ (Theorem~\ref{th-inh-eq}), in the spirit of what has been recently proposed in the ``off-diagonal Bethe ansatz" framework \cite{CaoYSW13a, ZhaLCYSW14}. Such an equivalence is very natural in this context, since it amounts to solve a linear system of $\mathsf{N}$ equations in $\mathsf{N}+1$ unknowns, for which it is easy to see that there exists, under certain conditions, a unique solution. Note that this clear correspondence had already been explicitly enlightened in the case of the open spin-1/2 chain with general integrable boundaries in \cite{KitMN14}. Since the SOV approach leads to a complete description of the transfer matrix spectrum, the characterization in terms of the solutions of this corresponding second order finite-difference inhomogeneous functional equation, and therefore in terms of the solutions of the associated system of (generalized) Bethe equations, is also complete.

The second, and probably more useful, one is in terms of the solutions of the usual {\em homogeneous} Baxter's finite-difference functional equation for trigonometric polynomials of the same degree $\mathsf{N}_s$ as previously, but with double period (Theorem~\ref{th-hom}). 
These solutions have to be compared to the eigenvalues of the $Q$-operator which has been constructed in  \cite{BatBOY95,YunB95}  by means of the ``pair-propagation through a vertex" property.
We nevertheless recall that the construction of   \cite{BatBOY95,YunB95} has been performed for the homogeneous chain, a case for which Theorem~\ref{th-eigen-t} does not hold, and that it relies on a conjecture, so that we have not used this construction to show the completeness of the aforementioned  class of $Q$-solutions to \eqref{Bax-hom-eq} and of the associated system of homogeneous Bethe equations.
We have instead used simple independent arguments, not relying on the algebraic construction of the $Q$-operator, but on the study (as for the first reformulation we proposed) of some linear systems of equations. We notably expect that these arguments
could  be generalized to other situations for which the corresponding $Q$-operator is presently not known. 
It is worth mentioning at this point that the proof of the full completeness of the solutions to the homogeneous functional equation \eqref{Bax-hom-eq} is not much more complicated that for its inhomogeneous counterpart \eqref{eq-inh}, whereas the statement of the result (Theorem~\ref{th-hom} versus Theorem~\ref{th-inh-eq}) is instead considerably simpler.

It is finally important to point out that our statements of completeness are proven for inhomogeneous chains only, i.e. when the conditions $(\ref{cond-inh})$ on the inhomogeneity parameters are satisfied. This requirement is essential as it is at the basis of two properties which are fundamental for our proof of completeness: the existence of the SOV basis and the simplicity of the transfer matrix spectrum. In the homogeneous limit, the first one  does no longer hold whereas the second one is not proven in our approach\footnote{In fact, it is known in the literature \cite{PasS90,SokAC06,GunRS07,KorW07,Kor08} that under special boundary conditions and special values of the anisotropy parameter, the XXZ transfer matrix has degeneracy and can even present Jordan blocks.}.  
Of course, the $T$-$Q$ functional equations and their solutions admit a well defined limit for the homogeneous chain (this is precisely the reason of our study) and one can wonder whether they can still be used to construct the complete spectrum (eigenvalues and eigenstates)  in this limit.
To answer this question, one could in principle try to rewrite 
the SOV characterization of the transfer matrix eigenstates in such a way that their homogeneous limit is well defined.
Note however that, for our purpose of computing physical quantities such as the correlation functions, it is not very important to exhibit explicitly a complete basis of eigenstates for the homogeneous model:  the whole reasoning can indeed be performed for the inhomogeneous model, and the homogenous limit can be taken at the end of the computation only, on the scalar quantities (form factors, correlation functions) that should be continuous in this limit.

\section*{Acknowledgements}
We would like to thank J.M. Maillet, N. Kitanine and V. Pasquier for interesting discussions on early stage of this work.
G.N. and V.T. are supported by CNRS.
We also acknowledge  the support from the ANR grant DIADEMS 10 BLAN 012004.
G.N. would like to acknowledge hospitality from International Institute of Physics of Natal where part of this work has been initiated.


\appendix

\section{Trigonometric polynomials: definitions and notations}\label{app-trig-pol}

In this paper, we adopt the following terminology.

We say that a function $P(\lambda)$ is a trigonometric polynomial in $\lambda$ if it is a Laurent polynomial in $e^\lambda$. We denote by $\mathbb{T}[\lambda]\equiv \mathbb{C}[e^\lambda,e^{-\lambda}]$ the ring of all such trigonometric polynomials.

We shall more particularly focus on the class of trigonometric polynomials which are invariant (respectively negated) under a shift of $i\pi$, that we denote by
\begin{equation}\label{Teps}
  \mathbb{T}^{(\epsilon)}[\lambda]=\left\{P(\lambda)\in\mathbb{C}[e^\lambda,e^{-\lambda}] ,\ P(\lambda+i\pi)=(-1)^\epsilon P(\lambda)\right\},
  \qquad
  \epsilon\in\{0,1\}.
\end{equation}
$\mathbb{T}^{(\epsilon)}[\lambda]$ is an infinite-dimensional vector space. It consists in all functions of the form
\begin{equation}\label{def-P-M1-M2}
   P(\lambda)=e^{-\mathsf{M}_1\lambda}\sum_{j=0}^{\mathsf{M}_2} c_j\, e^{2j\lambda},
\end{equation}
for all values of $(\mathsf{M}_1,\mathsf{M}_2)\in(2\mathbb{Z}+\epsilon)\times\mathbb{N}$ and of $c_0,c_1,\ldots,c_{\mathsf{M}_2}\in\mathbb{C}$.

For any fixed pair $(\mathsf{M}_1,\mathsf{M}_2)\in\mathbb{Z}\times\mathbb{N}$, we let $\mathbb{T}_{\mathsf{M}_1,\mathsf{M}_2}[\lambda]$ denote the set of all functions of $\lambda$ of the form \eqref{def-P-M1-M2}.
If $\mathsf{M}_1\in(2\mathbb{Z}+\epsilon)$, $\mathbb{T}_{\mathsf{M}_1,\mathsf{M}_2}[\lambda]$ is a $(\mathsf{M}_2+1)$-dimensional linear subspace of $\mathbb{T}^{(\epsilon)}[\lambda]$.
When $\mathsf{M}_2=\mathsf{M}_1=\mathsf{M}$, we shall simplify the notation and set ${\mathbb{T}}_{\mathsf{M}}[\lambda]\equiv {\mathbb{T}}_{\mathsf{M},\mathsf{M}}[\lambda]$.
We shall say that an element $P(\lambda)$ of ${\mathbb{T}}_{\mathsf{M}}[\lambda]$ has degree (at most) $\mathsf{M}$.

We shall also be interested in the subset $\hat{\mathbb{T}}_{\mathsf{M}_1,\mathsf{M}_2}[\lambda]$ of ${\mathbb{T}}_{\mathsf{M}_1,\mathsf{M}_2}[\lambda]$ containing all trigonometric polynomials of the form \eqref{def-P-M1-M2} such that $c_0,c_{\mathsf{M}_2}\not=0$.
It is easy to see that the elements of $\hat{\mathbb{T}}_{\mathsf{M}_1,\mathsf{M}_2}[\lambda]$ correspond to the functions of the form
\begin{equation}\label{P-trig}
   P(\lambda)=c_P\, e^{(\mathsf{M}_2-\mathsf{M}_1)\lambda} \prod_{j=0}^{\mathsf{M}_2}\sinh(\lambda-\lambda_j),
\end{equation}
for some complex numbers $c_P$ (the normalization constant) and $\lambda_1,\ldots,\lambda_{\mathsf{M}_2}$ (the roots), with $c_P\not=0$.
In the particular case $\mathsf{M}_2=\mathsf{M}_1=\mathsf{M}$, we shall say that $P(\lambda)$ has degree exactly $\mathsf{M}$ and denote $\hat{\mathbb{T}}_{\mathsf{M}}[\lambda]\equiv\hat{\mathbb{T}}_{\mathsf{M},\mathsf{M}}[\lambda]$.

The properties of trigonometric polynomials can easily be deduced from the properties of usual polynomials.
For instance, it is easy to see that any element $P(\lambda)$ of $\mathbb{T}_{\mathsf{M}_1,\mathsf{M}_2}[\lambda]$ is completely determined by its values $P(\zeta_j)$, $j\in\{0,1,\ldots,\mathsf{M}_2\}$, in $\mathsf{M}_2+1$ arbitrary points $\zeta_0,\zeta_1,\ldots,\zeta_{\mathsf{M}_2}$   (which are pairwise distinct modulo $i\pi$) and can be written as
\begin{equation}
P(\lambda ) 
=\sum_{j=0}^{\mathsf{M}_2}
     e^{(\mathsf{M}_2-\mathsf{M}_1)(\lambda-\zeta_j)}
     \prod_{\substack{ k=0  \\ k\neq j }}^{\mathsf{M}_2}
     \frac{\sinh (\lambda -\zeta_k)}{\sinh (\zeta_j-\zeta_k)} \, P(\zeta_j).
     \label{interpol-P} 
\end{equation}
In particular, an element  $P(\lambda)$ of $\mathbb{T}_{\mathsf{M}_1,\mathsf{M}_2}[\lambda]$ which vanishes at $\mathsf{M}_2+1$ points (which are pairwise distinct modulo $i\pi$) is identically zero.
Conversely, any function of the type
\begin{equation}
F(\lambda ) 
=e^{\mathsf{m}\lambda}
  \sum_{j=0}^{\mathsf{M}_2}
     y_j
     \prod_{\substack{ k=0  \\ k\neq j }}^{\mathsf{M}_2}
     \frac{\sinh (\lambda -\zeta_k)}{\sinh (\zeta_j-\zeta_k)} ,
     \label{interpol-F} 
\end{equation}
for $\mathsf{m}\in\mathbb{Z}$ and $y_0,y_1,\ldots,y_{\mathsf{M}_2}\in\mathbb{C}$, is an element of $\mathbb{T}_{\mathsf{M}_2-\mathsf{m},\mathsf{M}_2}[\lambda]$.


\bibliographystyle{amsplain}
\bibliography{../biblio}

\end{document}